\newif\ifabstract
\newif\iffull
\newcommand{\tw}{\mathrm{tw}}
\par\vspace{4mm}}
\newcounter{definition}
\newenvironment{definition}{\refstepcounter{definition}\par\medskip\noindent%
   \textbf{Definition~\thedefinition.} \rmfamily}{\medskip}
\newcommand{\qed}{\hfill\vbox{\hrule height.2pt\hbox{\vrule width.2pt height5pt \kern5pt
\vrule width.2pt} \hrule height.2pt}}
\newcommand{\tH}{\tilde{H}}
\newcommand{\alphawl}{\ensuremath{\alpha_{\mbox{\tiny{\sc WL}}}}}
\newcommand{\alphaWL}{\alphawl}
\newcommand{\gkrv}{\ensuremath{\gamma_{\mbox{\tiny{\sc CMG}}}}}
\newcommand{\gKRV}{\gkrv}
\newcommand{\gammaKRV}{\gkrv}
\newcommand{\alphaCMG}{\ensuremath{\alpha_{\mbox{\tiny{\sc CMG}}}}}
\newcommand{\alphaKRV}{\alphaCMG}
\newcommand{\ceil}[1]{\ensuremath{\left\lceil#1\right\rceil}}
\newcommand{\NP}{\mbox{\sf NP}\xspace}
\newcommand{\coNP}{\mbox{\sf coNP}\xspace}
\newcommand{\polylog}[1]{\mathrm{polylog(#1)}}
\newcommand{\set}[1]{\left\{ #1 \right\}}
\newcommand{\sse}{\subseteq}
\newcommand{\tset}{{\mathcal T}}
\newcommand{\pset}{{\mathcal{P}}}
\newcommand{\qset}{{\mathcal{Q}}}
\newcommand{\lset}{{\mathcal{L}}}
\newcommand{\bset}{{\mathcal{B}}}
\newcommand{\cset}{{\mathcal{C}}}
\newcommand{\yset}{{\mathcal{Y}}}
\newcommand{\rset}{{\mathcal{R}}}
\newcommand{\hset}{{\mathcal{H}}}
\newcommand{\sset}{{\mathcal{{S}}}}
\newcommand{\zset}{{\mathcal{Z}}}
\newcommand{\nots}{\overline S}
\newcommand{\be}{\begin{enumerate}}
\newcommand{\ee}{\end{enumerate}}
\newcommand{\bd}{\begin{description}}
\newcommand{\ed}{\end{description}}
\newcommand{\bi}{\begin{itemize}}
\newcommand{\ei}{\end{itemize}}
\newtheorem{theorem}{Theorem}[section]
\newtheorem{lemma}[theorem]{Lemma}
\newtheorem{claim}[theorem]{Claim}
\newtheorem{conjecture}[theorem]{Conjecture}
\newtheorem{question}{Question}
\newenvironment{proof}{\smallskip\noindent{\bf Proof:}}{\hfill\stopproof}
\def\stopproof{\square}
\def\square{\vbox{\hrule height.2pt\hbox{\vrule width.2pt height5pt \kern5pt
\vrule width.2pt} \hrule height.2pt}}
\renewcommand{\phi}{\varphi}
\newcommand{\eps}{\epsilon}
\newcommand{\half}{\ensuremath{\frac{1}{2}}}
\newcommand{\poly}{\operatorname{poly}}
\newcommand{\expect}[2][]{\text{\bf E}_{#1}\left [#2\right]}
\newcommand{\prob}[2][]{\text{\bf Pr}_{#1}\left [#2\right]}
\newcommand{\mincut}{\operatorname{MinCut}}
\newcommand{\out}{\operatorname{out}}
\begin{document}

\title{Degree-3 Treewidth Sparsifiers\footnote{An extended abstract 
of this paper is to appear in {\em Proc.\ of ACM-SIAM Symposium on 
Discrete Algorithms (SODA), January 2015.}}}
\author{
Chandra Chekuri\thanks{Dept.\ of Computer Science, University
of Illinois, Urbana, IL 61801. {\tt chekuri@illinois.edu}.
Supported in part by NSF grant CCF-1319376.}
\and 
Julia Chuzhoy\thanks{Toyota Technological Institute, Chicago, IL
60637. Email: {\tt cjulia@ttic.edu}.
Supported in part by NSF grant CCF-1318242.}
}

\date{\today}
\ifabstract
\begin{titlepage}
\maketitle
\thispagestyle{empty}
\fi
\iffull
\maketitle
\fi

\begin{abstract}
  We study treewidth sparsifiers. Informally, given a graph $G$ of
  treewidth $k$, a treewidth sparsifier $H$ is a minor of $G$, whose
  treewidth is close to $k$, $|V(H)|$ is small, and the maximum vertex
  degree in $H$ is bounded. Treewidth sparsifiers of degree $3$ are of
  particular interest, as routing on node-disjoint paths, and
  computing minors seems easier in sub-cubic graphs than in
  general graphs.

  In this paper we describe an algorithm that, given a graph $G$ of
  treewidth $k$, computes a topological minor $H$ of $G$ such that (i)
  the treewidth of $H$ is $\Omega(k/\polylog k)$; (ii) $|V(H)| =
  O(k^4)$; and (iii) the maximum vertex degree in $H$ is $3$.  The
  running time of the algorithm is polynomial in $|V(G)|$ and $k$.
  Our result is in contrast to the known fact that unless $\NP
  \subseteq \coNP/{\sf poly}$, treewidth does not admit
  polynomial-size kernels.
  One of our key technical tools, which is of independent interest, is
  a construction of a small minor that preserves
  node-disjoint routability between two pairs of vertex subsets. This
  is closely related to the open question of computing small
  good-quality vertex-cut sparsifiers that are also minors of
  the original graph.
\end{abstract}

\ifabstract
\end{titlepage}
\fi

\section{Introduction}
Given a large graph $G$, the goal in graph sparsification is to
compute a ``small'' graph $H$ that retains, exactly or approximately,
some key properties of $G$. Two such standard regimes are when
$V(H)=V(G)$ but $H$ is a sparse graph, or when $|V(H)| \ll |V(G)|$.
Sparsifiers for basic properties such as connectivity, distances, cuts
and flows have been extensively studied. For instance, cut sparsifiers
were introduced by Benczur and Karger \cite{BenczurK96}, and were more
recently generalized to spectral sparsifiers \cite{BatsonSST13}, and
to cut and flow sparsifiers for vertex subsets \cite{Moitra,LM}.
Graph sparsifiers are closely related to the
notion of kernelization used in fixed-parameter tractable algorithms,
where an input instance is first reduced to a much smaller instance
(called a kernel), whose size is ideally polynomial in the parameter
$k$, and then the problem is solved on the smaller instance.
Sparsification and sparse representations are also of great importance
for other objects such as signals, matrices, and geometric objects to
name just a few.

We say that a graph $H$ is a \emph{strong} sparsifier for the given
graph $G$, if additionally $H$ is a minor of $G$. Strong sparsifiers
are of particular interest, since they retain some of the structure of
$G$. For example, if $H$ contains some graph $H'$ as a minor, then so
does $G$; a collection $\pset$ of disjoint paths (or cycles) in $H$
immediately translates to a collection of disjoint paths (or cycles)
in $G$, and so on.

In this paper we study sparsifiers for {\em treewidth}, a fundamental
graph parameter with a wide variety of applications in graph theory
and algorithms. The treewidth of a graph $G=(V,E)$ is typically
defined via tree decompositions.  A tree-decomposition of $G$ consists
of a tree $T=(V(T),E(T))$ and a collection of vertex subsets $\{X_v
\subseteq V\}_{v \in V(T)}$ called bags, such that: (i) for each edge
$(a,b) \in E$, there is some node $v \in V(T)$ with both $a,b \in X_v$
and (ii) for each vertex $a \in V$, the set of all nodes of $T$ whose
bags contain $a$ form a non-empty connected subtree of $T$. The {\em
  width} of a given tree decomposition is $\max_{v \in V(T)} |X_v| -
1$, and the treewidth of a graph $G$, denoted by $\tw(G)$, is the
width of a minimum-width tree decomposition for $G$.  Treewidth is
known to be \NP-hard to compute \cite{treewidth-np-hard}. The best
known polynomial-time approximation algorithm, given a graph $G$ of
treewidth $k$, computes a tree decomposition of width $O(k \sqrt{\log
  k})$ \cite{FeigeHL05}. It is also known that treewidth is
fixed-parameter-tractable~\cite{Bodlaender-tw-fpt}: for every fixed
$k$, there is a linear-time algorithm, that, given $G$, decides
whether $\tw(G) \le k$; the dependence of the running time on $k$ is exponential in $\poly(k)$.  There are many important results on the structure
of large-treewidth graphs. Perhaps the most well-known of these is the
Grid-Minor Theorem of Robertson and Seymour that we discuss in more
detail later.

Informally, graph $H$ is a treewidth sparsifier for a given graph $G$,
if $H$ is sparse, $|V(H)|$ is small, and $\tw(H)$ is (approximately)
the same as $\tw(G)$.  For $H$ to be useful as a replacement for $G$,
it needs to be a strong sparsifier --- that is, $H$ should be a minor
of $G$\footnote{Note that if all we wanted is a graph $H$ that has
  similar treewidth as $G$ then it suffices to (approximately) compute
  $\tw(G)$ and let $H$ be any graph from a well-known class such as
  grids, cliques or expanders with the same treewidth.}.  The notion
of treewidth sparsifiers is closely related to the notion of kernels
for treewidth. A polynomial kernel for treewidth is a map $f$, that,
given an instance $(G,k)$, returns an instance $(G',k')$, with the
property that $\tw(G) \le k$ iff $\tw(G') \le k'$, while ensuring that
the size of the graph $G'$ is polynomial in $k$.  Unless $\NP
\subseteq \coNP/{\sf poly}$ there is no polynomial kernel for
treewidth which follows from the results of Bodlaender et
al.~\cite{tw-no-poly-kernel} and
Drucker~\cite{Drucker12}. Super-linear lower bounds for more general
forms of kernelization are also known \cite{Jansen13}.


Our main result shows that if one is willing to settle for a
poly-logarithmic factor approximation in the treewidth, then there
exist sparsifiers with very strong properties. To state our main
result we need a definition. A graph $H$ is a topological minor of $G$
if $H$ is obtained from $G$ by edge and node deletions, and by
suppressing degree-$2$ nodes\footnote{Note that $H$ is a minor of $G$
  if it can be obtained by edge and node deletions and edge
  contractions. A minor $H$ of a graph $G$ need not be a topological
  minor $G$, however, if the maximum vertex degree in $H$ is at most
  $3$, then $H$ is also a topological minor of $G$.}. Equivalently,
$H$ is a topological minor of $G$ iff a subdivision of $H$ is a
subgraph of $G$. Our main result is summarized in the following
theorem.  \iffull
\begin{theorem}\label{thm: main-topological-minor}
  There is a randomized algorithm, that, given a graph $G$ of
  treewidth at least $k$, with high probability 
  computes a topological minor $H$ of $G$, such
  that:

\begin{itemize}
\item the treewidth of $H$ is  $\Omega(k/\poly\log k)$;  
\item the maximum vertex degree in $H$ is $3$; and
\item $|V(H)|=O(k^4)$.
\end{itemize}
The running time of the algorithm is polynomial in $|V(G)|$ and $k$.
\end{theorem}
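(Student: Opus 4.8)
\medskip
\noindent\textbf{Proof plan.} The plan is to trade the treewidth condition for a statement about preserving a well-linked set under passage to a small minor, and then to build that minor by repeatedly invoking, as a compression primitive, a construction of a small minor that preserves node-disjoint routability between two pairs of vertex subsets --- one of our main technical tools. First I would use the standard two-way correspondence between treewidth and well-linked sets: since $\tw(G)\ge k$, one can compute in time polynomial in $|V(G)|$ and $k$ a node-well-linked set $X$ in $G$ with $|X|=\Omega(k/\poly\log k)$ --- every partition of $X$ into two equal halves is joined by $|X|/2$ vertex-disjoint paths --- for instance by iterating approximate balanced vertex separators (the well-linked decomposition); conversely, any graph containing a well-linked set of size $h$ has treewidth $\Omega(h)$. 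Hence it suffices to produce a minor $H$ of $G$ with $|V(H)|=O(k^4)$ and maximum degree $3$ that contains a subset $X'\subseteq X$ of size $\Omega(|X|/\poly\log|X|)$ which remains well-linked in $H$ up to a $\poly\log$ factor: the treewidth bound follows from the converse direction, and, since $H$ has maximum degree $3$, it is automatically a topological minor of $G$ (the paper's footnote). We deliberately avoid passing through a grid minor, since the Grid-Minor Theorem would cost a polynomial, rather than a polylogarithmic, factor in the treewidth.

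\emph{Building the sparsifier.} Well-linkedness of $X'$ involves exponentially many partitions, so the next move is to capture it by polynomially many routing demands: a sampling/concentration argument should show that if a family of $\poly(|X|)$ random balanced partitions $(A_j,B_j)$ of $X$ all remain routable by vertex-disjoint paths in $H$, then with high probability a constant fraction of $X$ is still well-linked in $H$ --- this is where the randomization of the theorem enters. It then remains to build a single minor of $G$, of $\poly(|X|)$ size and maximum degree $3$, realizing all these demands simultaneously. I would do this bottom-up along a balanced binary laminar grouping of the demands: each leaf is handled directly inside $G$, and at each internal node two already-built pieces are merged into one minor of $G$ preserving the node-disjoint routability of both of their interface pairs --- precisely the two-pair situation the compression primitive handles. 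Since each application returns a minor of size polynomial in the combined interface size $\Theta(|X|)$ and the grouping has depth $O(\log|X|)$, the final graph has $O(k^4)$ vertices; routing each merge through fresh degree-$3$ splitters (and subdividing) keeps the maximum degree at $3$, while the well-linkedness parameter of the surviving subset of $X$ degrades by only a $\poly\log$ factor over the $O(\log|X|)$ levels.

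\emph{Where the main difficulty lies.} The crux is the compression primitive itself: producing, from $G$ and two pairs $(A_1,B_1),(A_2,B_2)$, a genuine minor of $G$ of size $\poly(|A_1|+|B_1|+|A_2|+|B_2|)$ in which the vertex-connectivity of each pair is preserved. This is exactly the point flagged as close to the open problem of minor-closed vertex-cut sparsifiers, and the $O(k^4)$ overhead ultimately reflects the fact that the connecting paths are forced to be long. The secondary difficulty is to orchestrate the $\poly(|X|)$ merges so that congestion, the degree-$3$ bound, and well-linkedness of $X'$ are controlled at once, and to make the sampling step tight enough that only $\poly\log$ --- not $\poly$ --- is lost in the final treewidth. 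Given these pieces, the polynomial running time, the $O(k^4)$ size bound, and $\tw(H)=\Omega(k/\poly\log k)$ all follow from the construction.
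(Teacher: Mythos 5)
Your reduction to preserving a well-linked set, and your identification of the two-pair routing primitive (Theorem~\ref{thm: 2-flow-main}) as the key compression tool, both match the paper; but the two steps you build on top of it contain genuine gaps. The central one is the sampling claim: it is not true that routability of $\poly(|X|)$ \emph{random} balanced bipartitions of $X$ certifies that a constant fraction of $X$ is well-linked up to a $\polylog$ factor. Take $H$ to be two expanders, each containing half of the $n$ terminals, joined by a separator of size $\Theta(\sqrt{n\log n})$. For a random balanced partition $(A_j,B_j)$, the imbalance of $A_j$ inside each half is $O(\sqrt{n\log n})$ w.h.p., so all $\poly(n)$ random demands can be routed (most paths stay inside one half, only the imbalance crosses), yet no constant fraction of $X$ is better than $\tilde{O}(1/\sqrt{n})$-well-linked. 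This is exactly why the paper does \emph{not} use oblivious random partitions: it runs the cut-matching game of Khandekar--Rao--Vazirani/Orecchia et al.\ (Theorem~\ref{thm: CMG}), in which the bipartitions are chosen \emph{adaptively} by the cut player as a function of the matchings already routed; only then do $O(\log^2 h)$ rounds yield an embedded expander certifying well-linkedness with polylogarithmic loss. Relatedly, your laminar merging scheme needs each partially built piece to preserve the routability of \emph{all} demands assigned to it, not just two interface pairs, so the two-pair primitive does not suffice as stated --- preserving many pairs simultaneously in a small minor is precisely the open question the paper poses after Theorem~\ref{thm: 2-flow-main}. The paper escapes this by first constructing a path-of-sets system (Theorem~\ref{thm: strong PoS system}), which arranges matters so that inside each cluster $S_i$ only two path systems ever need to be routed (the horizontal paths $(A_i,B_i)$ and one matching $(A_i',A_i'')$ of the game), so the two-pair theorem can be applied cluster by cluster while the interfaces keep the global construction consistent.

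The degree bound is also not handled. ``Routing each merge through fresh degree-$3$ splitters'' is not available to you: $H$ must be a (topological) minor of $G$, so you cannot introduce gadget vertices, and the natural union of two disjoint path systems inside a cluster already forces degree $4$. Getting from degree $4$ to degree $3$ is the main new difficulty the paper addresses: it deletes, at every degree-$4$ vertex, one of the two ``blue'' edges at random, and then must show well-linkedness survives. That argument needs the minimality of each cluster graph $H_i$ together with Theorem~\ref{thm: 2-flow-minor} to prove that a suitably contracted graph has minimum cut $\Omega(\log |V(F)|)$ (Claim~\ref{claim: large degree}), after which Karger-style cut counting plus Chernoff bounds show all cuts are preserved up to a constant factor. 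None of this is present, even in outline, in your proposal, and without it the random deletion step (or any degree-reduction step) has no justification.
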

\fi
\ifabstract
\begin{theorem}\label{thm: main-topological-minor}
  There is a randomized algorithm, that, given a graph $G$ of
  treewidth at least $k$, with high probability 
  computes a topological minor $H$ of $G$, such
  that: (i) the treewidth of $H$ is  $\Omega(k/\poly\log k)$;  (ii)
  the maximum vertex degree in $H$ is $3$; and (iii) $|V(H)|=O(k^4)$.
  The running time of the algorithm is polynomial in $|V(G)|$ and $k$.
\end{theorem}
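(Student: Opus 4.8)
The strategy is to start from a structured certificate of large treewidth inside $G$, compress it to $\poly(k)$ size while losing only polylogarithmic factors in treewidth, and only then reduce the maximum degree to $3$. Concretely, I would first invoke the standard machinery relating treewidth to vertex-disjoint routing: since $\tw(G)\ge k$, the graph $G$ contains (and we can find, in randomized polynomial time) a connected subgraph together with a node-well-linked set $T$ with $|T|=\Omega(k)$. Since a node-well-linked set of size $h$ certifies treewidth $\Omega(h)$, the output graph $H$ will simply be a small, bounded-degree topological minor of $G$ that still contains a well-linked set of size $\Omega(k/\polylog k)$; producing such an $H$ establishes all three conclusions at once. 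Note also that, because $H$ will have maximum degree $3$, being a minor of $G$ is equivalent to being a topological minor of $G$, as remarked in the excerpt, so throughout the argument it suffices to track minors.

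\textbf{Compression --- the main step.}
The core difficulty, and the one genuinely new tool, is shrinking $G$ to $\poly(k)$ vertices while keeping a large set well-linked; I would \emph{not} attempt to build a general vertex-cut sparsifier, which is the open problem referred to in the abstract. Instead, I would exploit the fact that well-linkedness of $T$ can be certified recursively by a bounded number of routing instances at a time: organize $T$ via a recursive balanced bipartition, so that at each node of the recursion one must route a set $A$ to a set $B$ on vertex-disjoint paths inside the current region, and --- since the two children of a recursion node interact only through the region's separator structure --- it suffices to preserve node-disjoint routability for a constant number (two) of terminal pairs simultaneously. The key lemma to establish is therefore: given a graph $G'$ and two pairs $(A_1,B_1),(A_2,B_2)$ of terminal subsets of total size $p$, one can compute in polynomial time a minor $G''$ of $G'$ with $|V(G'')|=\poly(p)$ that preserves the simultaneous node-disjoint routability of the two pairs. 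This is proved by repeatedly locating a balanced vertex separator that is ``irrelevant'' to every required routing, contracting the two sides of the separator appropriately, and recursing; the number of terminals controls both the depth of the recursion and the size blow-up, and this is exactly where the $O(k^4)$ vertex bound and the $\polylog k$ loss in treewidth enter. Applying this lemma throughout the recursive decomposition of $T$ --- equivalently, to each cluster of a path-of-sets (or tree-of-sets) system extracted from $T$ --- yields a minor $H_0$ of $G$ with $|V(H_0)|=O(k^4)$, which moreover can be taken sparse, and which still contains a well-linked set of size $\Omega(k/\polylog k)$.

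\textbf{Degree reduction.}
To bring the maximum degree down to $3$, I would apply the standard vertex-splitting expansion: replace every vertex $v$ of degree $d$ by an arbitrary binary tree on $\Theta(d)$ nodes whose leaves are identified with the edge-endpoints at $v$. This operation produces a topological minor of a subdivision of $H_0$, hence a topological minor of $G$; it does not reduce the node-disjoint routability of any terminal set disjoint from the split vertices, since vertex-disjoint paths through $v$ map to vertex-disjoint paths through its tree; and, because $H_0$ is sparse, it multiplies the vertex count only by a constant. Hence the resulting degree-$3$ graph $H$ still has $O(k^4)$ vertices and retains a well-linked set of size $\Omega(k/\polylog k)$, so $\tw(H)=\Omega(k/\polylog k)$. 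Randomization is used in the treewidth-to-well-linked-set reduction (and possibly in the separator/routing steps), which is the source of the ``with high probability'' guarantee.

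\textbf{Where the difficulty lies.}
The routine parts are the treewidth-to-well-linked-set reduction and the degree-$3$ expansion: both are standard and cost only constant factors. Essentially all the weight of the argument sits in the compression lemma --- constructing, in polynomial time, a genuine \emph{small minor} (not merely a small abstract graph) that faithfully preserves node-disjoint routability between terminal pairs, with size polynomial in the number of terminals and only a $\polylog k$ treewidth loss. The subtle points there are: (i) certifying that a given balanced separator can be safely contracted on each side without destroying any required routing, which needs a careful linkedness-preserving argument; (ii) controlling the recursion so the size does not explode, i.e.\ each level must shrink the instance by a constant factor while only polynomially increasing the terminal count; and (iii) gluing the per-cluster compressions back into a single global structure whose overall well-linkedness, and hence treewidth, degrades by at most a polylogarithmic factor.
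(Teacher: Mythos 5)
Your proposal correctly isolates one of the paper's genuine tools --- a small \emph{minor} preserving node-disjoint routability of two terminal pairs simultaneously (this is exactly the paper's Theorem~\ref{thm: 2-flow-main}, which it proves by taking a minimal good minor and running a red/blue ``chain'' labeling argument to bound the size by $O(k^4)$, not by the separator-contraction recursion you sketch). But the surrounding architecture has two gaps, one of which is fatal.

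First, the degree-reduction step is wrong. Expanding each high-degree vertex of $H_0$ into a binary tree is the \emph{inverse} of contraction: the split graph contains $H_0$ as a minor, not the other way around, and it is in general neither a minor nor a topological minor of $G$. (Concretely, if $G$ is $4$-regular, no vertex-splitting trick can be realized inside $G$; and a vertex-split $K_5$ has $10$ degree-$3$ branch vertices and $15$ edges, so it is not a topological minor of $K_5$, contradicting your claim that splitting yields ``a topological minor of a subdivision of $H_0$.'') Obtaining degree $3$ while remaining a topological minor of $G$ is precisely the heart of the problem. The paper instead builds a degree-$4$ subgraph in which every degree-$4$ vertex carries two ``blue'' (matching) edges, randomly deletes one of the two at each such vertex, and proves that well-linkedness survives via a Karger-style cut-counting plus Chernoff argument; to make that sampling argument work it must embed $N=\Theta(\log k)$ \emph{separate} expanders (so that every relevant cut has $\Omega(\log n)$ edges after contracting red-path segments), which is why the path-of-sets system has width $\Theta(\log^3 k)$ rather than $\Theta(\log^2 k)$.

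Second, your compression scheme does not certify treewidth of the compressed graph. Routing a \emph{fixed} recursive balanced bipartition of $T$, two pairs per region, is not a certificate of well-linkedness: well-linkedness quantifies over all balanced partitions, and an adversarial graph can satisfy a prescribed laminar family of routings while the union of those routings still has a sparse cut. This is exactly why the paper uses the adaptive cut-matching game of Khandekar--Rao--Vazirani: the cut player's bipartitions depend on the matchings routed so far, and after $\gkrv(h)=O(\log^2 h)$ rounds the union of the matchings is an expander, which (embedded with congestion $2$ into the union of the per-cluster routings, each compressed to $O(h^4)$ vertices by the two-pair theorem) certifies treewidth $\Omega(k/\polylog k)$. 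Without an adaptive mechanism of this kind, your claim that the glued-together per-region compressions retain a well-linked set of size $\Omega(k/\polylog k)$ is unsupported, and I do not see how to repair it while keeping only two routings per region fixed in advance.
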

\fi

Our result is close to optimal: degree $3$ cannot be reduced, and the
best one can hope for in terms of the size of the sparsifier is
$O(k^2/\poly\log k)$ (when $G$ is a $k \times k$ grid). We also recall that the
best currently known polynomial-time approximation algorithm can only certify
treewidth to within an $O(\sqrt{\log k})$-factor. We conjecture a
strengthening of the theorem to almost optimal parameters.

\begin{conjecture}
  For every graph $G$ with treewidth at least $k$, 
  there \emph{exists} a topological minor $H$ of $G$ 
  such that $\tw(H) = \Omega(k/\poly\log k)$, $|V(H)|=O(k^2)$ and maximum
 vertex degree in $H$ is $3$.
\end{conjecture}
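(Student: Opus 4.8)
Since the final statement is a conjecture, what follows is a plausible route toward it rather than a proof, and it keeps the high‑level strategy underlying Theorem~\ref{thm: main-topological-minor}: certify treewidth through a well‑linked set and then realise a small subcubic host graph as a topological minor of $G$, but replacing the step that the theorem pays $O(k^4)$ for by one that pays $O(k^2)$. \textbf{Step 1: reduction to a compression problem.} Using an approximate balanced‑separator routine (equivalently, the $O(\sqrt{\log k})$‑approximation for treewidth), extract in polynomial time a vertex set $T\subseteq V(G)$ with $|T|=\Omega(k/\polylog k)$ that is well‑linked in the vertex sense; such a set already certifies $\tw(G)=\Omega(|T|)=\Omega(k/\polylog k)$. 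It therefore suffices to produce a graph $H$ that (i) is a topological minor of $G$; (ii) has maximum degree $3$; (iii) has $|V(H)|=O(k^2)$; and (iv) contains a well‑linked set of size $\Omega(k/\polylog k)$, so that $\tw(H)=\Omega(k/\polylog k)$. Because $H$ has maximum degree $3$, ``minor'' and ``topological minor'' coincide for $H$ inside $G$ (as observed in the paper), so it is enough to find in $G$ disjoint connected branch sets joined by internally disjoint paths whose pattern is a suitable small subcubic graph of large treewidth and whose total size is $O(k^2)$.

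\textbf{Step 2: the grid‑like regime.} If $G$ admits a wall of side $s=\Omega(k/\polylog k)$ as a topological minor --- equivalently, a grid minor of that side, which it then turns into a subcubic wall --- take $H$ to be that wall: it is subcubic, has $\Theta(s^2)=O(k^2)$ vertices, and has treewidth $\Theta(s)=\Omega(k/\polylog k)$. This case is exactly the source of the $k^2$ barrier: the $k\times k$ grid is planar, every topological minor of a planar graph is planar, and by the planar Grid‑Minor Theorem a planar subcubic graph of treewidth $w$ must have $\Omega(w^2)$ vertices, so when $G$ is a grid one cannot hope for fewer than $\Omega(k^2/\polylog k)$ vertices.

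\textbf{Step 3: the general case, and the main obstacle.} When $G$ has no large wall minor, $H$ must be some other subcubic graph of treewidth $\Omega(k/\polylog k)$ realised as a minor of $G$ using $T$; in principle $H$ could even have only $\tilde O(k)$ vertices (e.g.\ a subgraph of a bounded‑degree expander), though whether such a minor always exists is itself unclear, which is precisely why the slack $O(k^2)$ is the natural target. The compression step is where essentially all the difficulty lies: one wants a single, one‑shot procedure that, given $T$ well‑linked with $|T|=\Theta(k/\polylog k)$, outputs a minor of $G$ on $O(k^2)$ vertices preserving the node‑disjoint routability structure of $T$. Using the paper's compression of routability between two pairs of subsets as a black box appears to force a product of two quadratic terms (hence $O(k^4)$), and a naive recursion that sparsifies $\Theta(k)$ well‑linked pieces --- say the clusters of a path‑of‑sets system extracted from $T$ --- independently, each to $O(k^2)$ vertices, still gives only $O(k^3)$; the improvement must sparsify the whole configuration at once so that the per‑piece budget is amortised. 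This is exactly the open problem flagged in the abstract: a quadratic‑size vertex‑cut/vertex‑flow sparsifier that is also a minor of the input graph, with a polylogarithmic approximation guarantee. Granting such a sparsifier, Steps~1 and~2 together with a routine degree cleanup (suppressing degree‑$2$ vertices and, if needed, splitting the few remaining higher‑degree vertices along their well‑linked routing at a polylogarithmic cost in treewidth) would yield the conjecture.
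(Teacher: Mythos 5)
The statement you are addressing is stated in the paper as a \emph{conjecture}: the authors do not prove it, and their main theorem only achieves $|V(H)|=O(k^4)$. Your proposal is candid that it is ``a plausible route rather than a proof,'' and indeed it is not a proof: the decisive step --- compressing the whole well-linked configuration at once into a minor on $O(k^2)$ vertices --- is exactly the open problem the paper itself raises (the question about a small minor preserving routability of $h$ pairs, and the related strong vertex-cut sparsifier question), and you grant it as a black box without any construction or even a candidate argument. Reducing an open conjecture to a different, at least as hard, open problem leaves the gap where it was; nothing in Steps 1--3 narrows it. Your observation that the paper's two-pair compression (Theorem~\ref{thm: 2-flow-main}) inherently costs $k^4$, and that per-cluster sparsification of a path-of-sets system only amortizes to $k^3$, is a reasonable diagnosis of why the known tools stop short, but a diagnosis is not a route to the bound $O(k^2)$.

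A second, more concrete gap is the final sentence's ``routine degree cleanup.'' In the paper, passing from maximum degree $4$ to maximum degree $3$ while losing only polylogarithmic factors in treewidth is not routine at all: it occupies Section~\ref{sec: degree-3} and requires embedding $N=\Theta(\log k)$ expanders rather than one, randomly subsampling the blue edges, partitioning the red paths into heavy segments so that the contracted graph has minimum cut $\Omega(\log|V(F)|)$, and a Karger-style cut-counting argument to show well-linkedness survives the sampling. Splitting high-degree vertices ``along their well-linked routing'' can easily destroy well-linkedness if the local routing is not globally coordinated, which is precisely why the paper needs the minimality property from Theorem~\ref{thm: 2-flow-minor} together with the minimum-cut lower bound (Claim~\ref{claim: large degree}) before sampling. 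So even granting your hypothetical quadratic sparsifier, the degree-reduction step would still need an argument of comparable substance to the paper's, not a one-line cleanup. In short: the conjecture remains open, and your proposal correctly identifies, but does not overcome, the two obstacles (one-shot quadratic compression, and degree reduction without losing well-linkedness).
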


The existence of sparsifiers of size $\poly(k)$ that preserve the
treewidth to within a constant factor remains a very interesting open
question.

\subsection{Treewidth Sparsifiers and Grid Minors}
A fundamental result in Graph Minor Theory is the Grid-Minor Theorem of
Robertson and Seymour \cite{RS-grid}. The theorem states that there is an
integer-valued function $f$, such that any graph $G$ with treewidth at
least $f(g)$ contains a $g \times g$ grid as a minor.  The theorem is
equivalent to showing that $\tw(G) \ge f(g)$ implies that $G$ contains
a {\em wall} of height and width $\Theta(g)$ as a subgraph; see
Figure~\ref{fig:wall}.

\begin{figure}[h]
  \centering
  \includegraphics[height=1in]{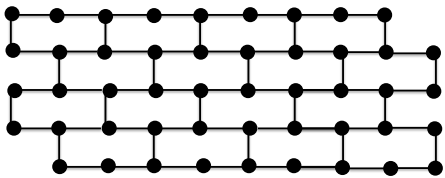}
  \caption{An elementary wall of height and width $5$. A wall is a
    subdivision of an elementary wall.\label{fig:wall}}
\end{figure}

We observe that a wall has maximum vertex degree $3$. Thus, one way to
obtain a degree-$3$ treewidth sparisfier is via the Grid-Minor
Theorem. The original proof of Robertson and Seymour \cite{RS-grid}
showed the existence of $f$ with an iterated exponential dependence on
$g$. Very recently, the first polynomial bound on $f$ was shown in
\cite{CC13-grid}: namely, every graph of treewidth $k$ contains a wall
of size $k^{\delta}$ as a topological minor, where $\delta = 1/98 -
o(1)$. This result implies a degree-$3$ treewidth sparsifier, whose
treewidth is $k^{1/98 - o(1)}$.  In contrast, the sparsifier from
Theorem~\ref{thm: main-topological-minor} has treewidth
$\Omega(k/\polylog k)$.  Moreover, there are graphs with treewidth
$k$, such that the size of the largest wall they contain is
$O(\sqrt{k/\log k})$ \cite{RobertsonST94}. Therefore, one cannot hope
to obtain small sparsifiers that preserve treewidth to within
polylogarithmic factors via the Grid-Minor Theorem. Our construction
bypasses this limitation.

One of our motivations for studying treewidth sparsifiers is improving
the bounds for the Grid-Minor Theorem.  Theorem~\ref{thm:
  main-topological-minor} allows us to focus on subcubic graphs with
the additional property that $|V(G)|$ is polynomial in
$\tw(G)$. Degree-$3$ sparsifiers have particular advantages: in such a
graph, for several applications of interest, one can replace
node-disjoint routing with the easier edge-disjoint routing.
We anticipate that using Theorem~\ref{thm: main-topological-minor} as
a starting point, the bounds on the Grid-Minor Theorem from
\cite{CC13-grid} can be improved.  We also mention
that the fact that $|V(H)|=\poly(k)$ simplifies some
technical parts in the current proof of~\cite{CC13-grid}.

A related application is to the notion of graph immersions (see
\cite{GraphMinors23,Wollan13}).  A graph $G$ admits a strong immersion
of a graph $H$ iff there is an injective mapping $\tau:V(H)
\rightarrow V(G)$ and a mapping $\pi:E(H) \rightarrow \mathcal{P}_G$,
where $\mathcal{P}_G$ is a set of paths in $G$, such that (i) for each
$f=(a,b) \in E(H)$ the path $\pi(f)$ connects $\tau(a)$ and $\tau(b)$;
(ii) for any two edges $f,f' \in E(H)$ the paths $\pi(f)$ and
$\pi(f')$ are edge-disjoint; and (iii) for every $f \in E(H)$ the path
$\pi(f)$ intersects $\tau(V(H))$ only at its endpoints.  Note that $G$
admits $H$ as a topological minor if additionally the paths $\pi(f)$
and $\pi(f')$ are internally {\em node}-disjoint for any
distinct pair $f,f'\in E(H)$. If $G$ is a sub-cubic graph, then
$G$ contains $H$ as a topological minor iff $G$ contains $H$ as a
strong immersion. Therefore, $G$ contains a wall $W$ iff it contains
it as an immersion.  In recent work, Wollan~\cite{Wollan13} defined
the notion of tree-cut width of a graph and showed, using the
Grid-Minor Theorem, that there is a function $g$, such that every
graph with tree-cut width at least $g(r)$ admits an $r$-wall as a weak
immersion. Motivated by this connection, he raised the question of the
existence of degree-$3$ treewidth sparsifiers.  Theorem~\ref{thm:
  main-topological-minor} answers his question (Question 18 in
\cite{Wollan13}) in a near-optimal fashion and we refer the reader to
\cite{Wollan13} for the quantitative and qualitative implications 
to immersions.

Our result can be viewed as providing an approximate kernel for
treewidth, and we hope that it will find applications in preprocessing
graphs for fixed-parameter tractable (FPT) algorithms, and in constructive aspects of
Erdos-P\"{o}sa type theorems.

We now briefly discuss our techniques. We use a combinatorial object,
called a path-of-sets system, that was defined in~\cite{CC13-grid}
(see also Figure~\ref{fig:pos}). Using the construction of the
Path-of-Sets system from~\cite{CC13-grid}, together with the
Cut-Matching Game of Khandekar, Rao and Vazirani~\cite{KRV}, we can
immediately obtain a strong degree-$4$ treewidth sparsifier $H$, with
$\tw(H) = \Omega(k/\polylog k)$. However, the size of $V(H)$ can be 
arbitrarily large. Our main technical contribution
is two-fold. First, we lower the degree of the sparsifier to $3$, by
carefully sub-sampling the edges of $H$. Second, we reduce the size of
the sparsifier to $\poly(k)$. For the second part, we crucially need a
new technical ingredient, that is related to strong vertex-cut
sparsifiers, that we discuss
below. 


\subsection{Sparsifiers Preserving Vertex Cuts}
Suppose we are given any graph $G=(V,E)$ and a pair $S,T\subseteq V$
of vertex subsets, containing $k$ vertices each. We say that the pair
$(S,T)$ is \emph{routable} in $G$ iff there is a set $\pset$ of $k$
disjoint paths connecting the vertices of $S$ to the vertices of $T$
in $G$, and we say that the set $\pset$ of paths \emph{routes} the
pair $(S,T)$. Assume now that we have two pairs of vertex subsets:
$S_1,T_1$, containing $k_1$ vertices each, and $S_2,T_2$ containing
$k_2$ vertices each. We say that both pairs $(S_1,T_1)$, $(S_2,T_2)$
are \emph{separately routable}, or just routable, in $G$ iff there is
a set $\pset$ of paths routing $(S_1,T_1)$, and there is a set $\qset$
paths routing $(S_2,T_2)$ in $G$. Note that a vertex of $G$ may belong
to a path in $\pset$ and a path in 
$\qset$. Our second main result is summarized in the following theorem.

\begin{theorem}\label{thm: 2-flow-main}
  Assume that we are given a graph $G$, two sets $S_1,T_1\subseteq
  V(G)$ of $k_1$ vertices each, and two sets $S_2,T_2\subseteq V(G)$
  of $k_2$ vertices each, such that $k_1\geq k_2$, and the pairs
  $(S_1,T_1)$ and $(S_2,T_2)$ are (separately) routable in $G$. Then
  there are two sets $\pset,\qset$ of paths routing $(S_1,T_1)$ and
  $(S_2,T_2)$ respectively, such that, if $H$ is the graph obtained by
  the union of the paths in $\pset$ and $\qset$, then $\tau(H)\leq
  8k_1^4+ 8k_1$, where $\tau(H)$ is the number of nodes of degree more
  than two in $H$. Moreover, we can find $\pset$ and $\qset$ in time
  polynomial in $n$ and $k_1$.
\end{theorem}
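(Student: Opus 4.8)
The plan is to keep the routing $\pset=\{P_1,\dots,P_{k_1}\}$ essentially arbitrary and to build $\qset$ so that $\bigcup_{Q\in\qset}Q$ runs inside $\bigcup_{P\in\pset}P$ along only a few ``segments.'' This suffices because every vertex of degree $\ge 3$ in $H$ lies on some $P_i$ and on some $Q_j$, and if such a vertex is interior to its $P_i$ and also interior to a maximal sub-path of $\bigcup_{Q\in\qset}Q$ contained in $P_i$, then $P_i$ and that sub-path coincide there, so the vertex has degree exactly $2$. Hence, writing $\mu(P_i)$ for the number of connected components of $P_i\cap\bigl(\bigcup_{Q\in\qset}Q\bigr)$, a short local analysis at each vertex gives
\[
\tau(H)\ \le\ 2\sum_{i}\mu(P_i)\ +\ 2(k_1+k_2),
\]
the last term accounting for the at most $2(k_1+k_2)$ path endpoints. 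So it is enough to construct routings with $\sum_i\mu(P_i)=O(k_1^4)$.

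\smallskip
\noindent\emph{The generic case (contraction and lifting).}
Contract each $P_i$ to a single vertex $p_i$, obtaining $G'=G/\pset$, and suppose $(S_2,T_2)$ is still routable in $G'$. Take $k_2$ pairwise-disjoint $(S_2,T_2)$-paths in $G'$ and lift them: whenever such a path visits $p_i$ (at most once, since the paths are disjoint) replace that visit by the unique sub-path of $P_i$ joining the two incident edges. Distinct lifted paths use distinct $P_i$'s, so they stay pairwise disjoint, and each $P_i$ is met in at most one segment; hence $\sum_i\mu(P_i)\le k_1$ and $\tau(H)=O(k_1)$. It is crucial here to contract \emph{all} of $\pset$ at once: one cannot reroute the $Q_j$'s one at a time, because a $P_i$ may already be occupied by other $Q$-paths, and using it freely for a single $Q_j$ would fragment it and destroy disjointness within $\qset$.

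\smallskip
\noindent\emph{The obstructed case (Menger, recursively).}
If $(S_2,T_2)$ is not routable in $G'$, Menger's theorem gives a separator $X$ with $|X|<k_2\le k_1$; write $X=\{p_i:i\in I\}\cup Y$ with $Y\subseteq V(G)$. Lifting $X$ back shows $\bigl(\bigcup_{i\in I}V(P_i)\bigr)\cup Y$ separates $S_2$ from $T_2$ in $G$, and since $|Y|<k_2$ while $(S_2,T_2)$ \emph{is} routable in $G$, necessarily $I\ne\emptyset$; thus the obstruction is confined to fewer than $k_2$ ``problem paths'' $\{P_i:i\in I\}$ together with fewer than $k_2$ ``problem vertices.'' I would then recurse: contract only the non-problem paths; route $\qset$ in this less-contracted graph (getting one segment per contracted $P_i$, or, if still obstructed, release more paths and repeat); and bound separately, by a nested application of the same dichotomy, the number of segments along which $\qset$ is forced to weave inside each released problem path. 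Each step releases between $1$ and $k_2-1$ new problem paths, so the recursion has depth $O(k_1)$ and releases $O(k_1)$ paths overall; each released problem path contributes only a polynomial number of further segments, and tracking the exponents gives $\sum_i\mu(P_i)=O(k_1^4)$ and, with the constants, $\tau(H)\le 8k_1^4+8k_1$. Since each contraction, each call to Menger, and the recursion are all polynomial, $\pset$ and $\qset$ are found in time polynomial in $n$ and $k_1$.

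\smallskip
\noindent\emph{Main obstacle.}
The heart of the argument is the obstructed case, where the paths $P_i$ are fragmented by the very paths of $\qset$ we are trying to control. A naive loop --- release the current problem paths, re-optimize $\qset$, discover fresh problem paths, repeat --- risks a super-polynomial cascade, so one needs either a global monovariant that strictly decreases at every rerouting step, or a careful simultaneous treatment of all problem paths at a given recursion level, in order to keep $\sum_i\mu(P_i)$ polynomially bounded. A minor point, affecting only the low-order term, is that terminals of one pair may lie on paths of the other; this is harmless but must be folded into the ``$+\,O(k_1)$'' accounting to land the explicit bound $8k_1^4+8k_1$.
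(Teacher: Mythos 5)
Your proposal has a genuine gap at exactly the point where the theorem is hard. The ``generic case'' (contract all of $\mathcal{P}$, route $(S_2,T_2)$ in the contracted graph, lift) is fine but easy; the whole content of the theorem is the ``obstructed case,'' and there you give no actual argument. The recursion you sketch --- release the problem paths identified by a Menger separator, reroute $\mathcal{Q}$, repeat --- is precisely the ``naive loop'' you yourself flag in your last paragraph: once a path $P_i$ is released, the $\mathcal{Q}$-paths may weave through it an unbounded number of times, and nothing in the proposal bounds this; the assertions that ``each released problem path contributes only a polynomial number of further segments'' and that ``tracking the exponents gives $\sum_i\mu(P_i)=O(k_1^4)$'' are exactly the statements that need proof, and no monovariant or simultaneous treatment is supplied. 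A secondary but real concern is that you fix $\mathcal{P}$ arbitrarily and only re-route $\mathcal{Q}$; this is a strictly stronger claim than the theorem (which only asserts existence of \emph{some} pair $\mathcal{P},\mathcal{Q}$), it is not established anywhere in your argument that a good $\mathcal{Q}$ exists for every fixed $\mathcal{P}$ once $k_2\ge 2$, and the freedom to re-choose both families is what the actual proof exploits. (Minor additional issues: contracting a $P_i$ that carries several vertices of $S_2\cup T_2$ merges terminals, so routability of $(S_2,T_2)$ in $G/\mathcal{P}$ needs care; and the explicit constants $8k_1^4+8k_1$ are asserted, not derived.)

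For comparison, the paper proceeds quite differently: it reduces to degree-one, equal-size terminal sets and takes any $(S_1,T_1,S_2,T_2)$-\emph{minimal} minor $H$ (minimal under both edge deletion and edge contraction while preserving the two routings). Minimality forces strong structure: every edge of $H$ lies on a red path or a blue path but not both, every non-terminal vertex lies on exactly one red and one blue path, and the red and blue routings in $H$ are unique. One then shows $\tilde H$ has no ``red'' or ``blue'' cycles, decomposes $H$ into $2k$ alternating red--blue chains which are simple and order-consistent with every red and blue path, and uses these chains to define a labeling with $2k$ labels; applying the labeling twice (once with the red orientation reversed) and pigeonholing on quadruples $(R,B,\ell,\ell')$ gives $|V(H)|\le 4k^4+4k$, which translates back to $\tau(H)\le 8k_1^4+8k_1$ in $G$. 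Some device of this kind --- a global order-consistency/monovariant argument on the \emph{pair} of routings --- is what your plan is missing, and without it the obstructed case does not go through.
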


The preceding theorem gives an upper bound on the size of a
topological minor of $G$ that preserves the vertex connectivity
between $S_1,T_1$ and $S_2,T_2$. There are results in the literature
on reduction operations that preserve edge
connectivity~\cite{Lovasz-splitting-off,edge-connectivity} (and also
element connectivity \cite{element-connectivity,ChekuriK09}), however
no such nice operations are available for preserving vertex
connectivity. We briefly discuss some related work on cut sparsifiers
and an open problem on a generalization of Theorem~\ref{thm:
  2-flow-main} that would yield strong sparsifiers that preserve
vertex cuts.

There has been a large amount of work in the recent past on graph
sparsifiers that preserve cuts and flows for subsets of
vertices~\cite{Moitra,LM,CLLM,MM,EGK,vsparsifiers}. We discuss some
closely related work. Given an edge-capacitated graph $G$ and a
terminal set $\tset \subseteq V(G)$, a graph $H$ is a quality-$q$
cut-sparsifier for $\tset$ if (i) $\tset \subseteq V(H)$ and (ii) for
any partition $(A,B)$ of $\tset$, $\mincut_G(A,B) \le \mincut_H(A,B)
\le q \mincut_G(A,B)$ where $\mincut_F(A,B)$ is the minimum edge-cut
separating $A$ from $B$ in a graph $F$. 
 Quality-$1$ sparsifiers
have also been called {\em mimicking networks}  in
prior work~\cite{mimicking0,mimicking1,mimicking2,mimicking3}. 
Leighton and Moitra~\cite{LM} have shown that for any graph $G$, there is a quality-$q$ sparsifier $H$ for $G$ with $q = O(\log k/\log \log k)$ and
$V(H) = \tset$ (here $k = |\tset|$); in other words the sparsifier
does not use any non-terminal (or Steiner) vertices.  There are
instances on which the best quality one can achieve is
$\Omega(\sqrt{\log k})$ if $H$ does not have Steiner
vertices~ \cite{MM}. Even a relatively small number of Steiner vertices can help substantially in improving the
quality of the sparsifier as shown in \cite{vsparsifiers}.  

To simplify the discussion, we restrict our attention to the case where
the terminals in $\tset$ have degree $1$ and all edge capacities are $1$. In
this case constant quality cut-sparsifiers are known with $V(H) =
O(k^3)$ \cite{vsparsifiers,KratschW12}.  The result of Kratsch and
Wahlstr\"{o}m \cite{KratschW12}, in fact, applies in the more general
setting of vertex-cuts, and yields a quality-$1$
sparsifier; we call such a sparsifier a vertex-cut sparsifier to
distinguish it from an edge-cut sparsifier.

However, the sparsifer of \cite{KratschW12} is {\em not} a minor
of the original graph $G$. Sparsifiers that are minors of the original
graph have an advantage that they allow flows (fractional or integral) and minors
in the sparsifier to be transferred back to the original graph $G$ without
any loss. Theorem~\ref{thm: 2-flow-main} gives us a
small-sized minor that preserves the vertex connectivity between two pairs of vertex
subsets. A natural open question is to generalize this result to a larger number of pairs of vertex subsets.

\begin{question}
  Assume that we are given a graph $G$, and $h$ pairs
  of vertex subsets $(S_1,T_1), \ldots, (S_h,T_h)$, such that for
  each $i$: (1) $S_i,T_i \subseteq  V(G)$, (2)
  $|S_i|=|T_i| = k_i \le k$, and (3) $(S_i,T_i)$ are routable in $G$.
  What is the smallest function $f(k,h)$, such that, given any graph $G$ and $(S_1,T_1), \ldots, (S_h,T_h)$ as above, there is always a
  (topological) minor $H$ of $G$ with the property that 
  each $(S_i,T_i)$ is routable in $H$ and $|V(H)| \le f(k,h)$? 
\end{question}

The case when $h = \polylog k$ is of particular interest.  We believe
that a bound on $f(k,h)$ from the preceding question can be
used to obtain a vertex-cut sparsifier $H$ for any graph $G$ and a set $\tset$ of $k$ terminals,
such that $H$ is a minor of $G$, $|V(H)|\leq f(\poly k, h)$ for $h=\poly\log k$, and
the quality of $H$ is $\polylog k$.

\paragraph{Organization}
We prove Theorem~\ref{thm: 2-flow-main} in Section~\ref{sec: 2-flow}.
Section~\ref{sec: PoS system} provides the necessary
background on treewidth and the path-of-sets system.
Theorem~\ref{thm: main-topological-minor} is proved in two steps.
Section~\ref{sec: top minor} gives the proof of a weaker result,
a degree-$4$ sparsifier. 
Section~\ref{sec: degree-3} gives the proof for the degree-$3$ sparsifier.
\ifabstract
Due to space constraints, several proofs have been moved to the Appendix.

\fi

\label{---------------------------------------------sec: 2-flow-----------------------------------------}
\section{Routing Two Pairs of Vertex Subsets}\label{sec: 2-flow}
In this section we prove Theorem~\ref{thm: 2-flow-main}.  Recall that
a graph $H$ is a \emph{minor} of a graph $G$, iff $H$ can be obtained
from $G$ by a series of edge deletion, vertex deletion, and edge
contraction operations. Equivalently, $H$ is a minor of $G$ iff there
is a map $f:V(H)\rightarrow 2^{V(G)}$ assigning to each vertex $v\in
V(H)$ a subset $f(v)$ of vertices of $G$, such that: (a) for each
$v\in V(H)$, the sub-graph of $G$ induced by $f(v)$ is connected; (b)
if $u,v\in V(H)$ and $u\neq v$, then $f(u)\cap f(v)=\emptyset$; and
(c) for each edge $e=(u,v)\in E(H)$, there is an edge in $E(G)$ with
one endpoint in $f(v)$ and the other endpoint in $f(u)$.  A map $f$
satisfying these conditions is called \emph{a model of $H$ in
  $G$}. Given any subset $X\subseteq V$ of vertices of $G$, we say
that $H$ is an $X$-respecting minor of $G$, iff $X\subseteq
V(H)$. More formally, there is a model $f$ of $H$, where for each
vertex $x\in X$, there is a distinct vertex $v_x\in V(H)$ with
$f(v_x)=\set{x}$. For each $x\in X$, we will usually identify such
vertex $v_x$ with $x$. In particular, every subset $S\subseteq X$ of
vertices of $X$ corresponds to a subset $S'=\set{v_x\mid x\in X}$ of
vertices in $H$, and we will not distinguish between $S$ and $S'$.

Assume that we are given a graph $G$ and two pairs $(S_1',T_1')$,
$(S_2',T_2')$ of vertex subsets, with $|S_1'|=|T_1'|$ and
$|S_2'|=|T_2'|$, that are separately routable in $G$.  We say that a
minor $H$ of $G$ is $(S_1',T_1',S_2',T_2')$-good, iff $H$ is an
$X$-respecting minor for $X=S_1'\cup S_2'\cup T_1'\cup T_2'$, and
$(S_1',T_1')$, $(S_2',T_2')$ are each routable in $H$. We say that it
is $(S_1',T_1',S_2',T_2')$-minimal, iff it is
$(S_1',T_1',S_2',T_2')$-good, and for every edge $e$ of $H$, both the graph
obtained from $H$ by deleting $e$, and the graph obtained from $H$ by contracting $e$, are not
$(S_1',T_1',S_2',T_2')$-good. The main result of this section is the
following theorem.

\begin{theorem}\label{thm: 2-flow-minor}
  Assume that we are given a graph $G$, and sets $S_1',T_1',
  S_2',T_2'\subseteq V(G)$ of $k$ vertices each, such that the pairs
  $(S_1',T_1')$ and $(S_2',T_2')$ are (separately) routable in
  $G$. Assume further that vertices in $S_1',T_1',S_2',T_2'$ are
  distinct, and have degree $1$ each in $G$. Let $H$ be any
  $(S_1',T_1',S_2',T_2')$-minimal minor of $G$. Then $|V(H)|\leq
  4k^4+4k$.
\end{theorem}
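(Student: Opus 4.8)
The plan is to fix an $(S_1',T_1',S_2',T_2')$-minimal minor $H$ and show that almost every vertex of $H$ has degree $2$, then bound the number of high-degree vertices separately. First, since $H$ is good, fix in $H$ a set $\pset$ of $k$ disjoint paths routing $(S_1',T_1')$ and a set $\qset$ of $k$ disjoint paths routing $(S_2',T_2')$. By minimality under edge deletion, every edge of $H$ must lie on some path of $\pset\cup\qset$ — otherwise deleting it keeps both routings intact, contradicting minimality. Hence $E(H)=\bigcup_{P\in\pset}E(P)\ \cup\ \bigcup_{Q\in\qset}E(Q)$, and consequently every vertex of $H$ has degree at most $4$ (at most $2$ from $\pset$-paths and at most $2$ from $\qset$-paths, since the paths within $\pset$ are disjoint and likewise within $\qset$). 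This already gives the structure: $H$ is the union of two linear forests.

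The heart of the argument is to control the vertices of degree $3$ and $4$, i.e.\ the set counted by $\tau(H)$, using minimality under edge contraction. A vertex $v$ has degree $\ge 3$ only if it is an internal vertex of some path $P_i\in\pset$ and also an internal (or end) vertex of some path $Q_j\in\qset$; call such a $v$ a \emph{branch vertex} and say it has \emph{type} $(i,j)$. For a fixed pair $(i,j)$, consider the branch vertices of type $(i,j)$ in the order they appear along $P_i$. The claim I expect to need is that there are only $O(k^2)$ branch vertices of each type, or something like $O(1)$ per type after a more careful charging — since $k^4 = (k^2)\cdot(k^2)$ and there are $k^2$ types, the target bound $4k^4$ strongly suggests that each of the $k^2$ types contributes $O(k^2)$ branch vertices, and then summing over types and adding the $O(k)$ terminals and path-endpoints gives the bound. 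To prove the per-type bound, I would argue by contradiction: if two branch vertices $u,v$ of the \emph{same} type $(i,j)$ appear consecutively along $P_i$ with no other branch vertex between them, and also appear in a "compatible" order along $Q_j$, then the segment of $P_i$ between $u$ and $v$ can be rerouted or an edge on it contracted while repairing both routings $\pset$ and $\qset$ by locally swapping path segments — contradicting contraction-minimality. Making this rerouting precise (matching up which segments of $P_i$ and $Q_j$ can be exchanged so that after contracting an edge one still has $k$ disjoint paths for each pair) is the crux.

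The main obstacle I anticipate is exactly this rerouting/exchange argument: contracting an edge $e$ on some $P_i$ automatically preserves the $\pset$-routing (contraction can only shorten paths and identify endpoints), so the real content is showing that whenever there are "too many" branch vertices, one can contract some edge \emph{without destroying the $\qset$-routing} — equivalently, find an edge whose contraction can be absorbed by rerouting $\qset$ along the newly identified structure. I would handle this by a counting argument on the interleaving pattern of $\pset$ and $\qset$ along a fixed pair $(P_i,Q_j)$: encode how $Q_j$ crosses $P_i$ as a sequence over the $k$ indices of $\pset$-paths (and vice versa), and show that a sufficiently long such sequence must contain a repeated pattern that witnesses a contractible edge. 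The terms $4k^4$ (branch vertices, $k^2$ types $\times$ $O(k^2)$ each) and $4k$ (the $4k$ terminal vertices, which have degree $1$ and are never contracted or deleted by minimality since that would destroy a routing) then combine to the stated bound; the factor $4$ is bookkeeping over the two path-families and the two endpoints of each exchange segment. I would also double-check the base cases: vertices of degree $\le 2$ that are not terminals are suppressed in the topological-minor language but are still counted in $|V(H)|$ here, so the argument really does need every non-branch, non-terminal vertex to be "squeezed out" by contraction-minimality except where two different branch vertices or a branch vertex and a terminal pin it in place — which is again why the bound is governed by the number of branch vertices and terminals.
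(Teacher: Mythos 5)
Your structural setup is sound and matches the paper's starting point: by deletion-minimality every edge lies on a path of $\pset\cup\qset$, by contraction-minimality every non-terminal vertex lies on one path of each family (and in fact no edge can be on both a red and a blue path, else it could be contracted), so $H$ is a union of two linear forests with maximum degree $4$, the $4k$ terminals are accounted for separately, and the target $4k^4$ indeed comes from $k^2$ pairs $(P_i,Q_j)$ times an $O(k^2)$ per-pair bound. But the per-pair bound is exactly where your proposal stops: the claim that ``too many'' branch vertices of a fixed type $(i,j)$ force an edge whose contraction can be absorbed by rerouting $\qset$ is asserted, not proved, and you yourself flag the rerouting/exchange step and the ``repeated interleaving pattern witnesses a contractible edge'' step as the crux you would still need to handle. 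That is not a bookkeeping detail; it is the entire content of the theorem, and it is far from clear that a local exchange between consecutive same-type branch vertices can be made to work, since contracting a red edge merges two blue paths and repairing the blue routing may require a global rearrangement, not a swap of the two segments between $u$ and $v$.

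The paper closes this gap by a genuinely different, global mechanism. It first shows (using minimality, via an explicit rerouting in an auxiliary directed graph) that the directed graph $\tilde H$ contains no ``red cycle'' and no ``blue cycle'' (simple cycles alternating between single edges of one color and monochromatic segments of the other), hence every alternating red--blue \emph{chain} is a simple path; a greedy construction then yields $2k$ chains covering all vertices, and one proves that along any chain the shared vertices of any red or blue path appear in a consistent order. This gives a labeling of $V(H)$ by $2k$ labels such that two vertices on a common red path $R$ and common blue path $B$ with the same label appear in the same order on $R$ and $B$. Applying this labeling twice --- once as is, once with $S_1$ and $T_1$ swapped so that the red orientation is reversed --- and using pigeonhole on the quadruples $(R,B,\ell,\ell')$, of which there are at most $k\cdot k\cdot 2k\cdot 2k=4k^4$, yields an immediate order contradiction if two non-terminals share a quadruple. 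So the paper never needs to exhibit a contractible edge directly; it only uses minimality up front (uniqueness of the routings and absence of the two cycle types). To complete your proposal you would either have to carry out your exchange argument in full --- which I expect to be at least as hard as the paper's chain argument --- or import something equivalent to the no-red/blue-cycle lemma and the double-labeling pigeonhole.
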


\ifabstract
 Theorem~\ref{thm: 2-flow-main} easily follows from 
 Theorem~\ref{thm: 2-flow-minor} - see the Appendix for a formal proof.
\fi
\iffull
We start by showing that Theorem~\ref{thm: 2-flow-main} follows from
Theorem~\ref{thm: 2-flow-minor}. Let $G$ be the input graph, and
$(S_1,T_1),(S_2,T_2)$ the given pairs of vertex subsets. We denote
$k_1=k$, and add $\Delta=k_1-k_2$ new edges
$e_1=(a_1,b_1),\ldots,e_{\Delta}=(a_{\Delta},b_{\Delta})$, whose
endpoints are distinct, to the graph. The vertices
$\set{a_1,\ldots,a_{\Delta}}$ are then added to $S_2$, and the
vertices $b_1,\ldots,b_{\Delta}$ are added to $T_2$, so
$|S_2|=|T_2|=|S_1|=|T_1|=k$. The new graph then contains a set of
paths routing $(S_1,T_1)$, and a set of paths routing $(S_2,T_2)$.

We add a new set $S_1'$ of $k$ vertices to the graph, and connect each
vertex in $S_1'$ to a distinct vertex in $S_1$ with an edge. We
construct sets $S_2',T_1',T_2'$ of vertices and connect them to the
vertices in $S_2,T_1,T_2$, respectively, in a similar manner. Let $G'$
be this final graph. Then $G'$ contains a set of paths routing
$(S_1',T_1')$, and a set of paths routing $(S_2',T_2')$. The vertices
in $S_1',T_1',S_2',T_2'$ are distinct, and have degree $1$ each in
$G'$. We now compute any $(S_1',T_1',S_2',T_2')$-minimal minor $H$ of
$G'$.  Let $f : V(H) \rightarrow 2^{V(G')}$ be a map to cerfity that $H$
is a minor of $G'$. Let $\pset'$ be the set of paths routing
$(S_1',T_1')$, and $\qset'$ a set of paths routing $(S_2',T_2')$ in
$H$. We use the sets of paths $\pset',\qset'$ to define the sets
$\pset,\qset$ of paths routing $(S_1,T_1)$ and $(S_2,T_2)$,
respectively, in $G$. This mapping is the natural one; we extend a
path $P' \in \pset' \cup \qset'$ in $H$ to a path $P$ in $G$ by replacing each
vertex $v \in P'$ by a path contained in $G[f(v)]$ (the connected
sub-graph of $G$ corresponding to $v$) that connects the two edges
$e,e'$ of $P'$ incident on $v$. Since only two paths from $\pset' \cup
\qset'$ can contain a node $v \in V(H)$, it is not hard to find paths
through $f(v)$ for them with at most $2$ nodes of degree $3$ or more
in $f(v)$; this will ensure that the number of vertices whose degree
is more than $2$ in the resulting graph is at most twice their number
in $H$.  The formal argument is given below.

Consider any path $P'\in \pset'$, and assume
that $P'=(s=v_0,v_1,v_2,\ldots,v_r,v_{r+1}=t)$, so $s\in S_1',t\in
T_1'$. For each $0\leq i\leq r$, we denote the edge $(v_i,v_{i+1})$ by
$e_i^{P'}$. The new path $P$ contains the edges
$e_1^{P'},\ldots,e_{r-1}^{P'}$. Additionally, for each vertex $v_i$,
for $1\leq i\leq r$, it contains an arbitrary path $R_{v_i}(P)$,
connecting the endpoints of $e_{i-1}^{P'}$ and $e_{i}^{P'}$ that are
contained in $f(v_i)$, such that $R_{v_i}(P)\subseteq G[f(v_i)]$. Notice
that since $G[f(v_i)]$ is connected, such a path exist. Let
$\pset=\set{P\mid P'\in \pset'}$ be the resulting set of paths. Since
the paths in $\pset'$ are node-disjoint, so are the paths in
$\pset$. It is then immediate to see that $\pset$ routes $(S_1,T_1)$
in $G$.

Consider now some path $Q'\in \qset'$, and assume that
$Q'=(s=v_0,v_1,v_2,\ldots,v_r,v_{r+1}=t)$, so $s\in S_2',t\in
T_2'$. If $v_1=a_j$ for some $1\leq j\leq \Delta$, then $v_2=b_j$ must
hold, since $e_j$ is the only edge incident on $a_j$ in $G'$, in
addition to $(s,a_j)$. Therefore, $r=2$, and $Q'=(s,a_j,b_j,t)$. We
discard $Q'$ from $\qset'$. Otherwise, $Q'$ cannot contain any
vertices in $\set{a_1,b_1,\ldots,a_{\Delta},b_{\Delta}}$.  For each
$0\leq i\leq r$, we denote the edge $(v_i,v_{i+1})$ by $e_i^{Q'}$. The
new path $Q$ contains the edges
$e_1^{Q'},\ldots,e_{r-1}^{Q'}$. Additionally, for each vertex $v_i$,
for $1\leq i\leq r$, it contains some path $R_{v_i}(Q)$, connecting
the endpoints of $e_{i-1}^{Q'}$ and $e_{i}^{Q'}$ that are contained in
$f(v_i)$, such that $R_{v_i}(Q)\subseteq G[f(v_i)]$. The path
$R_{v_i}(Q)$ is constructed as follows. If $v_i$ does not belong to
any path in $\pset'$, then $R_{v_i}(Q)$ is any path connecting the
endpoints of $e_{i-1}^{Q'}$ and $e_{i}^{Q'}$ that are contained in
$f(v_i)$, such that $R_{v_i}(Q)\subseteq G[f(v_i)]$. Otherwise, let
$P'\in \pset'$ be the path containing $v$. Let $R_1$ be the
intersection of the corresponding path $P\in \pset$ with $G[f(v_i)]$
(which must be a path), and let $R_2$ be any path contained in
$G[f(v_i)]$, that connects the endpoints of $e_{i-1}^{Q'}$ and
$e_{i}^{Q'}$ that belong to $f(v_i)$. If $R_1$ and $R_2$ are disjoint,
then we let $R_{v_i}(Q)=R_2$. Otherwise, let $u$ be the first vertex
on $R_2$ that belongs to $R_1$, and let $v$ be the last vertex on
$R_2$ that belongs to $R_1$. Let $R_2'\subseteq R_2$ be the segment of
$R_2$ from its beginning until the vertex $u$, and $R_2''\subseteq
R_2$ the segment of $R_2$ from $v$ to its end. Let $R_1'\subseteq R_1$
be the segment of $R_1$ between $u$ and $v$. We then let $R_{v_i}(Q)$
be the concatenation of $R_2',R_1'$ and $R_2''$. Notice that in the
graph obtained by the union of $R_1$ and $R_{v_i}(Q)$, there are at
most two vertices whose degree is more than $2$ --- the vertices $u$ and
$v$. For each vertex $z$ that serves as an endpoint of the paths $R_1$
and $R_{v_i}(Q)$, if $z\neq u,v$, then the degree of $z$ is $1$ in
this graph.
Let $\qset$ be the final set of paths obtained after processing all
the paths in $\qset'$. Then it is immediate to see that $\qset$ routes
the original pair $(S_2,T_2)$ of vertex subsets in $G$. Let $H'$ be the graph obtained from the union of all paths in $\pset\cup \qset$. Then for each $v\in V(H)$, $f(v)\cap V(H')$
contains at most two vertices whose degree in $H'$ is more than $2$, so
$\tau(H')\leq 2|V(H)|\leq 8k_1^4+8k_1$.  This completes the proof of
Theorem~\ref{thm: 2-flow-main}.
\fi

In the rest of this section, we focus on the proof of
Theorem~\ref{thm: 2-flow-minor}. For simplicity, we denote
$S_1',S_2',T_1',T_2'$ by $S_1,S_2,T_1$ and $T_2$, respectively.  Let
$H$ be a $(S_1,T_1,S_2,T_2)$-minimal minor of $G$. Let $\rset$ be a
set of paths routing $(S_1,T_1)$ in $H$. We will often refer to the
paths in $\rset$ as \emph{red paths}, and we will think of these paths
as directed from $S_1$ towards $T_1$ (even though in general the graph
is undirected). Similarly, let $\bset$ be the set of paths routing
$(S_2,T_2)$ in $H$. We refer to the paths in $\bset$ as \emph{blue
  paths}, and view them as directed from $S_2$ to $T_2$. Notice that a
vertex in $S_1\cup T_1$ cannot participate in a blue path, since its
degree is $1$, and all vertices in $S_1,T_1,S_2,T_2$ are
distinct. Similarly, a vertex in $S_2\cup T_2$ cannot participate in a
red path. An edge $e\in E(H)$ may belong to a red path, or to a blue
path, but not both, since otherwise we could contract $e$ and obtain a minor
that is still $(S_1,T_1,S_2,T_2)$-good, contradicting the minimality
of $H$; this is possible since $e$ is not incident on $S_1\cup S_2\cup
T_1\cup T_2$. The edges that belong to the paths in $\rset$ are
called \emph{red edges}, and the edges that belong to the paths in
$\bset$ are called \emph{blue edges}. From the minimality of $H$,
every edge is either red or blue. We will refer to the vertices in
$S_1\cup S_2\cup T_1\cup T_2$ as the terminals of $H$. From the
minimality of $H$, every non-terminal vertex belongs to one red path
and one blue path, and is incident on exactly two red edges and
exactly two blue edges.  Assume now that there is another set
$\rset'\neq \rset$ of paths in $H$ routing $(S_1,T_1)$. Then there
must be some red edge in $H$ that does not belong to any path of $\rset'$,
contradicting the minimality of $H$. Therefore, $\rset$ is the unique
set of paths routing $(S_1,T_1)$ in $H$, and similarly, $\bset$ is the
unique set of paths routing $(S_2,T_2)$ in $H$.  We prove the following
theorem.

\begin{theorem}\label{thm: labeling}
  We can efficiently compute an assignment of labels in
  $L=\set{\ell_1,\ell_2,\ldots,\ell_{2k}}$ to the vertices of $V(H)$,
  such that each vertex in $V(H)$ is assigned one label, and for every
  pair $R\in \rset$, $B\in \bset$ of paths, if two vertices $v$ and
  $v'$ belong to both $R$ and $B$, and are assigned the same label,
  then they appear in the same order on $R$ and on $B$.
\end{theorem}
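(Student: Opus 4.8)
The plan is to reduce the statement to a purely combinatorial fact about a single pair of paths, and then to derive that fact from the uniqueness of the routings that minimality guarantees.

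\textbf{Decoupling over pairs, and reduction to a forbidden pattern.}
First observe that the constraint decouples over pairs $(R_i,B_j)$. Since $H$ is minimal, every non‑terminal vertex lies on exactly one red path and exactly one blue path, and every terminal lies on no path of the opposite color. Hence two vertices can lie on a common red path and a common blue path only if they have the same red path $R_i$ and the same blue path $B_j$; in particular the sets $\{V(R_i)\cap V(B_j)\}_{i,j}$ partition the non‑terminal vertices, and terminals may be given $\ell_1$ arbitrarily. So it suffices, for each pair $(R_i,B_j)$, to color $V(R_i)\cap V(B_j)$ from $L$ so that two equally labeled vertices of this set occur in the same order along $R_i$ and along $B_j$; the same palette $L$ works for all pairs because the pairs are vertex‑disjoint. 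For a fixed pair, reading the vertices of $V(R_i)\cap V(B_j)$ in the order they occur along $R_i$ and recording their order along $B_j$ gives a permutation; greedily assigning to each vertex the least label not used by an earlier (in $R_i$‑order) conflicting vertex — i.e.\ patience sorting — produces a valid coloring that uses exactly as many labels as the length of the longest \emph{reversed} subsequence (a set of common vertices occurring in one order along $R_i$ and in the exactly opposite order along $B_j$). This is clearly polynomial in $n$ and $k$. Thus the theorem reduces to the claim that, in a minimal $H$, no $R_i$ and $B_j$ have $2k+1$ common vertices appearing in opposite orders along the two paths.

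\textbf{The core claim.}
Two clean consequences of minimality will be used: (P1) if $w,w'$ are consecutive on a blue path with $w$ before $w'$, and both lie on a common red path $R_\ell$, then $w'$ precedes $w$ on $R_\ell$; and (P2), the symmetric statement with the colors swapped. [For (P1): otherwise $w,w'$ are not consecutive on $R_\ell$ — the edge $ww'$ is blue, not red — so replacing the (nonempty) segment of $R_\ell$ between $w$ and $w'$ by the edge $ww'$ yields a strictly shorter $S_1$–$T_1$ path whose vertex set is contained in $V(R_\ell)$; together with the other red paths this routes $(S_1,T_1)$ and differs from $\rset$, contradicting uniqueness.] Now suppose $u_1,\dots,u_d\in V(R_i)\cap V(B_j)$ occur as $u_1,\dots,u_d$ along $R_i$ and as $u_d,\dots,u_1$ along $B_j$, with $d=2k+1$. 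Let $Q$ be the subpath of $B_j$ joining $u_1$ and $u_d$ (it passes through $u_{d-1},\dots,u_2$ in order), and consider splicing $Q$ into $R_i$ in place of its $u_1$‑to‑$u_d$ portion: the result is a walk between the endpoints of $R_i$ that is shorter along $R_i$ but runs along $B_j$. If $Q$ avoided every other red path, this alone would contradict the uniqueness of $\rset$. In general $Q$ meets other red paths, and one performs an augmenting‑path–type exchange: each time $Q$ enters a red path $R_{i'}$, one reroutes $R_{i'}$ through a blue segment and a tail of $R_i$ (and symmetrically), matching up endpoints, using (P1)/(P2) to guarantee that the spliced pieces are internally disjoint and traversed in a consistent direction. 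Each exchange permanently commits a distinct endpoint of $S_1\cup T_1$, so it can be iterated at most $|S_1\cup T_1|=2k$ times before producing a genuinely different routing of $(S_1,T_1)$, which is impossible; hence $d\le 2k$. (Equivalently, one may phrase this through the residual graph of the red flow, which is acyclic precisely because the routing is unique: a reversed subsequence of length $d$ forces a prescribed walk in this residual graph, and acyclicity bounds $d$ by $2k$.)

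\textbf{Main obstacle.}
The delicate part is exactly the last step — controlling the cascade of exchanges, showing it terminates, and tying the number of reversals to the number of terminals so as to get the bound $2k$ rather than something larger; the bookkeeping that matches each reversal with a distinct element of $S_1\cup T_1$ (or $S_2\cup T_2$, via the uniqueness of the blue routing) is where the work lies. Everything else — the decoupling over pairs, the equivalence with the forbidden‑pattern statement, and the patience‑sorting algorithm that finally assembles the labeling — is routine once the core claim is established.
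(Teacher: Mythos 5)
Your reduction is fine as far as it goes: since in a minimal $H$ every non-terminal vertex lies on exactly one red and one blue path, the intersections $V(R_i)\cap V(B_j)$ partition the non-terminals, and by Dilworth/patience sorting a valid labeling with $2k$ labels exists if and only if no pair $(R_i,B_j)$ has $2k+1$ common vertices occurring in opposite orders on the two paths. Your (P1)/(P2) are also correct (the rerouting argument via uniqueness of $\rset$ and $\bset$ is sound). But the proof stops exactly where the theorem's content begins. (P1)/(P2) only control \emph{consecutive} vertices of a blue path lying on a common red path; a long reversed subsequence $u_1,\dots,u_d$ arises precisely when the blue subpath between the $u_\ell$'s wanders across many other red paths, and for that case you offer only the phrase ``augmenting-path-type exchange,'' with the crucial counting step --- ``each exchange permanently commits a distinct endpoint of $S_1\cup T_1$'' --- asserted, not argued. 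Nothing in the proposal explains why the cascade of exchanges terminates in a routing of $(S_1,T_1)$ at all (the spliced pieces must be shown vertex-disjoint and correctly matched to sources and sinks), nor why each exchange consumes a fresh terminal rather than revisiting one, which is exactly what would yield the bound $2k$. The alternative remark about the residual graph does not rescue this: uniqueness of the routing can plausibly be used to exclude residual cycles, but acyclicity by itself gives no numerical bound on $d$; you would still need a potential or level argument tied to the $2k$ terminals, which is again the missing content. You acknowledge this yourself in the ``main obstacle'' paragraph, so the proposal is a reduction plus a sketch, not a proof.

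For comparison, the paper closes this gap by a different mechanism: it orients red and blue edges along their paths, defines \emph{chains} (alternating red/blue directed walks), greedily grows one chain from each of the $2k$ sources in $S_1\cup S_2$, and proves (i) no alternating ``red cycle'' or ``blue cycle'' exists in the directed graph --- a rerouting argument strictly more general than your (P1)/(P2), since such cycles may have arbitrarily many red and blue segments --- hence chains are simple; (ii) the chains cover every vertex; and (iii) each chain is order-consistent with every red and every blue path. Labeling by chain then gives the $2k$ labels directly, and, read through your Dilworth dual, it is precisely the cover by $2k$ increasing subsequences whose existence you were trying to establish. If you want to complete your route, the general no-alternating-cycle lemma (or an equivalent chain/potential structure) is the ingredient you must supply; the consecutive-vertex facts (P1)/(P2) are only its simplest instance.
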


Before we prove Theorem~\ref{thm: labeling}, let us first complete the
proof of Theorem~\ref{thm: 2-flow-minor} assuming it. Let $\ell:
V(H)\rightarrow L$ be the labeling computed by Theorem~\ref{thm:
  labeling}. Next, we switch $S_1$ and $T_1$, so that the directions
of the paths in $\rset$ are reversed. We apply Theorem~\ref{thm:
  labeling} again to this new setting, and obtain another labeling
$\ell': V(H)\rightarrow L'$, where
$L'=\set{\ell_1',\ell_2',\ldots,\ell_{2k}'}$.

Assume for contradiction that $|V(H)|\geq 4k^4+4k+1$. Every
non-terminal vertex $v$ can be associated with a quadruple
$(R,B,\ell_i,\ell'_j)$, where $R$ and $B$ are the red and the blue
paths on which $v$ lies, $\ell_i$ is the label assigned to $v$ by
$\ell$, and $\ell'_j$ is the label assigned to $v$ by $\ell'$. Since
the total number of such quadruples is $4k^4$, there is a pair $u,v$
of non-terminal vertices that have the same quadruple
$(R,B,\ell_i,\ell'_j)$. As $u$ and $v$ are assigned the same label by
$\ell$, they must appear in the same order on $R$ and $B$. Assume
w.l.o.g. that $u$ appears before $v$ on both these paths. However,
since both these vertices are assigned the same label by $\ell'$, and
since the red paths were reversed when computing $\ell'$, the order of
$u$ and $v$ on paths $R$ and $B$ must be reversed, a contradiction.
In order to complete the proof of Theorem~\ref{thm: 2-flow-minor}, it
now only remains to prove Theorem~\ref{thm: labeling}.

\subsection*{Proof of Theorem~\ref{thm: labeling}}
Let $\tH$ be the directed counterpart of the graph $H$, where we
direct all red edges along the direction of the red paths from $S_1$
to $T_1$, and we direct the blue edges similarly along the blue paths
from $S_2$ to $T_2$. The main combinatorial object that we use in the
proof is a chain. A chain $Z$ is a directed (not necessarily simple)
path in graph $\tH$, such that the edges of $Z$ are alternating red
and blue edges. In other words, if the edges of $Z$ are
$e_1,e_2,\ldots,e_r$ in this order, then all odd-indexed edges
are red and all even-indexed edges are blue, or vice versa. The rest
of the proof consists of three steps. First, we show that every chain
must be a simple path, so no vertex may appear twice on a chain. If
this is not the case, we will show that $\rset$ is not a unique set of
paths routing $(S_1,T_1)$, or that $\bset$ is not a unique set of
paths routing $(S_2,T_2)$, leading to a
contradiction. In the second step, we construct a collection of $2k$
chains using a natural greedy algorithm: start from some source, and
then follow alternatively red and blue edges, while possible. We will
show that every vertex of $H$ belongs to at least one chain (but may
belong to more than one). We then associate a separate label with each
chain, and assign all vertices that belong to a chain the same
label. If a vertex belongs to several chains, then one of the
corresponding labels is assigned arbitrarily. Finally, we prove that
for every path $P\in \rset\cup \bset$ and every chain $Z$, if $v$ and
$v'$ are two vertices that belong to both $P$ and $Z$, then they must
appear in the same order on $P$ and on $Z$.

Before we proceed, we define two auxiliary structures: red and blue
cycles. Let $C$ be a directed simple cycle in the graph $\tH$ (so
every vertex may appear at most once on $C$). We say that it is a
\emph{blue cycle} iff we can partition $C$ into an even number of
edge-disjoint consecutive segments
$\sigma_1,\sigma_2,\ldots,\sigma_{2r}$, where $r>0$; for all $1\leq
i\leq r$, $\sigma_{2i}$ consists of a single red edge, and
$\sigma_{2i-1}$ is a non-empty path that only consists of blue
edges. Every edge of $C$ belongs to exactly one segment, and every
consecutive pair of segments shares one vertex (if $r=1$ then the two
segments share two vertices --- the endpoints of the segments). A red
cycle is defined similarly, with the roles of the red and the blue
segments reversed. We start by showing that $\tH$ cannot contain a red
or a blue cycle. 

\begin{lemma}
  \label{lem:no-red-blue-cycle}
  Graph $\tH$ cannot contain a red cycle or a blue cycle.
\end{lemma}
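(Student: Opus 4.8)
The plan is to argue by contradiction, exploiting that $\rset$ and $\bset$ are the \emph{unique} path systems routing $(S_1,T_1)$ and $(S_2,T_2)$ in $H$. I describe the argument ruling out a blue cycle; the red-cycle case is symmetric, with the roles of red and blue interchanged. So suppose $\tH$ contains a blue cycle $C$, and write it cyclically as $C=\beta_1,\rho_1,\beta_2,\rho_2,\ldots,\beta_r,\rho_r$, where each $\rho_i$ is a single red edge and each $\beta_i$ is a nonempty directed path of blue edges; say $\beta_i$ runs from $u_i$ to $v_i$, so that $\rho_i$ runs from $v_i$ to $u_{i+1}$ (indices mod $r$). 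Since $C$ is simple, the $2r$ vertices $u_1,v_1,\ldots,u_r,v_r$ are distinct, and each of them is a non-terminal vertex of $H$ (being incident to both a blue and a red edge of $C$). Because the blue edges of $H$ form exactly the vertex-disjoint paths of $\bset$, each $\beta_i$ is precisely the sub-path of a unique blue path $B\in\bset$ from $u_i$ to $v_i$ (in the blue direction); moreover two distinct $\beta$'s lying on the same blue path are vertex-disjoint on it and hence appear sequentially (not nested).

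The main idea is to reroute the blue paths by \emph{deleting the blue edges of all the $\beta_i$'s and splicing the resulting blue fragments back together through the red edges $\rho_i$, each traversed against its red orientation}. Concretely, delete from $H$ every blue edge lying on some $\beta_i$. This breaks each $B\in\bset$ into a sequence of blue sub-paths: a \emph{head} from a vertex of $S_2$ to the first $u$ on $B$, \emph{middle} pieces each running from a $v$ to the next $u$ on $B$, and a \emph{tail} from the last $v$ on $B$ to a vertex of $T_2$; all of these are nonempty because terminals have degree $1$ in $H$ and the $u_i,v_i$ are distinct non-terminals. Let $D$ be the graph consisting of the union of all these blue pieces together with the $r$ red edges $\rho_1,\ldots,\rho_r$. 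A routine degree count shows that every vertex of $D$ has degree at most $2$: apart from interior vertices of pieces, the only degree-$2$ vertices are the $u_i$'s (incident to one piece-edge and to $\rho_{i-1}$) and the $v_i$'s (incident to one piece-edge and to $\rho_i$), the interior vertices of the $\beta_i$'s are isolated in $D$, and the degree-$1$ vertices of $D$ are exactly the $2k$ terminals of $S_2\cup T_2$. Hence $D$ is a disjoint union of paths, cycles and isolated vertices, with exactly $k$ path components.

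To see that these $k$ paths route $(S_2,T_2)$, I orient each blue piece in its blue direction and orient each $\rho_i$ from $u_{i+1}$ to $v_i$ (i.e.\ against its red orientation). One then checks that under this orientation each $u_i$ and each $v_i$ has in-degree and out-degree exactly $1$, each vertex of $S_2$ is a source and each vertex of $T_2$ is a sink. Therefore each of the $k$ path components of $D$ runs from a vertex of $S_2$ to a vertex of $T_2$; being vertex-disjoint and $k$ in number, they form a path system $\bset'$ routing $(S_2,T_2)$ in $H$. But $\bset'$ uses no blue edge of $\beta_1$, whereas $\bset$ uses every blue edge of $H$, so $\bset'\neq\bset$, contradicting the uniqueness of $\bset$. (Alternatively, one can delete a single blue edge $e$ of $\beta_1$: since $e$ joins two non-terminals, $H\setminus e$ is still an $X$-respecting minor of $G$, and it is $(S_1,T_1,S_2,T_2)$-good via $\rset$ and $\bset'$, contradicting the minimality of $H$.)

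The step I expect to be the crux is verifying that the spliced graph $D$ genuinely decomposes into $k$ source-to-sink paths. This is where the choice to traverse each $\rho_i$ backwards is essential: it is exactly what makes the orientations on the blue fragments and on the red connector edges agree, so that the fragments get stitched into paths rather than forming stray cycles or dead ends, and it is what guarantees that the only path endpoints are the terminals of $S_2\cup T_2$. One also has to be careful that no blue fragment is empty and that the $2r$ cycle vertices are distinct, so that the degree bookkeeping above is valid. Everything else is routine manipulation of the alternating structure of $C$.
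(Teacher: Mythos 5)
Your proposal is correct and follows essentially the same route as the paper: delete the blue edges of the cycle's blue segments, splice the surviving blue fragments together through the cycle's red edges traversed in reverse, and use an orientation/degree count to extract $k$ disjoint $S_2$--$T_2$ paths, contradicting minimality (equivalently, uniqueness of $\bset$). The only difference is cosmetic — the paper contracts each blue fragment to a single edge of an auxiliary digraph $F$ before counting degrees, whereas you count degrees directly in the spliced subgraph $D$.
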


\iffull
\begin{proof}
  We prove for blue cycles; the proof for red cycles is similar. Let
  $C$ be a blue cycle in $\tH$, and let
  $\sigma_1,\sigma_2,\ldots,\sigma_{2r}$ be the corresponding segments
  of $C$. Let $H'$ be the graph obtained from $H$ by deleting all
  edges participating in the segments $\sigma_{2i-1}$, for $1\leq
  i\leq r$ (that is, the blue segments). We claim that both
  $(S_1,T_1)$ and $(S_2,T_2)$ remain routable in $H'$, contradicting
  the minimality of $H$. Since we only deleted blue edges, it is clear
  that $(S_1,T_1)$ remains routable via the paths in $\rset$. We now
  show that $H'$ contains a collection of paths routing $(S_2,T_2)$.

  Let $A$ denote the set of all vertices $a$, such that $a$ is the
  last vertex of some blue segment $\sigma_{2i-1}$ of $C$, for $1\leq
  i\leq r$, and let $B$ denote the set of all vertices $b$, such that
  $b$ is the first vertex of some blue segment $\sigma_{2i-1}$ of
  $C$. Then $|A|=|B|=r$, and the red edges of $C$ define a complete
  matching between $A$ and $B$. Let $\Sigma$ be the collection of
  paths obtained from $\bset$, by deleting all blue edges that
  participate in the cycle $C$ (we do not include $0$-length paths in
  $\Sigma$). Then $\Sigma$ is a collection of disjoint paths, that
  only contain blue edges, which route the pair $(S_2\cup A)$ and
  $(T_2\cup B)$.  Notice that every path in $\Sigma$ contains at
  least two vertices: this is since the terminals cannot belong to $C$
  as their degrees are $1$, and every vertex appears on $C$ at most
  once. Therefore, all vertices in $S_1,S_2,A,B,T_1,T_2$ are distinct.

  We now construct the following directed graph $F$: the vertices of
  $F$ are $S_2\cup A\cup T_2\cup B$. There is a directed edge $(u,v)$,
  for $u\neq v$, in $F$ iff there is a (directed) red edge $(v,u)$ in
  $C$ (in which case we say that $(u,v)$ is a red edge), or there is
  some path in $\Sigma$ that starts at $u$ and terminates at $v$ (in
  which case we say that $(u,v)$ is a blue edge). Notice that in the
  graph $F$, every vertex in $S_2$ has one outgoing edge and no other
  incident edges; every vertex in $T_2$ has one incoming edge and no
  other incident edges; every vertex in $A$ has one outgoing edge
  (blue), and one incoming edge (red); and every vertex in $B$ has one
  incoming edge (blue) and one outgoing edge (red). It is then easy to
  see that $F$ contains $k$ directed disjoint paths connecting $S_2$
  to $T_2$, and this gives a set of paths routing $(S_2,T_2)$  in $H'$,
  contradicting the minimality of $H$.
\end{proof}
\fi


The claim below essentially follows from the preceding lemma. 

\begin{claim}\label{claim: no cycles on chain}
If $Z$ is a chain, then every vertex of $V(H)$ may appear on $Z$ at most once.
\end{claim}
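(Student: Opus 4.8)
The plan is to argue by contradiction, assuming some vertex $v$ appears (at least) twice on a chain $Z$, and to extract from this repetition a red or blue cycle in $\tH$, contradicting Lemma~\ref{lem:no-red-blue-cycle}. Concretely, suppose $Z=(e_1,e_2,\ldots,e_r)$ is a chain with alternating red/blue edges, and suppose a vertex $v$ occurs at two positions along $Z$. Take the sub-path $Z'$ of $Z$ strictly between these two occurrences of $v$; this $Z'$ is a closed directed walk in $\tH$ whose edges still alternate red and blue. I would first reduce to the case where $v$ is the only repeated vertex: among all chains (or sub-chains) that revisit some vertex, pick one of minimum length; then the closed walk $Z'$ it produces visits every internal vertex at most once, i.e.\ it is a directed \emph{simple} cycle $C$ in $\tH$ — except possibly that the alternation is ``broken'' at the single vertex $v$ where the walk closes up.

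The key structural point is the parity/alternation of $C$. Along $C$ the colors alternate everywhere except at $v$, where the last edge of $Z'$ entering $v$ and the first edge leaving $v$ meet. There are two cases. If those two edges have different colors, then $C$ is a genuine alternating cycle, so $C$ has even length and consists of single red edges alternating with single blue edges; collapsing each maximal monochromatic run (here, runs of length one) exhibits $C$ as both a red cycle and a blue cycle in the sense of the definition preceding Lemma~\ref{lem:no-red-blue-cycle} — contradiction. If instead those two edges have the \emph{same} color, say both blue, then $C$ consists of alternating single red edges and blue segments, where exactly one blue segment (the one through $v$) has length two and all others have length one; this is exactly a \emph{blue cycle} by the definition (the segments $\sigma_{2i-1}$ are non-empty blue paths, $\sigma_{2i}$ single red edges), again contradicting Lemma~\ref{lem:no-red-blue-cycle}. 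The symmetric case (both same-colored edges red) gives a red cycle.

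The one subtlety I would handle carefully is that $Z'$ need not be a simple cycle on the first try — a chain that revisits $v$ could also revisit other vertices. The clean fix is the minimum-length choice above: if $Z'$ had another repeated vertex $w$, the portion of $Z'$ between two consecutive occurrences of $w$ would itself be a shorter closed alternating walk, and chopping $Z$ down to realize it contradicts minimality. (One should also note that the endpoints of $Z$, and more generally terminals, cannot be the repeated vertex, since terminals have degree $1$ in $H$ and hence appear on no chain interior and cannot be revisited.) After this reduction the cycle $C$ is simple and the color-parity case analysis above applies verbatim. I expect this reduction-to-a-simple-cycle step to be the only place requiring genuine care; the rest is a direct translation into the red/blue cycle definitions already set up, so Lemma~\ref{lem:no-red-blue-cycle} finishes the proof.
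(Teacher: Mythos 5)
Your proposal is correct and follows essentially the same route as the paper: extract a minimal closed sub-walk of the chain (a simple directed cycle whose alternation is broken only at the repeated vertex), then observe that depending on whether the two edges at that vertex have the same or different colors the cycle is a red cycle, a blue cycle, or both, contradicting Lemma~\ref{lem:no-red-blue-cycle}. The paper obtains the simple cycle directly by choosing a segment whose two endpoints coincide with no other repeated vertex, which is the same reduction you achieve via your minimum-length argument.
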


\iffull
\begin{proof}
  Assume otherwise, and let $Z$ be any chain, such that some vertex
  appears more than once on $Z$. Then there is a segment $Z'$ of $Z$,
  such that both endpoints of $Z'$ are the same vertex $v$, but every
  other vertex appears at most once on $Z'$, and $v$ is not an inner
  vertex of $Z'$. Notice that $Z'$ must contain at least two edges.
  Then $Z'$ defines a simple directed cycle in $\tH$. Moreover, if 
  both edges incident on $v$ in $Z'$ are blue, then $Z'$ is a blue
  cycle; if both edges are red then $Z'$ is a red cycle; otherwise it
  is both a red and a blue cycle. Since $\tH$ cannot contain a red or
  a blue cycle, every vertex appears on $Z$ at most once.
\end{proof}
\fi

We define a collection $\zset$ of $2k$ chains in $\tH$,
and prove that every vertex of $\tH$ belongs to at least one chain.
Let $s\in S_1\cup S_2$, and let $e$ be the unique edge leaving $s$. We
start building the chain by adding $e$ to the chain. If the last edge
added to the chain $e'=(u,v)$ is a red edge, and there is a blue edge
leaving $v$ in $\tH$, then we add the unique blue edge leaving $v$ in
$\tH$ to the chain; if no such edge exists, we complete the
construction of the chain --- in this case, $v\in T_1\cup T_2$ must
hold. Similarly, if the last edge added to the chain $e'=(u,v)$ is a
blue edge, and there is a red edge leaving $v$ in $\tH$, then we add
the unique red edge leaving $v$ in $\tH$ to the chain; if no such edge
exists, we complete the construction of the chain. Overall, we
construct one chain starting from each vertex in $S_1\cup S_2$,
obtaining $2k$ chains. Let $\zset$ denote the resulting collection of
the chains. 

\begin{claim}
  \label{claim: chain covers all vertices}
  Every vertex of $\tH$ belongs to at least one chain.
\end{claim}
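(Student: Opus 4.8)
The plan is to prove the stronger statement that every \emph{edge} of $\tH$ lies on some chain of $\zset$; the claim follows immediately, since every non-terminal vertex is an endpoint of an edge (indeed of four), every $s\in S_1\cup S_2$ is by construction the first vertex of its own chain, and every $t\in T_1\cup T_2$ is the head of its unique incident edge, hence a vertex of the chain through that edge. The key observation is that, by the degree structure of $\tH$ recorded above (a non-terminal vertex has one in-edge and one out-edge of each color; a vertex of $S_1,S_2$ has a single edge, outgoing, of the appropriate color; a vertex of $T_1,T_2$ has a single edge, incoming), the chain-building rule is both deterministic and reversible. Precisely, for an edge $e=(u,v)$ with $v$ non-terminal there is a unique \emph{successor} edge --- the out-edge of $v$ whose color is opposite to $e$ --- and this ``successor'' operation is a partial injection on $E(\tH)$ (the head and the color of $e$ are recovered from its successor) whose domain is the $|E(\tH)|-2k$ edges not entering $T_1\cup T_2$ and whose image is the $|E(\tH)|-2k$ edges not leaving $S_1\cup S_2$. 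A chain of $\zset$ is exactly a maximal iterated-successor run of edges, which necessarily begins at the unique edge leaving some vertex of $S_1\cup S_2$ and ends at an edge entering $T_1\cup T_2$.

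Now fix any edge $e$ and follow predecessors backwards from $e$. If this walk ever leaves the domain of the successor map, it has reached an edge leaving some $s\in S_1\cup S_2$, and then the chain grown greedily from $s$ is precisely the forward continuation of our walk, hence contains $e$; so $e$ lies on a chain. It therefore suffices to rule out an infinite backward walk. If one existed then, since $E(\tH)$ is finite and the predecessor map is also a partial injection, the walk would close up into a cyclic list of distinct edges $e_0,e_1,\dots,e_{m-1}$ (indices mod $m$) with $e_{i+1}$ the successor of $e_i$; this is a closed walk in $\tH$ with strictly alternating edge colors, so $m$ is even. Cutting the closed walk at its first repeated vertex yields a \emph{simple} directed cycle $C$ in $\tH$; all of its consecutive-edge pairs have opposite colors except possibly the one created by the cut, which shares a color, say blue, so $C$ decomposes into single red edges alternating with (mostly length-one, at most one length-two) blue sub-paths --- i.e.\ $C$ is a blue cycle, or symmetrically a red cycle. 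This contradicts Lemma~\ref{lem:no-red-blue-cycle}, completing the argument.

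The only genuinely delicate point is this last reduction: one must check that after the cut the cyclic edge list is really a cycle and not merely a closed walk, and --- because the cut can introduce a single monochromatic adjacency --- that the resulting simple cycle still conforms to the definition of a red or a blue cycle (absorbing that adjacency into a length-two monochromatic segment), so that Lemma~\ref{lem:no-red-blue-cycle} genuinely applies; everything else is bookkeeping with the degree structure of $\tH$. Once we know that no edge is skipped, the reduction in the first paragraph gives that every vertex of $\tH$ belongs to at least one chain.
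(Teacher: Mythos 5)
Your proof is correct and follows essentially the same route as the paper: both prove the stronger statement that every \emph{edge} of $\tH$ lies on a chain by tracing predecessors backwards until a vertex of $S_1\cup S_2$ is reached, and then observe that the greedily constructed chain starting from that source must contain the backward walk, hence the edge. The only cosmetic difference is in the termination step: the paper simply invokes Claim~\ref{claim: no cycles on chain} (no vertex repeats on a chain), whereas you re-derive essentially that fact inline, via the partial-injection/cycle-extraction argument reducing to a red or blue cycle and Lemma~\ref{lem:no-red-blue-cycle}.
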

\iffull
\begin{proof}
  We prove a slightly stronger claim: that every edge of $\tH$ belongs
  to at least one chain.

Let $e$ be any edge of $\tH$, and assume w.l.o.g. that it is a blue
edge. We construct a chain $P$ from the end to the beginning, and we
start by adding the edge $e$ to $P$ as the last edge of $P$. Assume
that the last edge added to $P$ was $e'=(u,v)$. If $u\in S_1\cup S_2$,
then we terminate the construction of $P$; otherwise, there must be a
red edge entering $u$ and a blue edge entering $u$. If $e'$ is a red
edge, then we add the unique blue edge entering $u$ to $P$, and
otherwise we add the unique red edge entering $u$ to $P$, and continue
to the next iteration. Since at every step, the current path $P$ is a
valid chain, and no vertex may appear twice on a chain, this process
will eventually stop at some vertex $s\in S_1\cup S_2$. Then the
unique chain $Z\in \zset$ that starts from $s$ must contain $P$ as a
sub-path, and hence must contain the edge $e$.
\end{proof}
\fi

Our final step is the following claim.

\begin{claim}\label{claim: order}
  Let $Z$ be a chain, and assume that it contains two vertices
  $v,v'\in V(P)$, where $P\in \rset\cup \bset$. Assume further that
  $v$ appears before $v'$ on $Z$. Then $v$ appears before $v'$ on $P$.
\end{claim}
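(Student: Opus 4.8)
The plan is to argue by contradiction and manufacture a red or blue cycle, which is forbidden by Lemma~\ref{lem:no-red-blue-cycle}. By symmetry (exchanging the two colours) it suffices to treat the case where $P$ is a red path. Orient every red path from $S_1$ to $T_1$ and every blue path from $S_2$ to $T_2$, as in the definition of $\tH$, and let ``before on $P$'' refer to this orientation. I will use the fact that every red edge of $\tH$ lies on a unique path of $\rset$; consequently, if $u\in V(P)$ then $P$ is the red path through $u$, so the red edge of $\tH$ leaving (resp.\ entering) $u$ is precisely the edge of $P$ leaving (resp.\ entering) $u$. Now suppose the claim is false, and among all quadruples $(Z,P,v,v')$ witnessing a violation --- i.e.\ $Z$ a chain, $P\in\rset\cup\bset$, $v,v'\in V(Z)\cap V(P)$, with $v$ before $v'$ on $Z$ but $v'$ before $v$ on $P$ (the only alternative, since $P$ is simple and $v\ne v'$) --- choose one minimizing the number $\ell$ of edges on the sub-path $Z[v,v']$ of $Z$ between $v$ and $v'$ (well defined and simple since chains are simple, by Claim~\ref{claim: no cycles on chain}).

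The first step is to observe that no internal vertex $w$ of $Z[v,v']$ lies on $P$: otherwise $(Z,P,v,w)$ and $(Z,P,w,v')$ would each have a strictly shorter middle sub-path, so by minimality neither is a violation, forcing $v$ before $w$ before $v'$ on $P$ and hence $v$ before $v'$ on $P$, a contradiction. The second step pins down the end colours of $Z[v,v']$. If its first edge were red, then, since it leaves $v\in V(P)$, it would be the edge of $P$ out of $v$, so its head $w_1$ would lie on $P$; by step one $w_1$ is not an internal vertex of $Z[v,v']$, so $w_1=v'$, $\ell=1$, and $v$ is before $v'$ on $P$ --- a contradiction. Hence the first edge of $Z[v,v']$ is blue, and symmetrically (using the edge of $P$ into $v'$) so is the last edge; since a chain alternates colours strictly, $\ell$ is odd.

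For the final step I would form the closed directed walk $W$ obtained by concatenating $P[v',v]$ (the sub-path of $P$ from $v'$ forward to $v$, a nonempty red path) with $Z[v,v']$ (from $v$ back to $v'$). The key point is that $W$ is a \emph{simple} cycle: the internal vertices of $P[v',v]$ all lie on $P$, while by step one the internal vertices of $Z[v,v']$ avoid $P$, and the two simple paths share only their endpoints $v,v'$. Reading $W$ around, it consists of the red path $P[v',v]$ followed by the strictly alternating run of single edges (blue, red, $\dots$, red, blue) supplied by $Z[v,v']$; taking $P[v',v]$ together with each single red edge as the red segments and each single blue edge as the blue segments (when $\ell=1$ this is exactly the degenerate two-segment case permitted in the definition) exhibits $W$ as a red cycle, contradicting Lemma~\ref{lem:no-red-blue-cycle}. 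The blue case is identical with the two colours swapped. The one place I expect to need care is confirming that $W$ is genuinely simple and that its segment structure matches the definition of a red cycle exactly; everything else is bookkeeping forced by the minimal choice of counterexample.
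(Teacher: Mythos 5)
Your proof is correct and follows essentially the same route as the paper's: reduce by minimality to a violating pair whose chain segment contains no other vertex of $P$ (the paper does this by picking a consecutive pair $v_i,v_{i+1}$ of $P\cap Z$ along $Z^*$), then glue that chain segment to the $P$-segment between the two vertices to obtain a red or blue cycle, contradicting Lemma~\ref{lem:no-red-blue-cycle}. Your extra step pinning down that the first and last edges of $Z[v,v']$ have the opposite colour of $P$ just makes explicit the segment-decomposition check that the paper leaves implicit, so it is the same argument with a bit more bookkeeping.
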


\iffull
\begin{proof}
  Assume otherwise. Then there must be two vertices $u,u'$ that appear
  on both $Z$ and $P$, such that no other vertex of $P$ appears
  between $u$ and $u'$ on $Z$, $u$ appears before $u'$ on $Z$, and it
  appears after $u'$ on $P$. Indeed, consider the segment $Z^*$ of $Z$
  between $v$ and $v'$. If this segment contains no other vertex of
  $P$, then we are done. Otherwise, assume w.l.o.g. that $v$ appears
  before $v'$ on $Z$, and let $v_0=v,v_1,\ldots,v_x=v'$ be the
  vertices of $P\cap Z$, that appear on $Z^*$ in this order. Since $v$
  appears after $v'$ on $P$, there must be a consecutive pair
  $v_i,v_{i+1}$ of vertices, such that $v_i$ appears after $v_{i+1}$
  on $P$. We then set $u=v_i$ and $u'=v_{i+1}$.

  Let $Z'$ be the segment of the chain $Z$ between $u$ and $u'$, and
  let $P'$ be the segment of $P$ between $u'$ and $u$. Observe that
  $P'\cap Z'=\set{u',u}$, and $Z'$ contains at least one edge whose
  color is opposite from the color of $P$. If $P$ is a blue path, then
  $Z'\cup P'$ is a blue cycle; otherwise it is a red cycle, a
  contradiction.
\end{proof}
\fi

We are now ready to assign labels to the vertices of $H$. Let
$\zset=\set{C_1,C_2,\ldots,C_{2k}}$. Fix any vertex $v\in H$, and let
$C_i\in \zset$ be any chain that contains $v$. We then assign to $v$
the label $\ell_i$.

Consider now any pair $R\in \rset$, $B\in \bset$ of paths, and let
$v,v'$ be two vertices that have the same label $\ell_i$ and appear on
both $R$ and $B$. Assume w.l.o.g. that $v$ appears before $v'$ on
chain $C_i$. Then from Claim~\ref{claim: order}, $v$ must appear
before $v'$ on both $R$ and $B$.  This completes the proof of
Theorem~\ref{thm: labeling}, and hence of Theorem~\ref{thm:
  2-flow-minor} and Theorem~\ref{thm: 2-flow-main}.

 \label{---------------------------------------------sec: PoS system-minor-----------------------------------------}
\section{Background on Treewidth and Path-of-Sets System}\label{sec: PoS system}

\ifabstract
In this section we define some graph-theoretic notions and summarize
some previous results that we use in the proof of Theorem~\ref{thm:
  main-topological-minor}. We also define a combinatorial object that
plays a central role in the proof --- the path-of-sets system from \cite{CC13-grid}.
Given a graph $G=(V,E)$ and a set $A\subseteq V$ of vertices, we
denote by $E_G(A)$ the set of edges with both endpoints in $A$, and by
$\out_G(A)$ the set of edges with exactly one endpoint in $A$.  For
disjoint sets of vertices $A$ and $B$, the set of edges with one end
point in $A$ and the other in $B$ is denoted by $E_G(A,B)$. For a
vertex $v$ in a graph $G$ we use $d_G(v)$ to denote its degree. We may
omit the subscript $G$ if it is clear from the context. Given a set
$\pset$ of paths in $G$, we denote by $V(\pset)$ the set of all
vertices participating in paths in $\pset$, and similarly, $E(\pset)$
is the set of all edges that participate in paths in $\pset$. We
sometimes refer to sets of vertices as \emph{clusters}. A path $P$ in
a graph $G$ is a \emph{$2$-path} iff every inner vertex $v$ in $P$ has
$d_G(v)=2$. It is a \emph{maximal $2$-path} iff the degrees of the
endpoints of $P$ are both different from $2$. Given a set $\pset$ of
paths, we denote by $J(\pset)$ the graph obtained by the union of all
the paths in $\pset$. Given a graph $G$, we denote by $\tau(G)$ the number of vertices of $G$ whose degree is at least $3$.

We now define the notion of linkedness and the different notions of
well-linkedness that we use.

\begin{definition}
  We say that a set $\tset$ of vertices is $\alpha$-well-linked in
  $G$, iff for any partition $(A,B)$ of the vertices of $G$ into two
  subsets, $|E(A,B)|\geq \alpha\cdot \min\set{|A\cap \tset|,|B\cap
    \tset|}$.  

  We say that a set $\tset$ of vertices is
  \emph{node-well-linked} in $G$, iff for any pair $(\tset_1,\tset_2)$
  of equal-sized subsets of $\tset$, there is a collection $\pset$ of
  $|\tset_1|$ {\bf node-disjoint} paths, connecting the vertices of
  $\tset_1$ to the vertices of $\tset_2$. (Note that $\tset_1$,
  $\tset_2$ are not necessarily disjoint, and we allow empty paths).
 
  We say that two disjoint vertex subsets $A$ and $B$ are
  \emph{linked} in $G$ iff for any pair of equal-sized subsets
  $A'\subseteq A$, $B'\subseteq B$ there is a set $\pset$ of
  $|A'|=|B'|$ node-disjoint paths connecting $A'$ to $B'$ in $G$.
\end{definition}

The following Theorem follows from previous work. For completeness, we
include a proof sketch in the Appendix.
\begin{theorem}\label{thm: weak well-linkedness to tw}
  Let $G$ be any graph with maximum vertex degree $\Delta$, and
  $\tset$ a subset of $\kappa$ vertices, such that $\tset$ is
  $\alpha$-well-linked in $G$, for $\alpha< 1$. Then the treewidth of
  $G$ is $\Omega(\alpha\kappa/\Delta)$.
\end{theorem}

\paragraph{Path-of-Sets System}
A central combinatorial object that we use in the proof of
Theorems~\ref{thm: main-topological-minor} is a path-of-sets system,
that was introduced in~\cite{CC13-grid}.

\ifabstract
\begin{figure}[htb]
  \centering
  \scalebox{0.8}{\includegraphics[width=6in]{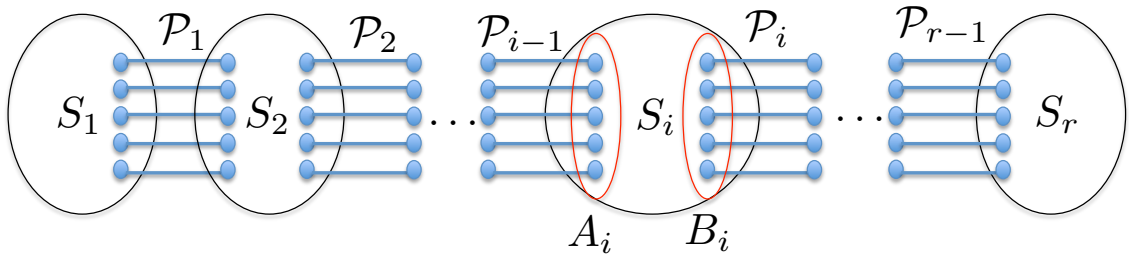}}
  \caption{Path-of-Sets System}
  \label{fig:pos}
\end{figure}
\fi

\begin{definition}
  A path-of-sets system $(\sset,\bigcup_{i=1}^{r-1}\pset_i)$ of width
  $r$ and height $h$ consists of:
\begin{itemize}[noitemsep,topsep=0pt]
\item A sequence $\sset=(S_1,\ldots,S_r)$ of $r$ disjoint vertex
  subsets of $G$, where for each $i$, $G[S_i]$ is connected;

\item For each $1\leq i\leq r$, two disjoint sets $A_i,B_i\subseteq S_i$ of $h$ vertices each, such that $A_i$ and $B_i$ are linked in $G[S_i]$;
\item For each $1\leq i<r$, a set $\pset_i$ of $h$ disjoint paths,
  routing $(B_i,A_{i+1})$, such that all paths in $\bigcup_i\pset_i$
  are mutually disjoint, and do not contain the vertices of
  $\bigcup_{S_j\in \sset}S_j$ as inner vertices,
\end{itemize}

We say that it is a \emph{strong} path-of-sets system, if additionally
for each $1\leq i\leq r$, $A_i$ is node-well-linked in $G[S_i]$, and
the same holds for $B_i$. (See Figure~\ref{fig:pos}.)
\end{definition}

The following theorem, that was proved in~\cite{CC13-grid}, is the
starting point of the proof of Theorem~\ref{thm:
  main-topological-minor}.

\begin{theorem}[Theorem 3.2 in~\cite{CC13-grid}] 
\label{thm: strong PoS system}
Let $G$ be any graph of treewidth $k$, and let $h,r>1$ be integral
parameters, such that for some large enough constants $c$ and $c'$,
$k/\log^{c'}k>chr^{48}$. Then there is an efficient randomized
algorithm, that, given $G,h$ and $r$, w.h.p. computes a strong
path-of-sets system of height $h$ and width $r$ in $G$.
\end{theorem}

{\bf Expanders and the Cut-Matching Game.}
We say that a (multi)-graph $G=(V,E)$ is an $\alpha$-expander, iff
$\min_{\stackrel{S\sse V:}{|S|\leq |V|/2}}\set{\frac{|E(S,\nots)|}{|S|}}\geq \alpha$.
We use the cut-matching game of Khandekar, Rao and
Vazirani~\cite{KRV}. In this game, we are given a set $V$ of $N$
vertices, where $N$ is even, and two players: a cut player, whose goal
is to construct an expander $X$ on the set $V$ of vertices, and a
matching player, whose goal is to delay its construction. The game is
played in iterations. We start with the graph $X$ containing the set
$V$ of vertices, and no edges.  In each iteration $j$, the cut player
computes a bi-partition $(A_j,B_j)$ of $V$ into two equal-sized sets,
and the matching player returns some perfect matching $M_j$ between
the two sets. The edges of $M_j$ are then added to $X$. Khandekar, Rao
and Vazirani have shown that there is a strategy for the cut player,
guaranteeing that after $O(\log^2N)$ iterations we obtain a
$\half$-expander w.h.p. Subsequently, Orecchia et
al.~\cite{better-CMG} have shown the following improved bound:


\begin{theorem}[\cite{better-CMG}]\label{thm: CMG}
There is a probabilistic algorithm for the cut player, such that, no matter how the matching player plays, after $\gkrv(N)=O(\log^2N)$ iterations, graph $X$ is an $\alphaCMG(N)=\Omega(\log N)$-expander, with constant probability.
\end{theorem}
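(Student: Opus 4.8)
The statement is due to Orecchia, Schulman, Vazirani and Vishnoi~\cite{better-CMG}, refining the original cut-matching analysis of Khandekar, Rao and Vazirani~\cite{KRV}; here is the approach I would take. The plan is to track, for every vertex $i\in V$, a vector $v_i\in\R^N$, initialized to the standard basis vector $e_i$, and to update $v_i,v_{i'}\leftarrow\half(v_i+v_{i'})$ for each edge $(i,i')$ of the matching $M_j$ returned in round $j$ (so the composition of all these updates is a symmetric doubly stochastic matrix, namely a lazy walk along $M_1,\dots,M_j$). Since averaging preserves $\sum_i v_i$, the mean $\mu=\frac1N\sum_i v_i$ stays equal to $\frac1N\mathbf 1$, and I would use the potential $\Phi_j=\sum_i\norm{v_i^{(j)}-\mu}^2$, with $\Phi_0=N-1$. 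The single fact that drives the whole argument is that averaging along one matched edge $(i,i')$ decreases $\Phi$ by exactly $\half\norm{v_i-v_{i'}}^2$ — apply $\norm a^2+\norm b^2-2\norm{\half(a+b)}^2=\half\norm{a-b}^2$ with $a=v_i-\mu$, $b=v_{i'}-\mu$ — so the cut player should choose the bipartition so that the matching player is forced to pair vertices with far-apart vectors.

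For the cut player I would use a random-projection strategy: in round $j$, sample a uniformly random unit vector $r\perp\mathbf 1$, compute $x_i=\langle v_i^{(j-1)},r\rangle$, sort the vertices by $x_i$, and put the lower half in $A_j$ and the upper half in $B_j$. Any matching $M_j$ the adversary returns then pairs each $i\in A_j$ with some $i'\in B_j$ having $x_{i'}\ge x_i$, so $\norm{v_i-v_{i'}}^2\ge(x_{i'}-x_i)^2$; averaging over $r$ (and, to cope with heavy-tailed projections, restricting attention to a dyadic scale that carries a constant fraction of the variance) yields the key potential-drop lemma $\expect{\Phi_j}\le\bigl(1-\Omega(1/\log N)\bigr)\Phi_{j-1}$ — it is this $1/\log N$ loss per round, rather than a constant factor, that makes $\log^2 N$ rounds both necessary and sufficient. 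Iterating over $T=\gkrv(N)=O(\log^2 N)$ rounds gives $\expect{\Phi_T}\le N\cdot e^{-\Omega(\log N)}$, i.e.\ an arbitrarily small inverse polynomial in $N$, so by Markov's inequality, with the claimed constant probability every $v_i^{(T)}$ is within an inverse polynomial of $\mu$.

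To finish I would translate ``all $v_i$ nearly equal'' into expansion of $X=\bigcup_j M_j$. The composition of the $T$ averaging operators is then a doubly stochastic matrix entrywise close to $\frac1N\mathbf 1\mathbf 1^\top$, which unpacks into an embedding of a scaled complete graph into $X$ with $O(1)$ congestion and dilation $T$; that already certifies $X$ is an $\Omega(1)$-expander (the KRV bound). To strengthen this to $\alphaCMG(N)=\Omega(\log N)$ one follows~\cite{better-CMG} and replaces the lazy walk and scalar $\ell_2$ potential by the heat-kernel walk and the matrix potential $\operatorname{Tr}\exp\!\bigl(-\eta\sum_{\ell\le j}L(M_\ell)\bigr)$: the cut player now projects the rows of the square root of this matrix, Golden--Thompson bounds the trace contributed by each new matching, and with learning rate $\eta=\Theta(1)$ one forces $\operatorname{Tr}\exp(-\eta L(X))<1$ after $O(\log^2 N)$ rounds, i.e.\ drives the second Laplacian eigenvalue $\lambda_2(L(X))$ up to $\Omega(\log N)$. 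The easy direction of Cheeger's inequality, $\min_{|S|\le N/2}\abs{E(S,\nots)}/\abs S\ge\lambda_2(L(X))/2$, then gives that $X$ is an $\Omega(\log N)$-expander, as required.

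I expect the last step — extracting the extra $\Theta(\log N)$ factor within the same $O(\log^2 N)$ rounds — to be the main obstacle: the elementary $\ell_2$ argument only buys $\Omega(1)$ expansion, and pushing to $\Theta(\log N)$ genuinely requires the matrix-multiplicative-weights / Golden--Thompson machinery, the right spectral embedding for the cut player, and a choice of learning rate $\eta$ large enough to reach spectral gap $\Omega(\log N)$ yet small enough that each round still makes multiplicative progress on the trace potential.
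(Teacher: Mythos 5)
This statement is not proved in the paper at all: it is quoted verbatim as a black-box result of Orecchia et al.\ \cite{better-CMG} (refining \cite{KRV}), so there is no in-paper argument to compare yours against. What you have written is essentially a faithful outline of the proof in the cited sources: the KRV averaging-vectors potential $\sum_i\norm{v_i-\mu}^2$ with a random-projection cut player and a per-round multiplicative drop of $1-\Omega(1/\log N)$ recovers the $O(\log^2 N)$-round, constant-expansion guarantee, and the matrix potential $\operatorname{Tr}\exp(-\eta\sum_\ell L(M_\ell))$ with Golden--Thompson and the easy direction of Cheeger is exactly how \cite{better-CMG} pushes the expansion to $\Omega(\log N)$ within the same number of rounds. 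One small imprecision: with $\eta=\Theta(1)$, forcing the (restricted) trace below $1$ only yields $\lambda_2>0$; to conclude $\lambda_2=\Omega(\log N)$ you need to drive the trace down to $N^{-\Omega(1)}$, since $\lambda_2\geq \log(1/\delta)/\eta$ when the trace on the space orthogonal to $\mathbf 1$ is at most $\delta$ --- this is what the $O(\log^2 N)$ rounds actually buy. With that adjustment your sketch matches the known proof; for the purposes of this paper, however, the theorem is simply cited, and no proof is expected here.
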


Our algorithms work by embedding an expander $X$ into a sub-graph of
$G$. The embedding of the expander is then used to certify the
treewidth. We use the following notion of embedding.

\begin{definition}
  Let $G,X$ be graphs. An embedding $\phi$ of $X$ into $G$ maps every
  vertex $v\in X$ to a connected subgraph $C_v\subseteq G$, and every
  edge $e=(u,v)\in E(X)$ to a path $P_e$ in graph $G$, whose endpoints
  belong to $C_v$ and $C_u$, respectively. We say that the
  \emph{congestion} of the embedding is at most $c$, iff every edge of
  $G$ belongs to at most $c-1$ paths in $\set{P_e\mid e\in E(X)}$ and
  at most one graph $\set{C_v\mid v\in V(X)}$.
\end{definition}

In the next simple claim, we show that if we can embed a
$\kappa$-vertex expander with congestion at most $c$ into a graph $H$
with bounded vertex degree, then the treewidth of $H$ is large. The proof appears in the Appendix.

\begin{claim}\label{claim: embed an expander treewidth}
  Let $X$ be an $\alpha$-expander on $\kappa$ vertices for $\alpha<1$,
  with maximum vertex degree $\Delta'$, and let $H$ be a graph with
  maximum vertex degree at most $\Delta$, such that that there is an
  embedding of $X$ into $H$ with congestion $\eta$. Then $\tw(H) =
  \Omega\left (\frac{\alpha\kappa}{\eta\Delta\Delta'}\right )$.
\end{claim}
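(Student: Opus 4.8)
The plan is to reduce this claim to Theorem~\ref{thm: weak well-linkedness to tw}, which says that a well-linked terminal set in a bounded-degree graph certifies large treewidth. So the goal is to find a suitably large terminal set $\tset \subseteq V(H)$ that is $\alpha'$-well-linked in $H$ for a parameter $\alpha'$ that we can track. The natural choice is to let $\tset$ consist of one ``representative'' vertex from each of the images $C_v$, $v\in V(X)$, of the expander vertices; this gives $|\tset| = \kappa$. Actually, a cleaner route that avoids choosing representatives: work directly with the subgraph $H' = \bigcup_v C_v \cup \bigcup_e P_e$ of $H$ (the image of the embedding), since $\tw(H') \le \tw(H)$, and contract each connected subgraph $C_v$ to a single supernode, obtaining a graph $\hat H$ that contains $X$ as a subgraph (possibly with extra parallel edges coming from the paths $P_e$, and with the path interiors subdivided). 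Then bound $\tw(H)$ from below via $\tw(\hat H)$ and the expansion of $X$.

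First I would argue that any cut $(A,B)$ of $V(H')$ induces, via the embedding, a corresponding cut of $V(X)$: put $v\in V(X)$ on the side containing the majority (break ties arbitrarily) of $C_v \cap (A\cup B)$, or more simply, since each $C_v$ is connected, at least one endpoint configuration must be respected — the point is that if $C_v$ is entirely on one side and $C_u$ entirely on the other, the path $P_e$ for $e=(u,v)$ must cross the cut, contributing an edge of $E_{H'}(A,B)$, and the congestion bound $\eta$ means each edge of $H'$ is charged by at most $\eta-1$ such paths plus at most one subgraph $C_v$. So $|E_{H'}(A,B)| \ge \frac{1}{\eta}\cdot (\text{number of edges of } X \text{ cut by the induced partition of } V(X))$. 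Using $\alpha$-expansion of $X$, the latter is at least $\alpha \cdot \min(|A_X|,|B_X|)$ where $(A_X,B_X)$ is the induced partition. The remaining step is to relate $\min(|A_X|,|B_X|)$ to $\min(|A\cap\tset|,|B\cap\tset|)$ for the terminal set $\tset = \{t_v : v\in V(X)\}$ of chosen representatives; here one needs that each $C_v$ lies wholly on one side of the cut when we restrict to the well-linkedness test — which we can enforce by only testing partitions in the definition that do not split any $C_v$, i.e. first prove well-linkedness of $\tset$ in the minor $\hat H$ and then transfer down. So: $\tset$ is $\alpha$-well-linked in $\hat H$, $\hat H$ has max degree $O(\eta\Delta')$ (each supernode $\hat v$ absorbs the $\le \Delta'$ edges of $X$ at $v$, each blown up by congestion into paths through $H$, but the degree of $\hat v$ in $\hat H$ is at most $\eta\Delta'$ roughly), and the subdivided path-interior vertices have degree $2$ so are irrelevant for Theorem~\ref{thm: weak well-linkedness to tw} after suppression. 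Then Theorem~\ref{thm: weak well-linkedness to tw} gives $\tw(\hat H) = \Omega(\alpha \kappa / (\eta \Delta'))$, and finally $\tw(H) \ge \tw(H') \ge \tw(\hat H)$ — wait, contraction can increase treewidth, so I must instead show $\tw(H) = \Omega(\tw(\hat H)/\Delta)$: going from $\hat H$ back to $H'$, each supernode $\hat v$ was obtained by contracting a connected subgraph $C_v$ of $H$, and since $H$ has degree $\le \Delta$, a standard argument (uncontracting a connected piece while replacing a bag containing $\hat v$ by the same bag plus a tree decomposition of $C_v$ — actually the standard fact is that if $\hat H$ is a minor of $H$ obtained by contracting connected sets each inducing a subgraph of bounded... ) shows the treewidth drops by at most a factor depending on $\Delta$; cleanest is to directly build $\tset$ inside $H$ and show it is $\Theta(\alpha/\eta)$-well-linked in $H$ itself, then apply Theorem~\ref{thm: weak well-linkedness to tw} with that well-linkedness parameter, degree $\Delta$, and $\kappa$ terminals, yielding $\tw(H) = \Omega(\alpha\kappa/(\eta\Delta\Delta'))$ once we also account for the $\Delta'$ loss incurred because a single edge of $H$ may need to serve up to $\Delta'$ distinct expander edges incident to a common $C_v$ when we route the well-linkedness paths.

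The main obstacle I expect is the bookkeeping that simultaneously produces all three factors $\eta$, $\Delta$, $\Delta'$ correctly: the $\eta$ comes from congestion when translating cut-edges of $X$ to cut-edges of $H$; the $\Delta$ comes from Theorem~\ref{thm: weak well-linkedness to tw} applied in $H$; and the $\Delta'$ arises because when we convert the ``$X$ is an expander'' guarantee (which gives edge-disjoint-ish routing inside $X$) into vertex-disjoint well-linked routing among the representatives $t_v$ in $H$, multiple edges of $X$ at a vertex $v$ must all be realized through the single connected blob $C_v$, and the best general bound on how many near-disjoint paths can pass through $C_v$ loses a $\Delta'$ factor (or one routes through $C_v$ with congestion $\Delta'$ and then converts congestion to well-linkedness with the corresponding loss). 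Handling this carefully — likely by first establishing $\alpha$-well-linkedness (edge version) of $\tset$ in $H$ with the $1/\eta$ loss, then invoking Theorem~\ref{thm: weak well-linkedness to tw} whose statement already only needs edge-well-linkedness, so that $\Delta'$ enters only through $|\tset|=\kappa$ and the degree of supernodes — is where the real care is needed; everything else (connectivity of $C_v$, degree-$2$ suppression of path interiors, $\tw$ monotonicity under subgraphs) is routine.
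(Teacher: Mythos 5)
Your overall plan coincides with the paper's: pick one representative $t_v$ in each cluster $C_v$, show that the resulting set $\tset$ of $\kappa$ terminals is (edge-)well-linked in $H$ with parameter roughly $\alpha/(\eta\Delta')$, and then invoke Theorem~\ref{thm: weak well-linkedness to tw} with maximum degree $\Delta$. The genuine gap is in the one step you postpone: cuts $(A,B)$ of $V(H)$ that \emph{split} some clusters $C_v$. You cannot ``only test partitions that do not split any $C_v$'' --- the definition of well-linkedness quantifies over all partitions of $V(H)$ --- and the alternative of proving well-linkedness in the contracted graph $\hat H$ and ``transferring down'' begs the same question, since splitting cuts of $H$ have no counterpart in $\hat H$. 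Concretely, your inequality $|E_{H'}(A,B)|\geq \frac{1}{\eta}\cdot(\mbox{number of cut edges of }X)$ fails once clusters are split: for a cut edge $e=(u,v)$ of the induced partition with $C_v$ split, the path $P_e$ may lie entirely on one side. Nor can you simply discard such edges: up to $\Theta(\kappa')$ vertices of $X$ (where $\kappa'$ is the smaller terminal side) may sit in split clusters, and they can be incident to up to $\Delta'\kappa'$ cut edges of $X$, which swamps the expansion guarantee $\alpha\kappa'/2$ since $\alpha<1$.

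The paper's resolution is a two-case argument that your write-up gestures at (``plus at most one subgraph $C_v$'') but never carries out. If at least half of the representatives on the smaller side lie in split clusters, then, because each $C_v$ is connected and the clusters are edge-disjoint (congestion $1$ on the $C_v$'s), each split cluster contributes a \emph{distinct} edge to $E_H(A,B)$, so the cut already has size at least $\kappa'/2$. Otherwise, at least $\kappa'/2$ representatives on \emph{each} side lie in wholly contained clusters, and one does not count cut edges of $X$ at all: instead one routes in $X$ between the two corresponding vertex sets $U_A,U_B$, getting $\geq\kappa'/2$ paths with edge-congestion $1/\alpha$, scales by $\alpha/\Delta'$ and uses flow integrality to extract $\Omega(\alpha\kappa'/\Delta')$ \emph{node-disjoint} paths in $X$; node-disjointness is what guarantees that their images in $H$ (which pass through intermediate clusters) have edge-congestion at most $\eta$, and each image must cross the cut since its endpoints' clusters lie wholly on opposite sides. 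This is exactly where the $\Delta'$ factor enters, as you suspected, but the routing/flow-scaling step (rather than cut-edge counting) is essential. Two smaller corrections to the route you abandoned: contraction cannot increase treewidth (treewidth is minor-monotone), so that worry was unfounded; on the other hand the degree of a supernode of $\hat H$ is \emph{not} bounded by $O(\eta\Delta')$, since paths $P_e$ for edges of $X$ not incident to $v$ may enter and leave $C_v$ many times.
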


\fi

\iffull In this section we define some graph-theoretic notions and
summarize some previous results that we use in the proof of
Theorem~\ref{thm: main-topological-minor}. We also define a
combinatorial object that plays a central role in the proof --- the
path-of-sets system from \cite{CC13-grid}.

Given a graph $G=(V,E)$ and a set $A\subseteq V$ of vertices, we
denote by $E_G(A)$ the set of edges with both endpoints in $A$, and by
$\out_G(A)$ the set of edges with exactly one endpoint in $A$.  For
disjoint sets of vertices $A$ and $B$, the set of edges with one end
point in $A$ and the other in $B$ is denoted by $E_G(A,B)$. For a
vertex $v$ in a graph $G$ we use $d_G(v)$ to denote its degree. We may
omit the subscript $G$ if it is clear from the context. Given a set
$\pset$ of paths in $G$, we denote by $V(\pset)$ the set of all
vertices participating in paths in $\pset$, and similarly, $E(\pset)$
is the set of all edges that participate in paths in $\pset$. We
sometimes refer to sets of vertices as \emph{clusters}. A path $P$ in
a graph $G$ is a \emph{$2$-path} iff every inner vertex $v$ in $P$ has
$d_G(v)=2$. It is a \emph{maximal $2$-path} iff the degrees of the
endpoints of $P$ are both different from $2$. Given a set $\pset$ of
paths, we denote by $J(\pset)$ the graph obtained by the union of all
the paths in $\pset$. Given a graph $H$, let $\tau(H)$ denote the number of vertices of $H$ whose degree is more than $2$ in $H$.

We now define the notion of linkedness and the different notions of
well-linkedness that we use.

\begin{definition}
  We say that a set $\tset$ of vertices is
  $\alpha$-well-linked\footnote{This notion of well-linkedness is
    based on edge-cuts and we distinguish it from node-well-linkedness
    that is directly related to treewidth. For technical reasons it is
    easier to work with edge-cuts and hence we use the term well-linked to mean
    edge-well-linkedness, and explicitly use the term
    node-well-linkedness when necessary.} in $G$, iff for any
  partition $(A,B)$ of the vertices of $G$ into two subsets,
  $|E(A,B)|\geq \alpha\cdot \min\set{|A\cap \tset|,|B\cap \tset|}$.
\end{definition}

\begin{definition}
  We say that a set $\tset$ of vertices is \emph{node-well-linked} in
  $G$, iff for any pair $(\tset_1,\tset_2)$ of equal-sized subsets of
  $\tset$, there is a collection $\pset$ of $|\tset_1|$ {\bf
    node-disjoint} paths, connecting the vertices of $\tset_1$ to the
  vertices of $\tset_2$. (Note that $\tset_1$, $\tset_2$ are not
  necessarily disjoint, and we allow empty paths).
\end{definition}

The two different notions of well-linkedness are closely related.  In
particular, suppose $\tset$ is $\alpha$-well-linked in a graph $G$ of
maximum degree $\Delta$. Then there is a large subset $\tset'\subseteq
\tset$ of vertices that is node-well-linked in $G$, as shown in the
following theorem.

 \begin{theorem}[Theorem 2.2 in~\cite{CC13-grid}]
   \label{thm: grouping}
   Suppose we are given a connected graph $G=(V,E)$ with maximum
   vertex degree $\Delta$, and a subset $\tset$ of $\kappa$ vertices
   called terminals, such that $\tset$ is $\alpha$-well-linked in $G$,
   for some $\alpha<1$. Then there
   is a subset $\tset'\subset \tset$ of
   $\Omega\left(\frac{\alpha\kappa }{\Delta}\right )$ terminals, such
   that $\tset'$ is node-well-linked in $G$.
\end{theorem}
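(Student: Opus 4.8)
My plan is to use the ``grouping'' technique: bundle the terminals of $\tset$ into groups, each of which is connected inside $G$ through a small, essentially disjoint tree, keep a single representative per group, and argue that the set $\tset'$ of representatives is node-well-linked. The point is that any vertex cut separating two equal halves of $\tset'$ can be turned, using the degree bound, into an edge cut of $G$ that is too cheap for $\tset$ to be $\alpha$-well-linked. Concretely, via Menger's theorem it is enough to produce $\tset'$ with $|\tset'|=\Omega(\alpha\kappa/\Delta)$ such that for every pair of disjoint equal-sized $\tset_1',\tset_2'\subseteq\tset'$ of size $p$ there is no set $C\subseteq V(G)$ with $|C|<p$ whose deletion puts $\tset_1'\setminus C$ and $\tset_2'\setminus C$ in different components. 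And if such a $C$ existed, then letting $X$ be the union of the components on the $\tset_1'$ side, we would get an edge cut with $|E_G(X,V(G)\setminus X)|\le\Delta|C|$, which by $\alpha$-well-linkedness of $\tset$ must be at least $\alpha\cdot\min\{|X\cap\tset|,\,|(V(G)\setminus X)\cap\tset|\}$.

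For the construction I would fix a spanning tree $T$ of the (connected) graph $G$ and a parameter $h=\Theta(1/\alpha)$, and peel groups off $T$ bottom-up: whenever an accumulated subtree first collects $h$ not-yet-assigned terminals, I cut it off as a group $U_i$ with connecting subtree $T_i$ (pruned so all its leaves are terminals), which can be arranged so that the $T_i$ are pairwise node-disjoint and $h\le|U_i|<h\Delta$ for every peeled group. This yields $m=\Omega(\kappa/(h\Delta))=\Omega(\alpha\kappa/\Delta)$ groups; I pick $t_i\in U_i$ and set $\tset'=\{t_i\}$. Now given a hypothetical separator $C$ with $|C|<p$ for two halves $\tset_1',\tset_2'$ of $\tset'$: since the $T_i$ are node-disjoint, $C$ meets at most $O(|C|)$ of them, so for all but $O(|C|)$ indices $i$ with $t_i\in\tset_1'$ the whole group $U_i$ lies on the $X$ side (it is spanned by a tree $T_i$ avoiding $C$ and containing $t_i$), contributing $\ge(p-O(|C|))\,h$ terminals there, and symmetrically on the other side; the edge-cut inequality then reads $\alpha\,(p-O(|C|))\,h\le\Delta|C|$.

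The catch --- and this is the step I expect to be the real obstacle --- is that this last inequality only yields $|C|=\Omega(p/\Delta)$, not $|C|\ge p$: with the group size forced down to $\Theta(1/\alpha)$ (any larger and $m$ falls below $\Omega(\alpha\kappa/\Delta)$) a single pass certifies only that $\tset'$ is ``$\Omega(1/\Delta)$-node-well-linked'', i.e.\ every pair of halves gets only a $1/\Delta$-fraction of the required node-disjoint paths, and the lost $\Delta$ factor is intrinsic to passing from a vertex cut to an edge cut, so iterating the same argument never reaches genuine node-well-linkedness. The missing ingredient is an amplification step that upgrades an $\Omega(1/\Delta)$-well-linked set of the right size (or, more comfortably, one of size $\Omega(\alpha\kappa)$, which the same peeling gives if one only insists the $T_i$ be edge-disjoint) to a truly node-well-linked subset of size $\Omega(\alpha\kappa/\Delta)$. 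I would try to obtain it from the tight correspondence between edge-well-linkedness, treewidth, and node-well-linked sets: $\alpha$-well-linkedness of $\tset$ in a degree-$\Delta$ graph already forces $\tw(G)=\Omega(\alpha\kappa/\Delta)$, and what one then needs is a $\tset$-respecting converse that, from treewidth $k$, extracts a node-well-linked \emph{subset of $\tset$} of size $\Omega(k)$ --- for instance by producing a bramble of order $\Omega(k)$ all of whose sets are anchored at terminals of $\tset$. Carrying out that $\tset$-respecting extraction with the right constants is where the bulk of the work lives; the grouping above is what reduces the problem to a regime where such an argument has room to operate.
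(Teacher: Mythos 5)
Your proof is incomplete, and you say so yourself; the gap you flag is exactly the content of the theorem. The grouping step is fine: peeling a spanning tree of a connected degree-$\Delta$ graph into node-disjoint subtrees, each holding between $h$ and $h\Delta$ terminals, does give $\Omega(\kappa/(h\Delta))$ representatives, and comparing a hypothetical vertex cut $C$ with the edge cut it induces, $|E_G(X,V\setminus X)|\le \Delta|C|$, is the right way to bring in the degree bound and the $\alpha$-well-linkedness of $\tset$. But the resulting inequality $\alpha\,(p-O(|C|))\,h\le \Delta|C|$ can never certify $|C|\ge p$ for \emph{any} choice of $h$: with your $h=\Theta(1/\alpha)$ it gives only $|C|=\Omega(p/\Delta)$, and even taking $h=c\Delta/\alpha$ (which still leaves $\Omega(\alpha\kappa/\Delta)$ groups) it gives only $|C|\ge \frac{c}{c+1}\,p$. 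So a single pass of this argument yields an \emph{approximately} node-well-linked set, and node-well-linkedness in the theorem is exact --- every pair of equal-sized subsets must be joined by a full set of node-disjoint paths. That last mile is not a technicality one can wave at: the amplification you propose (go through $\tw(G)=\Omega(\alpha\kappa/\Delta)$ and extract a node-well-linked subset \emph{of $\tset$} of proportional size, say via a bramble anchored at terminals) is itself essentially the statement being proved; the treewidth-to-well-linkedness correspondence quoted in this paper produces a node-well-linked set somewhere in $G$, not inside a prescribed terminal set, and making it $\tset$-respecting with only an $O(\Delta/\alpha)$ loss is the hard part you leave untouched.

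For the comparison you were asked about: this paper contains no proof of the statement at all --- it is imported verbatim as Theorem 2.2 of the cited grid-minor paper [CC13-grid] --- so there is nothing internal to match your argument against. Your grouping step does coincide with the standard first move used in that line of work, but the cited proof does not stop where yours does; it supplements the grouping with a further boosting/linkedness argument so that the lossy factor-$\Delta$ vertex-to-edge-cut comparison is not the final step. As written, your proposal is an honest reduction of the theorem to its genuinely difficult kernel, together with an acknowledgement that the kernel is unproved; it is not a proof.
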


The following well-known lemma summarizes an important connection
between treewidth and node-well-linkedness.
\begin{lemma}[\cite{Reed-chapter}]
  \label{lem:tw-wl}
  Let $k$ be the size of the largest node-well-linked set in $G$. Then
  $k \le \tw(G) \le 4k$.
\end{lemma}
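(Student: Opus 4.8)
The plan is to prove the two inequalities separately. For the lower bound $k\le\tw(G)$, I would show that if $\tset$ is a node-well-linked set of size $k$, then no tree decomposition of $G$ can have width less than $k-1$; equivalently, every tree decomposition has a bag of size at least $k$. Take any tree decomposition $(T,\{X_v\})$ of $G$. Using the standard ``balanced separator'' property of tree decompositions, one finds an edge $e=(a,b)$ of $T$ whose removal splits $T$ into two subtrees $T_a,T_b$, such that if $V_a,V_b$ denote the vertices of $G$ appearing only in bags of $T_a$ (resp.\ $T_b$), together with the shared bag $X_a\cap X_b$, then each side contains at least $\lceil k/2\rceil$ vertices of $\tset$ that are not in $X_a\cap X_b$ — this is the usual argument that some bag or some adhesion set ``balances'' $\tset$. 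Since $X_a\cap X_b$ separates $V_a\setminus X_b$ from $V_b\setminus X_a$ in $G$, and since node-well-linkedness of $\tset$ gives $\lceil k/2\rceil$ node-disjoint paths between any two equal-sized halves $\tset_1\subseteq V_a\cap\tset$ and $\tset_2\subseteq V_b\cap\tset$, each such path must pass through $X_a\cap X_b$; hence $|X_a\cap X_b|\ge\lceil k/2\rceil$. One has to be slightly careful to choose the halves disjointly from the separator, but a counting argument (at most $|X_a\cap X_b|$ terminals lie in the separator, so if $|X_a\cap X_b|<k/2$ there are still enough terminals strictly on each side) closes this; a cleaner route is the classical fact that a node-well-linked set of size $k$ is in particular a set with no balanced separator of size $<k$, which directly forces some bag to have size $\ge k$, i.e.\ width $\ge k-1$, hence $\tw(G)\ge k-1$; to get the stated $\tw(G)\ge k$ one uses the slightly stronger ``bramble''/``well-linked'' formulation. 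I would cite \cite{Reed-chapter} for the precise form and reproduce only the separator-meets-every-path step.

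For the upper bound $\tw(G)\le 4k$, the plan is to construct a tree decomposition of width $O(k)$ recursively, in the spirit of the standard ``build a decomposition from balanced separators'' argument. The key point is: if every node-well-linked set has size at most $k$, then $G$ has no large well-linked set, and in particular for every vertex subset $W$ there is a balanced separator of size $\le k$ with respect to $W$ (otherwise $W$ would contain, or one could extract from $W$ via Menger's theorem, a node-well-linked subset larger than $k$, a contradiction). Then one runs the textbook recursion: maintain a ``current'' piece together with a boundary set of size $\le 3k$ (say), find a balanced separator $Z$ of size $\le k$ of the boundary, add boundary $\cup\,Z$ as a bag, and recurse on each side with new boundary of size $\le 2k+k=3k$, shrinking the number of vertices geometrically so the recursion terminates. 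This yields bags of size $\le 4k$, hence width $\le 4k-1<4k$. The details — the exact bookkeeping of boundary sizes and the ``balanced'' parameter — are routine and I would again defer to \cite{Reed-chapter}, quoting the construction and only checking that ``no node-well-linked set of size $>k$'' is exactly the hypothesis that makes every balanced separator small.

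The main obstacle is the implication ``no large node-well-linked set $\Rightarrow$ small balanced separators everywhere,'' which is the heart of the upper bound. This is not quite immediate: given a set $W$ with no small balanced separator, one needs to produce from it an actual node-well-linked subset of size exceeding $k$, and this extraction (essentially a Menger/LP-duality argument, or the bramble duality theorem) is where the constant $4$ comes from. I would handle it by invoking the known equivalence between ``largest node-well-linked set,'' ``largest bramble order,'' and ``treewidth $+1$'' up to constant factors (as packaged in \cite{Reed-chapter}), rather than reproving it; the lemma as stated is precisely a convenient quantitative corollary of that theory, so the ``proof'' is really a pointer to \cite{Reed-chapter} plus the one-line separator argument for the easy direction. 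If a self-contained argument is wanted, the bramble-duality theorem of Seymour and Thomas is the tool of choice, and its use would be flagged as the single nontrivial ingredient.
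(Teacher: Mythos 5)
The paper gives no proof of this lemma at all --- it is imported verbatim from \cite{Reed-chapter} --- and your proposal is essentially the same treatment: the easy direction by the separator-meets-disjoint-paths argument, and the upper bound by deferring the balanced-separator/bramble-duality machinery to that same reference. This matches how the paper handles the statement, so nothing further is needed (the only caveat, which you already flag yourself, is that the exact additive constant in the lower bound depends on the precise formulation of node-well-linkedness used in \cite{Reed-chapter}).
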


Combining Theorem~\ref{thm: grouping} with Lemma~\ref{lem:tw-wl}, we
obtain the following theorem.

\begin{theorem}\label{thm: weak well-linkedness to tw}
  Let $G$ be any graph with maximum vertex degree $\Delta$, and
  $\tset$ a subset of $\kappa$ vertices, such that $\tset$ is
  $\alpha$-well-linked in $G$, for $\alpha< 1$. Then the treewidth of
  $G$ is $\Omega(\alpha\kappa/\Delta)$.
\end{theorem}

A notion closely related to well-linkedness is that of linkedness,
where we require good connectivity between a pair of disjoint vertex subsets.

\begin{definition}
  We say that two disjoint vertex subsets $A$ and $B$ are \emph{linked} in
  $G$ iff for any pair of equal-sized subsets $A'\subseteq A$,
  $B'\subseteq B$ there is a set $\pset$ of $|A'|$ node-disjoint
  paths connecting $A'$ to $B'$ in $G$.
\end{definition}

\paragraph{Path-of-Sets System}
A central combinatorial object that we use in the proof of
Theorems~\ref{thm: main-topological-minor} is a path-of-sets system,
that was introduced in~\cite{CC13-grid} (a somewhat similar object,
called a grill, was introduced by Leaf and
Seymour~\cite{LeafS12}). See Figure~\ref{fig:pos}.

\begin{definition}
  A path-of-sets system $(\sset,\bigcup_{i=1}^{r-1}\pset_i)$ of width
  $r$ and height $h$ consists of:

\begin{itemize}
\item A sequence $\sset=(S_1,\ldots,S_r)$ of $r$ disjoint vertex
  subsets of $G$, where for each $i$, $G[S_i]$ is connected;

\item For each $1\leq i\leq r$, two disjoint sets $A_i,B_i\subseteq
  S_i$ of $h$ vertices each, such that $A_i$ and $B_i$ are linked in
  $G[S_i]$;
\item For each $1\leq i<r$, a set $\pset_i$ of $h$ disjoint paths,
  routing $(B_i,A_{i+1})$, such that all paths in $\bigcup_i\pset_i$
  are mutually disjoint, and do not contain the vertices of
  $\bigcup_{S_i\in \sset}S_i$ as inner vertices,
\end{itemize}

We say that it is a \emph{strong} path-of-sets system, if additionally
for each $1\leq i\leq r$, $A_i$ is node-well-linked in $G[S_i]$, and
the same holds for $B_i$.
\end{definition}

\begin{figure}[htb]
  \centering
  \includegraphics[width=6in]{pos}
  \caption{Path-of-Sets System}
  \label{fig:pos}
\end{figure}

The following theorem, that was proved in~\cite{CC13-grid}, is the
starting point of the proof of Theorem~\ref{thm:
  main-topological-minor}.

\begin{theorem}[Theorem 3.2 in~\cite{CC13-grid}] 
\label{thm: strong PoS system}
Let $G$ be any graph of treewidth $k$, and let $h,r>1$ be integral
parameters, such that for some large enough constants $c$ and $c'$,
$k/\log^{c'}k>chr^{48}$. Then there is an efficient randomized
algorithm, that, given $G,h$ and $r$, w.h.p. computes a strong
path-of-sets system of height $h$ and width $r$ in $G$.
\end{theorem}

{\bf Expanders and the Cut-Matching Game.}
We say that a (multi)-graph $G=(V,E)$ is an $\alpha$-expander, iff
$\min_{\stackrel{S\sse V:}{|S|\leq
    |V|/2}}\set{\frac{|E(S,\nots)|}{|S|}}\geq \alpha$.  We use the
cut-matching game of Khandekar, Rao and Vazirani~\cite{KRV} to
construct an expander that can be appropriately embedded in a
graph. In this game, we are given a set $V$ of $N$ vertices, where $N$
is even, and two players: a cut player, whose goal is to construct an
expander $X$ on the set $V$ of vertices, and a matching player, whose
goal is to delay its construction. The game is played in
iterations. We start with the graph $X$ containing the set $V$ of
vertices, and no edges.  In each iteration $j$, the cut player
computes a bi-partition $(A_j,B_j)$ of $V$ into two equal-sized sets,
and the matching player returns some perfect matching $M_j$ between
the two sets. The edges of $M_j$ are then added to $X$. Khandekar, Rao
and Vazirani have shown that there is a strategy for the cut player,
guaranteeing that after $O(\log^2N)$ iterations, no matter the
strategy of the matching player, the resulting graph is 
a $\half$-expander w.h.p. Subsequently, Orecchia et
al.~\cite{better-CMG} have shown the following improved bound:


\begin{theorem}[\cite{better-CMG}]\label{thm: CMG}
  There is a probabilistic algorithm for the cut player, such that, no
  matter how the matching player plays, after $\gkrv(N)=O(\log^2N)$
  iterations, graph $X$ is an $\alphaCMG(N)=\Omega(\log N)$-expander,
  with constant probability.
\end{theorem}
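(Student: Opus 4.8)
The plan is to reconstruct the cut-matching game argument of Khandekar, Rao and Vazirani~\cite{KRV}, together with the refinement of Orecchia et al.~\cite{better-CMG} that boosts the guaranteed expansion from $\Omega(1)$ to $\Omega(\log N)$. Let $X_t$ denote the graph after $t$ rounds, with $X_0$ edgeless, and let $W_t$ be the lazy random-walk operator obtained by composing the ``averaging along $M_j$'' maps for $j\le t$. I would track a potential measuring how far $X_t$ is from mixing. For the basic bound, use the Frobenius-type quantity $\Phi_t=\bigl\|W_t-\tfrac1N\mathbf 1\mathbf 1^\top\bigr\|_F^2=\sum_i\bigl\|u_i^{(t)}\bigr\|^2$, where $u_i^{(t)}$ is the distribution reached from $i$ after the first $t$ matchings, minus the uniform distribution; thus $u_i^{(0)}=e_i-\tfrac1N\mathbf 1$, and a matching $M_t$ pairing $i$ with $\pi(i)$ replaces both $u_i^{(t-1)}$ and $u_{\pi(i)}^{(t-1)}$ by their average. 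Then $\Phi_0=N-1$, and $\Phi_t$ is non-increasing because averaging shrinks a sum of squares: $\Phi_{t-1}-\Phi_t=\tfrac12\sum_{(i,j)\in M_t}\bigl\|u_i^{(t-1)}-u_j^{(t-1)}\bigr\|^2\ge 0$. A Cheeger/spectral argument shows that once $\Phi_t$ falls below a suitable threshold, the relevant eigenvalue gap of $W_t$ is bounded away from $0$, hence $X_t$ is an expander. For the stronger conclusion one replaces $\Phi_t$ by the heat-kernel potential $\mathrm{Tr}\bigl(\exp(-\eta L_{X_t})\bigr)$ with $\eta=\Theta(1/\log N)$, whose sub-threshold value certifies a spectral gap --- and hence expansion --- of order $\log N$.

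Next I would specify and analyse the cut player's move. Diagonalise the current potential operator and project the vertices onto its dominant eigenspace; for the basic potential this means taking the top eigenvector $r$ of $\Sigma_t=\sum_i u_i^{(t)}\bigl(u_i^{(t)}\bigr)^\top$, whose trace is $\Phi_t$, so that the coordinates $d_i=\langle u_i^{(t)},r\rangle$ satisfy $\sum_i d_i^2\ge\Phi_t/N$. Sort the vertices by $d_i$ and let $(A_{t+1},B_{t+1})$ be the balanced split into the low and high halves, located by an ARV-style random-hyperplane rounding. The key lemma is that, \emph{whatever perfect matching $M_{t+1}$ between $A_{t+1}$ and $B_{t+1}$ the adversary returns}, the potential drops multiplicatively, $\Phi_{t+1}\le\bigl(1-\Omega(1/\log N)\bigr)\Phi_t$. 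Here $\Phi_t-\Phi_{t+1}\ge\tfrac12\sum_{(i,j)\in M_{t+1}}(d_i-d_j)^2$, and a balanced cut separating the low coordinates from the high ones cannot have all of its matched gaps tiny; a dyadic bucketing of the coordinates by scale is what accounts for the $1/\log N$ factor. Iterating, after $\gkrv(N)=O(\log^2 N)$ rounds the potential is below threshold, and via the chosen conversion $X$ is an $\alphaCMG(N)=\Omega(\log N)$-expander. Note that $X$ automatically has maximum degree $\gkrv(N)=O(\log^2 N)$, since each round adds one edge at every vertex. For the $\Omega(\log N)$ bound the same outline is executed with the heat-kernel potential, choosing the cut from its leading eigenvectors and estimating the effect of adding $M_{t+1}$ by a second-order expansion of $\mathrm{Tr}(\exp(\cdot))$.

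The technical heart, and the step I expect to be the main obstacle, is the tight per-round decrease lemma against an \emph{adaptive} matching adversary, and in particular the variant yielding $\Omega(\log N)$ rather than $\Omega(1)$ expansion. For the basic bound one must rule out the adversary concentrating every matched coordinate gap near the separator; controlling this requires projecting onto the dominant eigenspace (not an arbitrary direction) and the dyadic scale argument, which is precisely where the extra $\log N$ enters. The ``constant probability'' in the statement comes from the randomness of the ARV-style rounding used to locate a near-optimal balanced cut in each round --- a single round fails with probability $1/\mathrm{poly}(N)$, and a union bound over the $O(\log^2 N)$ rounds (or, equivalently, a bounded-difference supermartingale / matrix-Azuma argument forcing the realised potential to track its expectation) yields the claimed success probability. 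For the improved bound the obstacle is analytic: one bounds the first-order drift of $\mathrm{Tr}(\exp(-\eta L_{X_{t+1}}))$ in terms of the quality of the cut while controlling the second-order term using $\|M_{t+1}\|\le 2$, which forces $\eta=\Theta(1/\log N)$ and produces the $\Omega(\log N)$ expansion. Everything else --- monotonicity of the potential, the Cheeger-type threshold-to-expansion conversion, and counting the rounds --- is routine.
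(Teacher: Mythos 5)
There is no in-paper proof to compare against: Theorem~\ref{thm: CMG} is imported as a black box from the cut-matching-game literature --- the $O(\log^2 N)$-round bound with constant expansion is due to Khandekar, Rao and Vazirani \cite{KRV}, and the strengthening to expansion $\Omega(\log N)$ is Orecchia et al.\ \cite{better-CMG} --- and the present paper only uses the statement, never reproves it. Judged as a reconstruction of the cited result, your outline names the right framework (the random-walk/Frobenius potential $\Phi_t$ for the basic bound, and a matrix-exponential potential with $\eta=\Theta(1/\log N)$ for the $\Omega(\log N)$ bound), but it is not a proof: the two claims you yourself single out as ``the technical heart'' --- the per-round multiplicative potential drop against an adaptive matching player, and the second-order control of $\mathrm{Tr}(\exp(-\eta L))$ when a matching is added --- are precisely the content of the theorem, and in your write-up they are asserted rather than established.

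Two specific points would need repair even at the outline level. First, the key lemma as you state it (``whatever perfect matching the adversary returns, $\Phi_{t+1}\le(1-\Omega(1/\log N))\Phi_t$'' after projecting on a dominant eigenvector) is a deterministic per-round guarantee that neither \cite{KRV} nor \cite{better-CMG} proves; in those analyses the cut player projects onto a \emph{random} direction (there is no ARV-style SDP rounding in \cite{KRV}), and the $\bigl(1-\Omega(1/\log N)\bigr)$ decrease holds only in expectation (or with constant probability) over that round's randomness. Consequently the ``constant probability'' in the theorem comes from an expectation/supermartingale argument across rounds, not from a $1/\mathrm{poly}(N)$ per-round failure probability plus a union bound, which would presuppose a stronger per-round statement than is available. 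Second, the threshold-to-expansion conversion needs more care in the $\Omega(\log N)$ regime: the constructed graph has degree only $O(\log^2 N)$, so expansion $\Omega(\log N)$ is a constant fraction of the degree, and \cite{better-CMG} obtains it via a spectral-gap argument for the sum of matchings (with the second-order term controlled using $\|M_{t+1}\|\le 2$), not via the Cheeger step you sketch for the Frobenius potential. None of this affects the paper itself, which cites the theorem; but as a standalone proof your proposal currently has a genuine gap at exactly the steps that make the theorem true.
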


Our algorithms work by embedding an expander $X$ into a sub-graph of
$G$. The embedding of the expander is then used to certify the
treewidth. We use the following notion of embedding.

\begin{definition}
  Let $G,X$ be graphs. An embedding $\phi$ of $X$ into $G$ maps every
  vertex $v\in X$ to a connected subgraph $C_v\subseteq G$, and every
  edge $e=(u,v)\in E(X)$ to a path $P_e$ in graph $G$, whose endpoints
  belong to $C_v$ and $C_u$, respectively. We say that the
  \emph{congestion} of the embedding is at most $c$, iff every edge of
  $G$ belongs to at most $c-1$ paths in $\set{P_e\mid e\in E(X)}$ and
  at most one graph $\set{C_v\mid v\in V(X)}$.
\end{definition}

In the next simple claim, we show that if we can embed a
$\kappa$-vertex expander with congestion at most $c$ into a graph $H$
with bounded vertex degree, then the treewidth of $H$ is large.

\begin{claim}\label{claim: embed an expander treewidth}
  Let $X$ be an $\alpha$-expander on $\kappa$ vertices for $\alpha<1$,
  with maximum vertex degree $\Delta'$, and let $H$ be a graph with
  maximum vertex degree at most $\Delta$, such that that there is an
  embedding of $X$ into $H$ with congestion $\eta$. Then $\tw(H) = 
  \Omega\left (\frac{\alpha\kappa}{\eta\Delta\Delta'}\right )$.
\end{claim}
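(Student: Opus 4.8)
The plan is to reduce the claim about embedding an expander $X$ into $H$ to an application of Theorem~\ref{thm: weak well-linkedness to tw}, which says that a graph with maximum degree $\Delta$ containing an $\alpha$-well-linked set of size $\kappa$ has treewidth $\Omega(\alpha\kappa/\Delta)$. So the main task is to identify inside $H$ a suitably large vertex set that is well-linked with a reasonable parameter. First I would pick, for each vertex $v\in V(X)$, a representative vertex $u_v\in C_v\subseteq V(H)$, and let $\tset=\set{u_v\mid v\in V(X)}$, so $|\tset|=\kappa$. The embedding data (the connected subgraphs $\set{C_v}$ and the paths $\set{P_e}$) gives us, for every edge $e=(u,v)\in E(X)$, a path in $H$ connecting $u_u$ to $u_v$ (concatenate a path through $C_u$, then $P_e$, then a path through $C_v$), and these paths have congestion $O(\eta)$ on the edges of $H$ because each $C_v$ is used at most once and each edge of $H$ lies on at most $\eta-1$ of the $P_e$'s.

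Next I would verify that $\tset$ is $\Omega(\alpha/(\eta\Delta'))$-well-linked in $H$. Fix a partition $(A,B)$ of $V(H)$, and let $X_A=\set{v\in V(X)\mid u_v\in A}$, $X_B=\set{v\in V(X)\mid u_v\in B}$; assume w.l.o.g. $|X_A|\le|X_B|$, so $|X_A|\le\kappa/2$ and $\min\set{|A\cap\tset|,|B\cap\tset|}=|X_A|$. Since $X$ is an $\alpha$-expander, $|E_X(X_A,X_B)|\ge\alpha|X_A|$. Each edge $e\in E_X(X_A,X_B)$ corresponds to an $H$-path from a vertex of $A$ to a vertex of $B$, hence contributes at least one edge of $E_H(A,B)$; since these $H$-paths have congestion $O(\eta)$, the number of distinct edges of $E_H(A,B)$ they hit is at least $\Omega(\alpha|X_A|/\eta)$. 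Actually a cleaner accounting: the $|E_X(X_A,X_B)|$ expander-edges yield that many $H$-paths crossing the cut, and since each edge of $E_H(A,B)$ is used at most $O(\eta)$ times across all these paths, $|E_H(A,B)|\ge\Omega(\alpha|X_A|/\eta)$. This gives $\alpha$-well-linkedness of $\tset$ with parameter $\Omega(\alpha/\eta)$ — note no $\Delta'$ is needed for the well-linkedness step, though it will reappear only if we are not careful; the $\Delta'$ in the statement is slack coming from bounding $|E_X(X_A,X_B)|$ from below by the degree bound, but the expander inequality already directly gives $\alpha|X_A|$, so we in fact get parameter $\Omega(\alpha/\eta)$, which is stronger and certainly implies the claimed bound.

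Finally I would invoke Theorem~\ref{thm: weak well-linkedness to tw} with the set $\tset$, whose size is $\kappa$ and which is $\Omega(\alpha/\eta)$-well-linked in $H$, and $H$ has maximum degree $\Delta$, to conclude $\tw(H)=\Omega\!\left(\frac{\alpha\kappa}{\eta\Delta}\right)$, which is at least the claimed $\Omega\!\left(\frac{\alpha\kappa}{\eta\Delta\Delta'}\right)$ since $\Delta'\ge 1$. The one subtlety to handle carefully — and the step I expect to be the main (minor) obstacle — is the congestion bookkeeping when forming the $H$-paths between representatives: a single path $P_e$ together with walks inside $C_u$ and $C_v$ must be turned into a simple path, and one must make sure that the bound "each $H$-edge is used $O(\eta)$ times" still holds after this simplification and after summing over all $e\in E_X(X_A,X_B)$; the definition of congestion ($\eta-1$ on the $P_e$'s plus at most one $C_v$) is exactly tailored so that every edge of $H$ is used at most $\eta$ times in total, so the argument goes through cleanly. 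I would write the well-linkedness verification as the core lemma and then the treewidth bound as a one-line corollary of Theorem~\ref{thm: weak well-linkedness to tw}.
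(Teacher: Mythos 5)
Your overall strategy --- pick one representative $t_v\in C_v$ per expander vertex, show that the representative set is well-linked in $H$, and finish with Theorem~\ref{thm: weak well-linkedness to tw} --- is exactly the paper's. But the step you dismiss as clean bookkeeping is where the argument breaks, and it is precisely where the factor $\Delta'$ (which you declare to be ``slack'') has to be paid. The congestion guarantee of the embedding says that each edge of $H$ lies on at most $\eta-1$ of the paths $P_e$ and \emph{belongs to} at most one cluster $C_w$; it does not say that each edge is \emph{used} at most once by your family of representative-to-representative paths. For an expander edge $e=(u,v)$ crossing $(X_A,X_B)$, your $H$-path consists of a connector walk inside $C_u$, then $P_e$, then a connector walk inside $C_v$; all of the up to $\deg_X(w)\le\Delta'$ crossing expander edges incident to a single vertex $w$ route their connectors through the same cluster $C_w$ and can reuse the same edge of $C_w$ (for instance, if $C_w$ is a path with $t_w$ at one end and all attachment points of the paths $P_e$ at the other end, the edge of $C_w$ incident to $t_w$ lies on every one of these connectors). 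So the correct congestion of your path family is $(\eta-1)+\Delta'$, not $O(\eta)$, and your conclusion that $\tset$ is $\Omega(\alpha/\eta)$-well-linked independently of $\Delta'$ is unjustified.

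The good news is that your framework survives once the accounting is fixed: with congestion $(\eta-1)+\Delta'$ you get that $\tset$ is $\frac{\alpha}{\eta+\Delta'}$-well-linked, and Theorem~\ref{thm: weak well-linkedness to tw} then gives $\tw(H)=\Omega\bigl(\frac{\alpha\kappa}{\Delta(\eta+\Delta')}\bigr)\ge\Omega\bigl(\frac{\alpha\kappa}{\eta\Delta\Delta'}\bigr)$, which is the claim (indeed marginally stronger, with $\eta+\Delta'$ in place of $\eta\Delta'$). This corrected route is somewhat more direct than the paper's: the paper first splits the terminals according to whether their cluster $C_v$ is itself cut by $(A,B)$ (that case is charged to the ``one cluster per edge'' property), and then, for clusters lying entirely on one side, converts a congestion-$1/\alpha$ family of paths in $X$ into \emph{node-disjoint} paths in $X$, losing the factor $\alpha/\Delta'$ there, so that each cluster is traversed at most once and the embedding congestion $\eta$ applies verbatim. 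Either way the $\Delta'$ loss is real, not slack; the embedding definition simply does not control how many of your connector walks reuse an edge inside a single cluster, so you cannot claim the $\Delta'$-free bound without a separate argument.
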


\begin{proof}
  For each vertex $v\in V(X)$, let $t_v$ be an arbitrary vertex in
  $C_v$, and let $\tset=\set{t_v\mid v\in V(X)}$. Since $|X|=\kappa$,
  from Theorem~\ref{thm: weak well-linkedness to tw}, it is enough to
  show that $\tset$ is $\frac{\alpha}{2\eta\Delta'}$-well-linked in
  $H$. Let $(A,B)$ be any partition of $V(H)$, denote $\tset_A=A\cap
  \tset$, $\tset_B=B\cap \tset$, and $E'=E(A,B)$. Assume w.l.o.g. that
  $|\tset_A|\leq |\tset_B|$. Then it is enough to show that $|E'|\geq
  |\tset_A|\cdot \frac{\alpha}{2\eta\Delta'}$.

  We partition $\tset_A$ into two subsets, $\tset_A'$ and $\tset_A''$,
  as follows. For each vertex $t_v\in \tset_A$, if $C_v\subseteq A$,
  then we add $t_v$ to $\tset_A''$, and otherwise we add it to
  $\tset_A'$. We partition $\tset_B$ into two subsets, $\tset_B'$ and
  $\tset_B''$ similarly.  Let $\kappa'=|\tset_A|$. Assume first that
  $|\tset_A'|\geq \kappa'/2$. Then for each vertex $t_v\in \tset_A'$,
  at least one edge of $C_v$ belongs to $E'$. Since every edge of $H$
  may belong to at most one graph in $\set{C_u\mid u\in V(X)}$,
  $|E'|\geq |\tset_A'|\geq \kappa'/2$ must hold. Similarly, if
  $|\tset_B'|\geq \kappa'/2$, $|E'|\geq \kappa'/2$.

  Therefore, we assume from now on that
  $|\tset_A'|,|\tset_B'|<\kappa'/2$, and so
  $|\tset_A''|,|\tset_B''|\geq \kappa'/2$. Let $U_A=\set{v\in V(X)\mid
    t_v\in \tset_A''}$, and $U_B=\set{v\in V(X)\mid t_v\in
    \tset_B''}$. Since $X$ is an $\alpha$-expander, there is a set
  $\pset$ of at least $\kappa'/2$ paths connecting the vertices of
  $U_A$ to the vertices of $U_B$ in $X$, such that the edge-congestion
  of $\pset$ is at most $1/\alpha$. Since the maximum vertex degree in
  $X$ is $\Delta'$, by sending $\alpha/\Delta'$ flow units along each
  path in $\pset$, we obtain a flow from the vertices in $U_A$ to the
  vertices of $U_B$ of value at least
  $\frac{\kappa'\alpha}{2\Delta'}$, where the flow across each vertex
  is at most $1$. From the integrality of flow, there is a set
  $\pset'$ of at least $\frac{\kappa'\alpha}{2\Delta'}$ node-disjoint paths in
  $X$, where each path connects a vertex of $U_A$ to a vertex of
  $U_B$. We now build a new set $\pset^*$ of
  $\frac{\kappa'\alpha}{2\Delta'}$ paths in graph $H$, connecting the
  vertices of $\tset_A''$ to the vertices of $\tset_B''$, as follows.

  Consider some path $P\in \pset'$, and assume that its endpoints are
  $v$ and $u$, so $t_v\in \tset_A''$ and $t_u\in \tset_B''$. We build
  a new graph $H_P$, that includes, for every edge $e\in P$, the path
  $P_e$ into which $e$ is embedded, and for every vertex $v'\in P$,
  the sub-graph $C_{v'}$, where $v'$ is embedded into $C_{v'}$. It is
  easy to see that $H_P$ contains a path connecting $t_v$ to
  $t_u$. Let $P^*$ be any such path. We then set $\pset^*=\set{P^*\mid
    P\in \pset'}$. Since the paths in $\pset'$ are node-disjoint, and
  the embedding of $X$ into $H$ has congestion $\eta$, every edge of
  $H$ belongs to at most $\eta$ paths in $\pset^*$. Since every path
  in $\pset^*$ must contain an edge in $E'$, $|E'|\geq
  \frac{\kappa'\alpha}{2\eta\Delta'}$. We conclude that $\tset$ is
  $\frac{\alpha}{2\eta\Delta'}$-well-linked, and from
  Theorem~\ref{thm: weak well-linkedness to tw}, the treewidth of $H$
  is $\Omega\left (\frac{\alpha\kappa}{\eta\Delta\Delta'}\right )$.
\end{proof}
\fi

\label{--------------------------------------------sec: building the minor----------------------------------------------}
\section{A Small Treewidth-Preserving  Degree-4 Minor}\label{sec: top minor}


\iffull
In this section we prove the following theorem which gives a
degree-$4$ sparsifier.

\begin{theorem}\label{thm: degree-4 minor}
  There is a randomized algorithm, that, given a graph $G$ of
  treewidth at least $k$, w.h.p. computes a minor $H$ of $G$, such
  that:
\begin{itemize}
\item  the treewidth of $H$ is $\Omega(k/\poly\log k)$;  
\item every vertex has degree at most $4$ in $H$; and
\item $|V(H)|=O(k^4)$.
\end{itemize}
The running time of the algorithm is polynomial in $|V(G)|$ and $k$.
\end{theorem}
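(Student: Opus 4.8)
The plan is to combine the strong path-of-sets system of Theorem~\ref{thm: strong PoS system} with the cut-matching game of Khandekar--Rao--Vazirani~\cite{KRV} (or Theorem~\ref{thm: CMG}) to embed an expander into a subgraph of $G$, and then to use Theorem~\ref{thm: 2-flow-main} to shrink each cluster while preserving that embedding. Fix an even $h=\Theta(k/\poly\log k)$ and $r=\gkrv(h)=O(\log^2 h)$; since $r^{48}=\poly\log k$, we may take $h=\lfloor k/\log^{c''}k\rfloor$ for a large enough constant $c''$ so that the hypothesis $k/\log^{c'}k>chr^{48}$ of Theorem~\ref{thm: strong PoS system} holds. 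First I would run the algorithm of Theorem~\ref{thm: strong PoS system} to obtain, w.h.p., a strong path-of-sets system with clusters $S_1,\dots,S_r$, interface sets $A_i,B_i$ of size $h$, and connecting path sets $\pset_1,\dots,\pset_{r-1}$. I think of $h$ vertex-disjoint \emph{rails} threading the system from left to right; each rail enters cluster $S_j$ at a designated interface vertex $A_j[\mathrm{pos}_j(i)]$, where $\mathrm{pos}_j\colon[h]\to A_j$ is a bijection maintained inductively, starting from $\mathrm{pos}_1=\mathrm{id}$.

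Next I would run the cut-matching game on an abstract graph $X$ with vertex set $[h]$, devoting cluster $S_j$ to round $j$. In round $j$ the cut player, applied to the current $X$, returns an equal bipartition $(A'_j,B'_j)$ of $[h]$. Let $\tset_1=\set{A_j[\mathrm{pos}_j(i)]\colon i\in A'_j}$ and $\tset_2=\set{A_j[\mathrm{pos}_j(i)]\colon i\in B'_j}$; these are disjoint equal-size subsets of $A_j$, so node-well-linkedness of $A_j$ in $G[S_j]$ yields $h/2$ vertex-disjoint paths in $S_j$ connecting $\tset_1$ to $\tset_2$. Together with an arbitrary set of $h$ vertex-disjoint paths routing $(A_j,B_j)$ in $G[S_j]$ (which exists because $A_j,B_j$ are linked), we obtain two separately routable pairs in $G[S_j]$, the larger with $k_1=h$ vertices per side. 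Applying Theorem~\ref{thm: 2-flow-main} to $G[S_j]$ with these two pairs produces new routings $\pset'_j$ of $(A_j,B_j)$ and $\qset'_j$ of $(\tset_1,\tset_2)$ whose union $\hat S_j$ has at most $8h^4+8h$ vertices of degree at least $3$; after suppressing degree-$2$ vertices (keeping those of $A_j\cup B_j$), $\hat S_j$ is a topological minor of $G[S_j]$ on $O(h^4)$ vertices. The matching player's reply is the perfect matching $M_j$ between $A'_j$ and $B'_j$ defined by $\qset'_j$ --- $(i,i')\in M_j$ exactly when $\qset'_j$ joins $A_j[\mathrm{pos}_j(i)]$ to $A_j[\mathrm{pos}_j(i')]$. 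I add $M_j$ to $X$, update $\mathrm{pos}_{j+1}$ using the bijection $A_j\to B_j$ realized by $\pset'_j$ composed with the fixed bijection carried by $\pset_j$, and proceed. Since rail identities are anchored at the interface vertices $A_j[\mathrm{pos}_j(i)]$, which Theorem~\ref{thm: 2-flow-main} preserves, the sequence $M_1,\dots,M_r$ is a legal transcript for the matching player, so by~\cite{KRV} the final $X$ is w.h.p.\ a $\tfrac12$-expander with maximum degree $\Delta'\le r=O(\log^2 h)$ (Theorem~\ref{thm: CMG} would give $\Omega(\log h)$-expansion with constant probability).

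Then I would form $H$ as the vertex-disjoint union of $\hat S_1,\dots,\hat S_r$ together with, for each $j<r$, the connecting paths of $\pset_j$ contracted to single edges. Being obtained from $G$ by deletions and degree-$2$ suppressions, $H$ is a topological minor of $G$, and $|V(H)|=O(h^4 r)=O(k^4)$. Because $\pset'_j$ and $\qset'_j$ are each families of vertex-disjoint paths that moreover use disjoint edge sets, every vertex of $\hat S_j$ lies on at most one path of each, and checking interface vertices separately (where one extra connecting edge may be incident) gives maximum degree at most $4$ throughout $H$. Finally, $X$ embeds into $H$ with congestion $2$: send $i\in[h]$ to rail $i$ (a connected subgraph, and the rails are pairwise vertex-disjoint) and send $(i,i')\in M_j$ to the $\qset'_j$-path of $\hat S_j$ joining $A_j[\mathrm{pos}_j(i)]$ and $A_j[\mathrm{pos}_j(i')]$; these paths lie in distinct clusters for distinct $j$ and are vertex-disjoint within a cluster, so each edge of $H$ lies on at most one rail and at most one such path. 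Claim~\ref{claim: embed an expander treewidth} with $\kappa=h$, $\eta=2$, $\Delta=4$, $\Delta'=O(\log^2 h)$ then gives $\tw(H)=\Omega\!\left(h/(\eta\Delta\Delta')\right)=\Omega(k/\poly\log k)$. Every step is polynomial in $|V(G)|$ and $k$, and all randomized steps succeed w.h.p.

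The step I expect to be the main obstacle is making the cluster-compression interact correctly with the \emph{adaptive} cut-matching game: one must verify that replacing each $S_j$ by the Theorem~\ref{thm: 2-flow-main} minor keeps the rails vertex-disjoint and connected across clusters, and --- more delicately --- that the matchings read off after compression still constitute a legitimate transcript of the game, which is exactly why rail identities must be tied to the interface vertices (preserved by the theorem) rather than to the routing paths (which the theorem may reroute). A secondary point is pinning down the degree-$4$ bound from the fact that in the union $\pset'_j\cup\qset'_j$ every vertex sits on at most one path of each color, and confirming that degree-$2$ suppression and contraction of the connecting paths never create a vertex of degree exceeding $4$.
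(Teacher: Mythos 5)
Your proposal is correct and follows essentially the same route as the paper: build the strong path-of-sets system, run the cut-matching game cluster by cluster, apply Theorem~\ref{thm: 2-flow-main} inside each cluster \emph{during} its iteration so that rail identities (maintained via the interface vertices of $A_j$, exactly as the paper does through its bijection $f$ on the evolving path set $\hset^j$), and then invoke Claim~\ref{claim: embed an expander treewidth} on the congestion-$2$ embedding of the expander to certify treewidth $\Omega(k/\poly\log k)$, with $|V(H)|=O(h^4 r)=O(k^4)$ and degree at most $4$. The obstacle you flag — keeping the adaptively generated matchings a legal transcript despite the per-cluster rerouting — is precisely the subtlety the paper resolves in the same way, so no gap remains.
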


In order to prove Theorem~\ref{thm: degree-4 minor}, it is sufficient to
find a subgraph $H$ of $G$, with $\tau(H)=O(k^4)$, such that the
maximum vertex degree in $H$ is at most $4$, and the treewidth of $H$
is $\Omega(k/\poly\log k)$. Indeed, by replacing every maximal
$2$-path in $H$ with an edge connecting its endpoints, we obtain the
desired minor.
\fi

\ifabstract In this section we prove a slightly weaker version of
Theorem~\ref{thm: main-topological-minor} that yields a degree-$4$
sparsifier instead of a degree-$3$ sparsifier; the other parameters
are the same. A formal statement can be found in the Appendix.  To
prove the theorem it suffices to find a subgraph $H$ of $G$, with
$\tau(H)=O(k^4)$, such that the maximum vertex degree in $H$ is at
most $4$, and the treewidth of $H$ is $\Omega(k/\poly\log k)$. Indeed,
by replacing every maximal $2$-path in $H$ with an edge connecting its
endpoints, we obtain the desired topological minor.
\fi

We start by applying Theorem~\ref{thm: strong PoS system} with
$r=\gKRV(k)$ and $h=\Omega(k/\poly\log k)$, so that $h$ is an even
integer, and $\frac{k}{\log^{c'}k}>chr^{48}$ holds, where $c,c'$ are
the constants from Theorem~\ref{thm: strong PoS system}. Let
$(\sset,\bigcup_{i=1}^{r-1}\pset_i)$ be the resulting path-of-sets
system.

Our next step is to construct an expander graph $X$ on $h$ vertices,
and to embed it into a sub-graph $H$ of $G$.  Following the previous
work on routing
problems~\cite{RaoZhou,Andrews,Chuzhoy11,ChuzhoyL12,ChekuriE13}, we
will embed $X$ into $G$ using the cut-matching game, and the
path-of-sets system $(\sset,\bigcup_{i=1}^{r-1}\pset_i)$. We start
with an intuitive high-level description of the algorithm.  For each
$1\leq i\leq r$, let $\qset_i$ be any set of node-disjoint paths
connecting $A_i$ to $B_i$ in $G[S_i]$ (this set exists due to the
linkedness of $(A_i,B_i)$ in $G[S_i]$). Let $\hset$ be the set of $h$
paths, obtained by concatenating
$\qset_1,\pset_1,\qset_2,\pset_2,\ldots,\pset_{r-1},\qset_r$. We
denote $\hset=\set{P_1,\ldots,P_{h}}$. The high-level idea is to
construct an expander $X$ over a set $V=\set{v_1,\ldots,v_{h}}$ of $h$
vertices, and to embed it into $G$ using the cut-matching game, as
follows. For each $1\leq i\leq h$, we embed $v_i$ into $P_i$, that is,
$C_{v_i}=P_i$. We construct the edges of the expander, and embed them
into $G$, using the cut-matching game, where for each $1\leq j\leq r$,
we use cluster $S_j\in \sset$ to route the $j$th matching, as
follows. A partition $(Y,Z)$ of the vertices of $X$ computed by the
cut player naturally defines a partition $(\hset_Y,\hset_Z)$ of the
paths in $\hset$ into two equal-sized subsets, which in turn defines a
partition $(A_j',A_j'')$ of $A_j$ into two equal-sized subset. Using
the fact that $A_j$ is node-well-linked inside $G[S_j]$, we can find a
set $\bset_j$ of node-disjoint paths in $G[S_j]$ connecting $A_j'$ to
$A_j''$. This set of paths defines a matching $M_j$ between the paths
in $\hset_Y$ and $\hset_Z$, and hence between the vertices of $Y$ and
$Z$ in $X$. We view this matching as the response of the matching
player. After $\gKRV(h)\leq \gKRV(k)=r$ iterations, we obtain an
expander $X$ and its embedding with congestion $2$ into
$G$. Intuitively, we would like to define $H$ as the set of all edges
and vertices of $G$ used in this embedding, that is, the union of the
paths in $\hset$ and $\bigcup_{j=1}^r\bset_j$. It is easy to see that
the maximum vertex degree in $H$ is at most $4$, since in each cluster
$S_j$ we only route 2 sets of node-disjoint paths: $\qset_j$ and
$\bset_j$. However, $\tau(H)$ may not be bounded by $O(k^4)$. In
particular, the paths in $\qset_j$ and $\bset_j$ may intersect at many
vertices. In order to overcome this difficulty, we can use
Theorem~\ref{thm: 2-flow-main} to find new sets $\qset_j'$ and
$\bset_j'$ of paths, routing the same pairs of vertex subsets, such
that, if $J=J(\qset_j'\cup \bset_j')$, then $\tau(J)=O(h^4)$. However,
this re-routing changes the paths in $\hset$, and therefore the
mapping between the vertices in sets $A_{j'}$ for $j'\neq j$ and the
vertices in $X$ may be changed. Therefore, we need to execute this
procedure more carefully. In particular, we apply Theorem~\ref{thm:
  2-flow-main} in the graph $G[S_j]$ after each iteration $j$ of the
cut-matching game; for iteration $j+1$ we exploit the
node-well-linkedness of the set $A_{j+1}$ in $G[S_{j+1}]$ to maintain
consistency in the mapping of paths in $\hset$ to the vertices of the
expander.  \ifabstract The formal description of the embedding
procedure can be found in the Appendix.  \fi \iffull We now provide a
formal description of the embedding procedure.

We will gradually construct the set $\hset$ of paths over the course
of $\gKRV(h)$ iterations. For each $1\leq j\leq \gKRV(h)$, at the
beginning of the $j$th iteration, we are given a set $\hset^j$ of $h$
disjoint paths, connecting the vertices of $A_1$ to the vertices of
$A_j$, and a bijection $f: \hset^j\rightarrow V(X)$. At the beginning,
$\hset^1$ consists of $h$ paths, where each path consists of a single
distinct vertex of $A_1$, and the mapping $f:\hset^1\rightarrow V(X)$
is an arbitrary bijection. We also start with a graph $X$ on $h$
vertices, and $E(X)=\emptyset$. For $1\leq j\leq \gKRV(h)$, the $j$th
iteration is executed as follows.

We use the cut player on the current graph $X$ to compute a partition
$(Y_j,Z_j)$ of $V(X)$ into two equal-sized subsets. This naturally
defines a partition $(\hset^j_Y,\hset^j_Z)$ of $\hset^j$ where
$\hset^j_Y$ contains all paths $P\in \hset^j$, such that $f(P)\in
Y_j$. In turn, this gives a partition $(A_j',A_j'')$ of $A_j$, where a
vertex $v\in A_j$ belongs to $A_j'$ iff the path $P\in\hset^j$ on which $v$ lies
belongs to $\hset^j_Y$. Since the set $A_j$ of vertices is
node-well-linked in $G[S_j]$, there is a collection of node-disjoint
paths routing $(A_j',A_j'')$ in $G[S_j]$. Since $A_j$ and $B_j$ are
linked in $G[S_j]$, there is a collection of node-disjoint paths
routing $(A_j,B_j)$ in $G[S_j]$. From Theorem~\ref{thm: 2-flow-main},
we can find a set $\bset_j$ of paths routing $(A_j',A_j'')$, and a
set $\qset_j$ of paths routing $(A_j,B_j)$ in $G[S_j]$, such that, if
$J=J(\bset_j\cup \qset_j)$, then the maximum vertex degree in $J$ is
bounded by $4$, the degree of every vertex in $A_j\cup B_j$ is at most
$3$, and $\tau(J)\leq O(h^4)$. We let $\hset^{j+1}$ be the
concatenation of the paths in $\hset^j$, $\qset_j$, and $\pset_j$. In
order to define the mapping $f:\hset^{j+1}\rightarrow V(X)$, for each
$P\in \hset^{j+1}$, let $P'\in \hset^j$ be the sub-path of $P$. Then
we set $f(P)=f(P')$. Notice that the set $\bset_j$ of paths defines a
complete matching between the vertices in $A_j'$ and $A_j''$, and by
extension, a complete matching between the paths in $\hset^j_Y$ and
$\hset^j_Z$, which in turn naturally defines a matching $M_j$ between
$Y_j$ and $Z_j$ in $X$. We add the edges of the matching $M_j$ to
$X$. Each edge $e=(v_i,v_{i'})\in M_j$ is mapped to the corresponding
path in $P_e\in \bset_j$, that connects the unique vertex of $P_i\cap
A_j$ to the unique vertex of $P_{i'}\cap A_j$.

Let graph $H$ be the union of all paths in $\hset^{\gKRV(h)}$ and
$\bigcup_{j=1}^{\gKRV(h)}\bset_j$. Then it is easy to see that the
maximum vertex degree in $H$ is at most $4$, and $\tau(H)\leq
O(h^4\gKRV(h))=O(k^4)$. Moreover, every edge of $H$ belongs to at most
one path in $\hset$, and at most one path in
$\bigcup_{j=1}^{\gKRV(h)}\bset_j$. Therefore, we have constructed an
embedding of an $h$-vertex $\alphaCMG(h)$-expander $X$, whose maximum
vertex degree is $\gKRV(h)$, into $H$ with congestion 2, and so from
Claim~\ref{claim: embed an expander treewidth}, the treewidth of $H$
is at least $\Omega(h/\gKRV(h))=\Omega(k/\poly\log k)$.
\fi

\label{---------------------------------------sec: degree-3----------------------------------------------}
\section{Building a Degree-$3$ Minor}\label{sec: degree-3}
In this section we complete the proof of Theorem~\ref{thm:
  main-topological-minor}.  We start with an informal overview to help
understand the high-level plan.  A reader may wish to skip it and go
directly to the formal proof.

\subsection{Overview}
We use an algorithm, similar to the one used in Section~\ref{sec: top
  minor}, in order to embed an expander into $G$, using the
path-of-sets system. The main difference is that, instead of embedding
a single expander $X$, we will embed $N$ expanders $X_1,\ldots,X_N$
where $N = \Theta(\log k)$. For this purpose we start with a longer
path-of-sets system $(\sset,\bigcup_{i=1}^{r-1}\pset_i)$ with
parameters $h=k/\poly\log k$ and $r=O(\log^3k)$ and partition it into
$N=O(\log k)$ smaller path-of-sets systems with parameters
$r^*=\gammaKRV(h)$ and $h$ (hence $r \simeq N r^*$). For $1\leq i\leq N$, we
embed an expander $X_i$ into the $i$'th path-of-sets system using the
approach in the preceding section. Recall that for each cluster $S_i$,
we construct two sets of paths contained in $G[S_i]$: one set,
$\rset_i$, that we call red paths, routes $(A_i,B_i)$, and another
set, $\bset_i$, that we call blue paths, routes $(A_i',A_i'')$, where
$(A_i',A_i'')$ is the partition of $A_i$ defined by the cut
player. Let $H_i$ be the topological minor of $G[S_i]$ obtained by
taking the union of the paths in $\rset_i$ and $\bset_i$, and
suppressing all degree-$2$ vertices, except for $A_i\cup B_i$. We
assume that $H_i$ is minimal in the following sense: for each edge $e$
of $H_i$, either $(A_i,B_i)$ or $(A_i',A_i'')$ is not routable in
$H_i\setminus \set{e}$. Abusing the notation, we assume that $\rset_i$
and $\bset_i$ are the sets of the red and the blue paths, routing
$(A_i,B_i)$ and $(A_i',A_i'')$, respectively in $H_i$. Notice that
every vertex of $H_i$ must lie on some red path. Let $\hset$ be the
set of $h$ paths obtained by concatenating the paths in
$\rset_1,\pset_1,\ldots,\pset_{Nr^*-1},\rset_{Nr^*}$, and let $H$ be
the topological minor of $G$ obtained by taking the union of the
graphs $H_i$ and the paths $\bigcup_{i=1}^{Nr^*-1}\pset_i$.
We say that an edge of $H$ is
a red edge if it belongs to a red path and no blue paths; it is a blue
edge if it belongs to a blue path and no red paths; and it is a
red-blue edge if it belongs to both a red and a blue path. We can view
the $N$ different expanders as sharing the same vertex set, where the
vertices correspond to the paths in $\hset$. 
Consider a vertex $v$ of
degree $4$ in graph $H$; it must be incident to two red edges and two
blue edges.  In order to reduce the degree to $3$, we use random
sampling to pick one of the two blue edges incident to $v$ and eliminate
it. After this step the degree of every vertex is at most $3$.  Let
$H^*$ be this final topological minor of $G$. The heart of the
analysis is to show that $H^*$ has treewidth $\Omega(k/\poly\log k)$.
This is done by showing that the set $A=A_1$ of vertices remains
$\alpha$-well-linked in $H^*$, for $\alpha=\Omega(1/\poly\log k)$, and
applying Theorem~\ref{thm: weak well-linkedness to tw}.
 
We start by observing that the set $A$ of vertices is
$\alphaWL$-well-linked in $H$, for some constant $\alphaWL$. This is
shown by using the embeddings of the expanders $X_1,\ldots,X_N$ into
$H$.  Next, we carefully partition each path in $\hset$ into a
collection of disjoint segments. Intuitively, each segment of a path
$P\in \hset$ is a sub-path of $P$ of length $\Theta(\poly\log h)$. We
then contract each such segment $\sigma$ into a super-node
$v_{\sigma}$. Let $F$ be this contracted graph, and let $F^*$ be the
corresponding contracted graph of $H^*$. Equivalently, $F^*$ is
obtained from $F$ by deleting all the edges in $E(H)\setminus E(H^*)$.

Each vertex of $A$ belongs to a distinct contracted segment, and is
associated with the corresponding super-node in $F$. We do not
distinguish between the vertices of $A$ and their corresponding
super-nodes. It is easy to see that $A$ remains $\alphaWL$-well linked
in $F$ since we only contracted edges. The most crucial property of
the contracted graph $F$ is that the value of the minimum cut in $F$
is at least $\Omega(\log |V(F)|)$. This allows us to use arguments
similar to those used in Karger's sampling
technique~\cite{Karger-skeleton} to show that all cuts are
approximately preserved in $F^*$. In particular, the vertices of $A$
remain $\alphaWL/32$-well-linked in $F^*$. Since the length of every
segment used in the construction of the contracted graph $F$ is
$O(\poly\log h)$, this implies that the vertices of $A$ are
$\alpha$-well-linked in $H^*$, for $\alpha=\Omega(1/\poly\log k)$.
The most challenging part of the proof is to set up the partition of
the paths in $\hset$ into segments, so that in the resulting
contracted graph $F$, the value of the minimum cut is $\Omega(\log
|V(F)|)$.  At a high-level, the proof proceeds as follows. Assume for
contradiction, that there is a partition $(X,Y)$ of $V(F)$ with
$X,Y\neq \emptyset$, and $|E_F(X,Y)|<N$. Let $X'\subseteq V(H)$ be
obtained from $X$ by un-contracting all super-nodes in $X$, and let
$Y'\subseteq V(H)$ be obtained from $Y$ similarly. Then $(X',Y')$ is a
partition of $V(H)$, and $|E_{H}(X',Y')|<N$. Assume first that there
are two paths $P,P'\in \hset$, such that $P\subseteq H[X']$ and
$P'\subseteq H[Y']$. We then use the embeddings of the expanders
$X_1,\ldots,X_N$ to argue that $|E_H(X',Y')|\geq N$, reaching a
contradiction. Therefore, we can assume w.l.o.g. that no path of
$\hset$ is contained in $H[X']$. We next show that for some $1\leq
i^*\leq Nr^*$, partition $(X',Y')$ of $V(H)$ defines a partition
$(X^*,Y^*)$ of $V(H_{i^*})$, such that $|X^*|,|Y^*|>200N^4$, while
$|E_{H_{i^*}}(X^*,Y^*)|<N$. We then consider the segments of the red
paths in $\rset_{i^*}$ and the blue paths in $\bset_{i^*}$ that are
contained in $H_{i^*}[X^*]$. Let $\rset^*$ denote the corresponding
segments of the red paths, and $\bset^*$ the corresponding segments of
the blue paths. Using Theorem~\ref{thm: 2-flow-minor}, we show that
there is some edge $e\in H[X^*]$, such that we can still route the
endpoints of the paths in $\rset^*$ to each other, and the endpoints
of the paths in $\bset^*$ to each other, even after deleting $e$ from
$H[X^*]$. This new routing implies that we can route both
$(A_{i^*},B_{i^*})$ and $(A_{i^*}',A_{i^*}'')$ in
$H_{i^*}\setminus\set{e}$, contradicting the minimality of $H_{i^*}$.

\subsection{Proof of Theorem~\ref{thm: main-topological-minor}}
We set $r=2^{15}\log k\cdot \gammaKRV(k) =\Theta(\log^3 k)$, and
$h=\Omega(k/\poly\log k)$, so that $h$ is an even integer, and
$k/\log^{c'}k>chr^{48}$, where $c$ and $c'$ are the constants from
Theorem~\ref{thm: strong PoS system}. We assume w.l.o.g. that $k$ is
large enough, so $h>72\log k$ and $h>\gammaKRV(h)$. We then apply
Theorem~\ref{thm: strong PoS system} to graph $G$, with parameters $r$
and $h$, to obtain a strong path-of-sets system
$(\sset,\bigcup_{i=1}^{r-1}\pset_i)$ of height $h$ and width $r$.

Let $r^*=\gammaKRV(h)$, and let $N=\ceil{3072\log(10h^4\cdot r^*)}$;
it is easy to see that $N=\Theta(\log h)$. We will assume
w.l.o.g. that $h$ is large enough, so $N>1536\log(10h^4\cdot r^*\cdot
N)$ holds.  Finally, we let $r'=N \cdot r^*$. Note that $r'=r^* \cdot
\ceil{3072\log(10h^4\cdot r^*)}\leq 2^{15}\gammaKRV(h) \log h<r$.

We construct a new, smaller, path-of-sets system, of height $h$ and
width $r'$, using the clusters $\sset'=(S_1,\ldots,S_{r'})$, and the
sets $\pset_i$ of paths, for $1\leq i\leq r'-1$; in other words
we restrict attention to the first $r'$ clusters from the initial path-of-sets
system. Abusing notation, we denote $r'$ by $r$ and $\sset'$ by $\sset$.


We denote by $G'$ the following minor of $G$: start with the union of
$G[S_i]$ for all $1\leq i\leq r$; for each path $P\in
\bigcup_{i=1}^{r-1}\pset_i$, add an edge connecting the endpoints of
$P$ to $G'$. We denote by $E_i$ the set of edges corresponding to the
paths in $\pset_i$. Equivalently, we obtain $G'$ from graph $\left (\bigcup_{S_i\in \sset}G[S_i]\right )\cup\left (\bigcup_{j=1}^{r-1}\pset_j\right )$ by suppressing degree-$2$ internal nodes on the paths in $\bigcup_{i=1}^{r-1}\pset_i$.  It
is now enough to find a topological minor $H^*$ of $G'$ whose treewidth is
$\Omega(k/\poly\log k)$, maximum vertex degree is $3$, and
$|V(H^*)|=O(k^4)$.  We do so via the following theorem:

\begin{theorem}\label{theorem: finding a minor in a super-cluster}
  There is an efficient randomized algorithm, that 
  finds a topological minor $H^*$ of $G'$, such that
  w.h.p.: 
  
  \begin{itemize}
  \item $|V(H^*)|=O(h^4\cdot r)$; 
  \item  The maximum
  vertex degree in $H^*$ is $3$; 
  
  \item $A_1\subseteq V(H^*)$; and
  \item The set $A_1$ of vertices is
  $\alpha$-well-linked in $H^*$, for $\alpha=\Omega(1/\log^7 k)$.
\end{itemize}
\end{theorem}

Theorem~\ref{thm: main-topological-minor} follows easily from
Theorem~\ref{theorem: finding a minor in a super-cluster}.  The
desired topological minor of $G$ is $H^*$. The only property that is
left to verify is that $\tw(H) = \Omega(k/\polylog k)$ which follows
from $\alpha$-well-linkedness of $A_1$ in $H^*$.  Indeed,
Theorem~\ref{thm: weak well-linkedness to tw} implies that $\tw(H) =
\Omega(\alpha |A_1|/3) = \Omega(k/\polylog k)$ since
$|A_1| = h = \Omega(k/\polylog k)$, $\alpha = \Omega(1/\log^7 k)$
and $H^*$ has maximum degree $3$.
From now on we focus on proving Theorem~\ref{theorem: finding a minor in a super-cluster}.

In order to simplify the notation, we refer to the graph $G'$ as $G$.
Recall that we are given a path-of-sets system
$(\sset=(S_1,\ldots,S_r),\bigcup_{i=1}^{r-1}\pset_i)$ of height $h$
and width $r=Nr^*$ in $G$, where for each $1\leq i< r$, each path in
$\pset_i$ consists of a single edge, and the corresponding set of
edges is denoted by $E_i$. Let $E'=\bigcup_{i=1}^{r-1}E_i$. We denote
$A_1$ by $A$.  Our goal is to construct a topological minor $H^*$ of
$G$, such $|V(H^*)|=O(h^4 r)$, the maximum vertex degree of $H^*$ is
$3$, while ensuring that $A\subseteq V(H^*)$ and it is
$\alpha$-well-linked in $H^*$, w.h.p.

The rest of the proof consists of three steps. In the first step, we
define the sets $\bset_i,\rset_i$ of paths for $1\leq i\leq r$ by
playing the cut-matching games; in the second step we partition the
resulting red paths into segments; and in the third step we complete
the proof of the theorem.

\paragraph{Step 1: Cut-Matching Games}
In this step we construct $N$ expanders $X_1,\ldots,X_N$, and embed each of them separately into $G$. For each $1\leq i\leq N$, let $\sset_i=(S_{(i-1)r^*+1},\ldots,S_{ir^*})$, let $E^i=\bigcup_{j=(i-1)r^*+1}^{ir^*-1}E_j$, and let $\hat E^i=E_{ir^*}$ (for $i=N$, $\hat E^i=\emptyset$). Let $G_i$ be the graph obtained from the union of $G[S_j]$ for all $S_j\in \sset_i$ and the edges in $E^i$. For each $1\leq i\leq N$, we embed the expander $X_i$ into $G_i$, using the cut-matching game, as follows. For convenience, we denote $(i-1)r^*$ by $z$.

We will gradually construct a set $\hset_i$ of paths over the course of $r^*$ iterations. For each $1\leq j\leq r^*$, at the beginning of the $j$th iteration, we are given a set $\hset^j$ of $h$ disjoint paths, connecting the vertices of $A_{z+1}$ to the vertices of $A_{z+j}$, and a bijection $f: \hset^j\rightarrow V(X_i)$. At the beginning, $\hset^1$ consists of $h$ paths, each of which consists of a single distinct vertex of $A_{z+1}$, and the mapping $f:\hset^1\rightarrow V(X_i)$ is an arbitrary bijection. We also start with a graph $X_i$ on $h$ vertices, and $E(X_i)=\emptyset$. For $1\leq j\leq r^*$, the $j$th iteration is executed as follows. 

We use the cut player on the current graph $X_i$ to find a partition $(Y_j,Z_j)$ of $V(X_i)$ into two equal-sized subsets. This naturally defines a partition $(\hset^j_Y,\hset^j_Z)$ of $\hset^j$ where $\hset^j_Y$ contains all paths $P\in \hset^j$, such that $f(P)\in Y_j$. In turn, this gives a partition $(A_{z+j}',A_{z+j}'')$ of $A_{z+j}$, where a vertex $v\in A_{z+j}$ belongs to $A_{z+j}'$ iff the path $P$ on which $v$ lies belongs to $\hset^j_Y$. Since the set $A_{z+j}$ of vertices is node-well-linked in $G[S_{z+j}]$, there is a collection of node-disjoint paths routing $(A_{z+j}',A_{z+j}'')$ in $G[S_{z+j}]$. Since $A_{z+j}$ and $B_{z+j}$ are linked in $G[S_{z+j}]$, there is a collection of node-disjoint paths routing $(A_{z+j},B_{z+j})$ in $G[S_{z+j}]$. From Theorem~\ref{thm: 2-flow-main}, we can find a set $\bset_{z+j}'$ of paths routing  $(A_{z+j}',A_{z+j}'')$ , and a set $\rset_{z+j}'$ of paths routing $(A_{z+j},B_{z+j})$ in $G[S_{z+j}]$, such that, if $J=J(\bset_{z+j}'\cup\rset_{z+j}')$, then the maximum vertex degree in $J$ is bounded by $4$, the degree of every vertex in $A_{z+j}\cup B_{z+j}$ is at most $3$, and $\tau(J)\leq 8h^4+8h$. 
We will assume that $J$ is a minimal graph in which $(A_{z+j}',A_{z+j}'')$ and $(A_{z+j},B_{z+j})$ are both routable: that is, for every edge $e\in E(J)$, either $(A_{z+j}',A_{z+j}'')$, or $(A_{z+j},B_{z+j})$ are not routable in $J\setminus \set{e}$. We let $H_{z+j}$ be the graph obtained from $J$ by replacing every maximal $2$-path that does not contain the vertices of $A_{z+j}\cup B_{z+j}$ as inner vertices, by an edge connecting its two endpoints. Then $|V(H_{z+j})|\leq 8h^4+8h\leq 10h^4$, every vertex of $H_{z+j}$ has degree at most $4$, while the vertices in $A_{z+j}\cup B_{z+j}$ have degree at most $3$; there is a set $\bset_{z+j}$ of paths routing  $(A_{z+j}',A_{z+j}'')$ , and a set $\rset_{z+j}$ of paths routing $(A_{z+j},B_{z+j})$ in $H_{z+j}$, and for every edge $e\in E(H_{z+j})$, either $(A_{z+j}',A_{z+j}'')$, or $(A_{z+j},B_{z+j})$ are not routable in $H_{z+j}\setminus \set{e}$.
We call the paths in $\rset_{z+j}$ \emph{red paths}, and the paths in $\bset_{z+j}$ \emph{blue paths}. An edge that belongs to a red path, but no blue paths is called a red edge. An edge the belongs to a blue path but no red paths is called a blue edge. An edge that lies on a red and a blue path is called a red-blue edge. Notice that a vertex of $H_{z+j}$ has degree $4$ only if it is incident on two blue edges. Each vertex in $A_{z+j}$ serves as a source of a red path and a source or a destination of a blue path, so it can only be incident on at most two edges in $H_{z+j}$. A vertex $v\in B_{z+j}$ serves as a destination of a red path; its degree is at most $3$, and it is equal to $3$ only if $v$ is incident on two blue edges.

We let $\hset^{j+1}$ be the concatenation of the paths in $\hset^j$, $\rset_{z+j}$, and $E_{z+j}$. In order to define the mapping $f:\hset^{j+1}\rightarrow V(X_i)$, for each $P\in \hset^{j+1}$, let $P'\in \hset^j$ be the sub-path of $P$. Then we set $f(P)=f(P')$. Notice that the set $\bset_{z+j}$ of paths defines a matching between the paths in $\hset^j_Y$ and $\hset^j_Z$, which in turn naturally defines a matching $M_j$ between $Y_j$ and $Z_j$ in $X_i$. We add the edges of the matching $M_j$ to $X$. Each edge $e=(v_i,v_{i'})\in M_j$ is mapped to the corresponding path in $\bset_{z+j}$, that connects the unique vertex in $A_j\cap f^{-1}(v_i)$ to the unique vertex in $A_j\cap f^{-1}(v_{i'})$. 

Finally, we set $\hset_i=\hset^{r^*}$. Let $\tilde H_i$ be the union of the graphs $H_{z+1},\ldots,H_{z+r^*}$, and the edges $E^i$. Then we have defined an $\alphaCMG(h)$-expander $X_i$ on $h$ vertices with maximum vertex degree $\gammaKRV(h)$, and embedded it with congestion $2$ into $\tilde H_i$, where each vertex of $X_i$ is embedded into a distinct path in $\hset_i$.

Let $H$ be the union of the graphs $\tilde H_i$, for $1\leq i\leq N$
and $\bigcup_{i=1}^{N-1}\hat E^i$, and let $\hset$ be the
concatenation of $\hset_1,\hat E_1,\ldots,\hat E_{N-1},\hset_N$. We
will sometimes refer to the paths in $\hset$ as red paths. All
vertices in $H$ have degree at most $4$, and, as observed before, a
vertex of $H$ may have degree $4$ only if it is incident on exactly
two blue edges. Every vertex in $A$ has degree at most $2$.  Our final
graph $H^*$ is obtained from $H$ as follows: for each vertex $v\in
V(H)$ that is incident on two blue edges, we independently choose one
of these two blue edges at random. Each blue edge that has been chosen
by at least one vertex is then deleted from the graph. This final
graph is denoted by $H^*$. Notice that each edge $e=(u,v)$ may be
deleted from $H$ due to the choice made by $u$, or the choice made by
$v$; the overall probability that $e$ is not deleted is at least
$1/4$. Moreover, if $e$ and $e'$ do not share endpoints, then the
events that $e$ is deleted and that $e'$ is deleted are independent.

 It is immediate to see that  $|V(H^*)|\leq Nr^*\cdot O(h^4)=O(r h^4)$; the vertices of $A$ are contained in $V(H^*)$, and the maximum vertex degree in $H^*$ is $3$. It now only remains to prove that w.h.p. the vertices of $A$ are $\alpha$-well-linked in $H^*$, for some $\alpha=\Omega(1/\log^7 k)$. We do so in the next two steps, using the following claim.

\begin{claim}\label{claim: A well-linked}
The set $A$ of vertices is $\alphaWL$-well-linked in $H$, where $\alphaWL=\min\set{\half,\frac{N\cdot\alphaKRV(h)}{4\gammaKRV(h)}}=\Omega(1)$.
\end{claim}

\begin{proof}
Let $(Y,Z)$ be any partition of $V(H)$, $A_Y=A\cap Y$, $A_Z=A\cap Z$, and assume that $|A_Y|\leq |A_Z|$. We denote $|A_Y|$ by $\kappa$. The it is enough to show that $|E_H(Y,Z)|\geq \alphaWL \kappa$. We partition the set $A_Y$ of vertices into subsets: $A''_Y$ contains all vertices $v\in A$, such that the unique path $P\in\hset$ on which $v$ lies is contained in $Y$, and $A'_Y$ contains the remaining vertices. We partition $A_Z$ into $A'_Z$ and $A''_Z$ similarly. Assume first that $|A'_Y|\geq \kappa/2$. Then $|E_H(Y,Z)|\geq \kappa/2$, since for every vertex $v\in A'_Y$, the corresponding path $P\in \hset$ contributes at least one edge to $E_H(Y,Z)$. Similarly, if $|A'_Z|\geq \kappa/2$, then $|E_H(Y,Z)|\geq \kappa/2$. From now on we assume that $|A'_Y|,|A'_Z|<\kappa/2$, and so $|A_Z''|\geq |A_Y''|\geq \kappa/2$.

Let $\yset\subseteq \hset$ be the set of all the paths $P$, such that the first vertex of $P$ belongs to $A''_Y$. Define $\zset\subseteq \hset$ similarly for $A''_Z$. 

Fix some $1\leq i\leq N$, and consider the expander $X_i$.
We define two subsets of vertices of $X_i$: $Y^*$ contains all vertices $v$ that are embedded into the sub-paths of $\yset$, and $Z^*$ contains all vertices that are embedded into the sub-paths of $\zset$. Since $X_i$ is an $\alphaKRV(h)$-expander, there are $\alphaKRV(h)\cdot |Y^*|\geq  \alphaKRV(h)\cdot \kappa/2$ edge-disjoint paths connecting the vertices of $Y^*$ to the vertices of $Z^*$ in $X_i$. Since the maximum vertex degree in $X_i$ is $\gammaKRV(h)$, there is a collection $\lset_i$ of at least $\frac{\alphaKRV(h)}{\gammaKRV(h)}\cdot \frac{\kappa}{2}$ node-disjoint paths in $X_i$ connecting the vertices of $Y^*$ to the vertices of $Z^*$. We construct a collection $\lset_i'$ of paths, connecting the vertices of $V(\yset)$ to the vertices of $V(\zset)$, such that $\lset_i'\subseteq \tilde H_i$, and each edge of $\tilde H_i$ participates in at most two such paths. For each path $P\in \lset_i$, we build a graph $G_P$ as follows: for each edge $e\in E(P)$, the graph includes the blue path of $\tilde H_i$ into which the edge $e$ is embedded, and, for each vertex $v\in E(P)$, the graph includes the red path $P_v\in \hset_i$ into which $v$ is embedded. It is then easy to see that $G_P$ contains a path $P'\subseteq \tilde H_i$ connecting a vertex on some path $Q\in \yset$ to a vertex on some path $Q'\in \zset$. We let $\lset_i'=\set{P'\mid P\in \lset}$. Since each edge of $\tilde H_i$ belongs to at most one red path and at most one blue path, and the paths of $\lset_i$ are node-disjoint, each edge of $\tilde H_i$ is contained in at most two paths of $\lset_i'$. Let $\lset=\bigcup_{i=1}^N\lset_i'$.  Then $\lset$ contains $\frac{N\cdot\alphaKRV(h)}{\gammaKRV(h)}\cdot\frac{\kappa}{2}$ paths, where each path connects a vertex in $V(\yset)$ to a vertex in $V(\zset)$, and each edge of $H$ belongs to at most two such paths. Each path of $\lset$ connects a vertex of $X$ to a vertex of $Y$, and so $|E_H(X,Y)|\geq 
\frac{N\cdot\alphaKRV(h)}{4\gammaKRV(h)}\cdot \kappa$.
\end{proof}

\paragraph{Step 2: Partitioning the Red Paths}
In this step, we will define a collection $\Sigma_P$ of disjoint segments for every path $P\in\hset$.

Consider any such path $P\in \hset$. A sub-path $P'$ of $P$ is called a \emph{heavy sub-path} iff for some $1\leq i\leq Nr^*$, $P'$ contains at least $200N^4=\Theta(\log^4h)$ vertices that belong to $H_i$. 

If $P$ contains no heavy sub-paths, then $\Sigma_P=\set{P}$. Notice that $P$ contains at most $Nr^*\cdot O(\log^4h)=O(\log^7h)$ vertices in this case.
 Otherwise, we perform a number of iterations. In each iteration, we start with some heavy sub-path $P'$ of $P$, where at the beginning of the first iteration, $P'=P$. 
Let $P''$ be the minimum-length heavy sub-path of $P'$ containing the first vertex of $P'$. If $P'\setminus P''$ is a heavy path, then we add $P''$ to $\Sigma_P$, delete all vertices of $P''$ from $P'$, and continue to the next iteration. Otherwise, we add $P'$ to $\Sigma_P$ and finish the algorithm. Notice that in any case, the length of every path added to $\Sigma_P$ is at most $Nr^*\cdot O(\log^4h)=O(\log^7h)$. Overall, for each path $P\in \hset$, we obtain a partition of $P$ into disjoint sub-paths of length at most $O(\log^7h)$ each. Moreover, if $|\Sigma_P|>1$, then each path in $\Sigma_P$ is a heavy sub-path of $P$. Let $\Sigma=\bigcup_{P\in \hset}\Sigma_P$.

We obtain a contracted graph $F$ from $H$ by contracting, for each $\sigma\in \Sigma$, the vertices of $\sigma$ into a single super-node $v_{\sigma}$. For every vertex $u\in A$, let $g(u)$ be the super-node $v_{\sigma}$ such that $u\in V(\sigma)$. Notice that for $u\neq u'$, $g(u)\neq g(u')$. Let $U=\set{g(u)\mid u\in A}$.
Since,  from Claim~\ref{claim: A well-linked}, the vertices of $A$ are $\alphaWL$-well-linked in $H$, the vertices of $U$ are $\alphaWL$-well-linked in $F$. Since every vertex of $H$ must belong to some red path, $V(F)=\set{v_{\sigma}\mid \sigma\in \Sigma}$.


We define a graph $F^*$ from $H^*$, by similarly contracting all segments in $\bigcup_{P\in\hset}\Sigma_P$ into super-nodes. Equivalently, graph $F^*$ is obtained from $F$, by deleting all edges in $E(H)\setminus E(H^*)$. 
We prove the following claim.

\begin{claim}\label{claim: well-linkedness in sampled graph}
Set $U$ is $\alphaWL/32$-well-linked in $F^*$ w.h.p.
\end{claim}

Assume first that the above claim is correct. We claim that $A$ is $\alpha$-well-linked in $H^*$, for $\alpha=\Omega(1/\log^7h)$. Indeed, let $(X,Y)$ be any partition of vertices of $H^*$. Let $A_X=A\cap X$, $A_Y=A\cap Y$, and $E^*=E_{H^*}(X,Y)$. Assume w.l.o.g. that $|A_X|\leq |A_Y|$. It is enough to prove that $|E^*|\geq \alpha |A_X|$. In order to prove this, we show that there is a set $\qset'$ of $|A_X|$ paths in $H^*$ connecting the vertices of $A_X$ to the vertices of $A_Y$ with edge-congestion at most $1/\alpha$.

Let $U_X=\set{g(v)\mid v\in A_X}$ and $U_Y=\set{g(v)\mid v\in Y_X}$. Since set $U$ is $\alphaWL/32$-well-linked in $F^*$, there is a set $\qset$ of $|U_X|=|A_X|$ paths in $F^*$, such that each path connects a distinct vertex of $U_X$ to a distinct vertex of $U_Y$, and each edge of $F^*$ participates in at most $32/\alphaWL$ such paths. We use the paths in $\qset$ in a natural way, in order to define the set $\qset'$ of paths in graph $H^*$. Let $P\in \qset$ be any such path. Assume that the endpoints of $P$ are $s$ and $t$, and let $s'\in A_X$, $t'\in A_Y'$ be such that $g(s')=s$ and $g(t')=t$. Consider the following sub-graph $H_P$ of $H^*$: start with all the edges that belong to $P$; for each vertex $v_{\sigma}$ on $P$, add the path $\sigma$ to $H_P$. It is easy to see that graph $H_P$ contains a path connecting $s'$ to $t'$. Let $P'$ be any such path. We then set $\qset'=\set{P'\mid P\in \qset}$. Since every vertex $v_{\sigma}$ of $F$ corresponds to a path $\sigma$ of length $O(\log^7h)$ in graph $H$, the degree of each such vertex $v_{\sigma}$ is $O(\log^7h)$. Since the paths in $\qset$ cause edge-congestion at most $32/\alphaWL=O(1)$ in $F^*$, each vertex $v_{\sigma}$ may belong to $O(\log^7h)$ such paths. Therefore, the paths in $\qset'$ cause edge-congestion $O(\log^7h)$ in $H^*$, and $|E^*|\geq \Omega(|A_X|/\log^7h)$. We conclude that $A$ is $\alpha$-well-linked in $H^*$, for $\alpha=\Omega(1/\log^7h)=\Omega(1/\log^7k)$.

\paragraph{Step 3: Finishing the Proof}
In this step we prove Claim~\ref{claim: well-linkedness in sampled graph}. 
We will sometimes refer to a subset $S\subseteq V(F)$ of the vertices of $F$, with $S,V(F)\setminus S\neq \emptyset$, as a \emph{cut}. The value of the cut $S$ is $|\out(S)|$.
The crucial part of the proof is the following claim.

\begin{claim}\label{claim: large degree}
The value of the minimum cut in graph $F$ is at least $N$.
\end{claim}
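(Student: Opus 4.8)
The plan is to argue by contradiction. Suppose $(X,Y)$ is a partition of $V(F)$ with $X,Y\neq\emptyset$ and $\size{E_F(X,Y)}<N$. Un-contracting every super-node turns $(X,Y)$ into a partition $(X',Y')$ of $V(H)$, and since the contraction never splits a segment, $\size{E_H(X',Y')}=\size{E_F(X,Y)}<N$ while $X',Y'\neq\emptyset$. I would then split into two cases, according to whether some path of $\hset$ lies entirely inside $H[X']$ \emph{and} some other path of $\hset$ lies entirely inside $H[Y']$. In the first case, fix such $P\subseteq H[X']$ and $P'\subseteq H[Y']$, and fix $1\leq i\leq N$. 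The restrictions of $P$ and $P'$ to $\tilde H_i$ are precisely the images $C_u,C_{u'}$ of two \emph{distinct} vertices $u,u'$ of the expander $X_i$; since $X_i$ is an $\alphaCMG(h)$-expander it is connected, so it has a $u$-$u'$ path whose image under the embedding of $X_i$ into $\tilde H_i$ runs from a vertex of $P\subseteq X'$ to a vertex of $P'\subseteq Y'$, hence uses an edge of $E_H(X',Y')$. As the subgraphs $\tilde H_1,\dots,\tilde H_N$ are pairwise edge-disjoint (they are built from disjoint sub-collections of $\sset$ and disjoint edge sets $E^i$), the $N$ resulting edges are distinct, so $\size{E_H(X',Y')}\geq N$, a contradiction.

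In the remaining case, after possibly swapping the names of $X'$ and $Y'$, I may assume no path of $\hset$ lies in $H[X']$. Then every path of $\hset$ that meets $X'$ also meets $Y'$, hence contributes an edge to the cut, and since the $h$ paths of $\hset$ are vertex-disjoint, fewer than $N$ of them meet $X'$. Now pick $v\in X'$ and let $P\in\hset$ be the path through $v$. Since $P$ meets both sides and segments are never split, $\size{\Sigma_P}\geq 2$, so by the construction of the segments every member of $\Sigma_P$ is a heavy sub-path; in particular the segment $\sigma$ with $v\in V(\sigma)$ has at least $200N^4$ vertices inside some $H_{i^*}$, all lying in $X'$. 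Writing $X^*=X'\cap V(H_{i^*})$ and $Y^*=Y'\cap V(H_{i^*})$, this gives $\size{X^*}\geq 200N^4$. Conversely, any red path of $\rset_{i^*}$ that meets $X^*$ is the $H_{i^*}$-restriction of a path of $\hset$ meeting $X'$, so fewer than $N$ of the $h$ red paths of $\rset_{i^*}$ meet $X^*$; the remaining $\geq h-N$ of them, each having two distinct endpoints, lie entirely in $Y^*$, whence $\size{Y^*}\geq 2(h-N)\geq 200N^4$ using $h\gg N^4$ (true for $k$ large). Also $\size{E_{H_{i^*}}(X^*,Y^*)}\leq\size{E_H(X',Y')}<N$.

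To finish I would contradict the minimality of $H_{i^*}$. Let $\rset^*$ (resp.\ $\bset^*$) be the maximal sub-paths of the red paths $\rset_{i^*}$ (resp.\ the blue paths $\bset_{i^*}$) contained in $H_{i^*}[X^*]$; they are node-disjoint and route the pairs $(S_r,T_r)$ and $(S_b,T_b)$ formed by their endpoints, so these two pairs are separately routable in $H_{i^*}[X^*]$, a graph on $\size{X^*}\geq 200N^4$ vertices. Every endpoint of such a sub-path is either an endpoint of a cut edge of $E_{H_{i^*}}(X^*,Y^*)$ (fewer than $N$) or a vertex of $A_{i^*}\cup B_{i^*}$ inside $X^*$ (fewer than $2N$, since each lies on a distinct one of the $<N$ red paths meeting $X^*$), and a short count gives $\size{\rset^*},\size{\bset^*}\leq 2N$. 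Attaching a fresh pendant to each endpoint and padding the smaller pair with isolated edges, I obtain a graph on more than $200N^4\geq 4(2N)^4+4(2N)$ vertices in which two pairs of at most $2N$ distinct degree-$1$ vertices are separately routable; by Theorem~\ref{thm: 2-flow-minor} it is \emph{not} a minimal minor of itself, so some edge $e$ can be removed (and, after a routine reduction, not merely contracted) while both pairs stay routable. Such an $e$ can be chosen in $E(H_{i^*}[X^*])$, since a pendant edge cannot be deleted without isolating a terminal; pulling the new routings back through the untouched parts of $H_{i^*}$ then routes both $(A_{i^*},B_{i^*})$ and $(A_{i^*}',A_{i^*}'')$ in $H_{i^*}\setminus\set{e}$, contradicting minimality.

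The step I expect to be the main obstacle is this last one: bounding $\size{\rset^*}$ and especially $\size{\bset^*}$ (a blue path may traverse $X^*$ with neither of its $A_{i^*}$-endpoints in $X^*$, so the count must be done through the cut edges), and — more seriously — converting the non-minimality supplied by Theorem~\ref{thm: 2-flow-minor}, which also licenses edge \emph{contractions}, into an honest edge \emph{deletion} inside $H_{i^*}[X^*]$, all while accounting for the degree-$2$ vertices that were suppressed when $H_{i^*}$ was formed. The two earlier cases, by contrast, are clean consequences of expander connectivity and of the heaviness built into the segment construction.
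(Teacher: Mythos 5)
Your setup and middle steps track the paper's proof closely and are correct: the case of two full paths of $\hset$ on opposite sides is handled exactly as in the paper (one crossing path per expander $X_i$, edge-disjointness coming from the disjointness of the graphs $\tilde H_i$), the observation that $|\Sigma_P|\geq 2$ forces every segment to be heavy and hence $|X^*|\geq 200N^4$ is the paper's argument, and your counts are sound: fewer than $N$ paths of $\hset$ meet $X'$, so fewer than $N$ vertices of $A_{i^*}\cup B_{i^*}$ per color class lie in $X^*$, and every other piece endpoint is charged to one of the fewer than $N$ cut edges, giving $|\rset^*|<2N$ and $|\bset^*|\leq 2N$; the difficulty you anticipate about blue pieces traversing $X^*$ is already disposed of by this charging, which is the same one the paper uses.

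The genuine gap is your final step. Non-minimality in the sense of Theorem~\ref{thm: 2-flow-minor} only tells you that some edge can be deleted \emph{or contracted} while keeping both pairs routable, and there is no ``routine reduction'' from the contraction case to a deletion: a good graph in which both routings run along a single long induced path between the terminal attachments (all interior vertices of degree $2$) is arbitrarily large, hence non-minimal, yet no single edge deletion preserves routability. So size alone never yields the deletable edge; you must use the degree structure of $H_{i^*}$. The paper does this by applying Theorem~\ref{thm: 2-flow-main} directly to $J=J(\rset^*\cup\bset^*)$: it produces routings $\rset',\bset'$ of the two endpoint pairs whose union $J'$ satisfies $\tau(J')\leq 8(2N)^4+16N$, whereas $\tau(J)\geq 200N^4-6N$, because every vertex of $H_{i^*}$ outside $A_{i^*}\cup B_{i^*}$ has degree at least $3$ and every non-endpoint vertex of a piece keeps all of its $H_{i^*}$-edges inside $J$ (an incident cut edge would make it a piece endpoint). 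Hence some vertex of degree at least $3$ in $J$ has an incident edge $e$ not used by $\rset'\cup\bset'$, and that $e$ is the deletable edge. Your ``pulling the routings back'' also needs an argument rather than a phrase: the paper orients the original red (resp.\ blue) paths, deletes every edge with both endpoints in $V(J)$, adds the edges of $\rset'$ (resp.\ $\bset'$), and checks in/out-degrees to extract a routing of $(A_{i^*},B_{i^*})$ (resp.\ $(A_{i^*}',A_{i^*}'')$) in $H_{i^*}\setminus\set{e}$, which is what contradicts the minimality of $H_{i^*}$. With these repairs your outline becomes the paper's proof; as written, the deletable edge is not established.
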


We prove Claim~\ref{claim: large degree} below, and first complete the proof of Claim~\ref{claim: well-linkedness in sampled graph} using it. 
Let $n'=|V(H)|$. Then $|V(F)|\leq n'\leq 10 h^4\cdot r^*\cdot N$, and since $N>1536\log(10h^4\cdot r^*\cdot N)\geq 1536\log n'$, the value of the minimum cut in $F$ is at least $1536\log n'$. 
The number of edges in $F$ is bounded by $m\leq 4n'\leq 40h^4r^*N=O(h^4\log^3h)$.
We use the following theorem of Karger:

\begin{theorem}[Corollary A.6 in \cite{Karger-skeleton}] \label{thm: number of almost-minimum cuts} Let $G$ be any $n$-vertex graph, and assume that the value of the minimum cut in $G$ is $C$. Then for any half-integral $\beta$, the number of cuts of value at most $\beta C$ in $G$ is bounded by $n^{2\beta}$.
\end{theorem}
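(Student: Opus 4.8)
I would prove this by the standard random edge-contraction argument of Karger, which is a counting argument disguised as the analysis of a randomized algorithm. We may assume $\beta\ge 1$: for $\beta=1/2$ (or smaller half-integer) there are no cuts of value strictly less than $C$, so the bound $n^{2\beta}\ge 1$ is vacuously true; and we may assume $G$ is connected, i.e.\ $C\ge 1$ (otherwise we are outside the intended regime). Write $k=2\beta$, a positive integer. Fix any one cut $(A,B)$ of $G$ with $|E(A,B)|\le \beta C$. The whole proof amounts to exhibiting a single randomized process that outputs a cut of $G$, together with the statement that this particular cut $(A,B)$ is output with probability at least $\frac{1}{2^{k-1}\binom{n}{k}}$. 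Since distinct cuts correspond to disjoint events, the number of cuts of value at most $\beta C$ is then at most $2^{k-1}\binom{n}{k}=2^{2\beta-1}\binom{n}{2\beta}$, and an elementary inequality (see below) finishes the proof.

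\textbf{The process and its analysis.} Start with $G$ and repeatedly pick an edge of the current multigraph uniformly at random and contract it, until exactly $k$ super-nodes remain; then pick uniformly at random one of the $2^{k-1}-1$ partitions of these $k$ super-nodes into two nonempty parts and output the corresponding cut of $G$. The key observation is that contracting an edge never decreases the global min-cut value, since every cut of a contracted graph induces a cut of $G$ on the same bipartition of the ground set; hence at every stage with $v$ super-nodes, every super-node has (multi-)degree at least $C$, so the current multigraph has at least $vC/2$ edges. Moreover, as long as no edge crossing $(A,B)$ has yet been contracted, the number of crossing edges is still $|E(A,B)|\le\beta C$. Therefore the next contraction destroys $(A,B)$ (picks a crossing edge) with probability at most $\frac{\beta C}{vC/2}=\frac{k}{v}$, which is a meaningful bound because the process contracts only while $v>k$. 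Consequently $(A,B)$ survives all contractions down to $k$ super-nodes with probability at least
\[
\prod_{v=k+1}^{n}\Bigl(1-\frac{k}{v}\Bigr)=\prod_{v=k+1}^{n}\frac{v-k}{v}=\frac{(n-k)!\,k!}{n!}=\frac{1}{\binom{n}{k}} .
\]
Conditioned on survival, $(A,B)$ splits the $k$ remaining super-nodes into two nonempty groups, and the final random choice picks exactly that split with probability $\frac{1}{2^{k-1}-1}\ge \frac{1}{2^{k-1}}$. Multiplying the two probabilities gives the claimed lower bound $\frac{1}{2^{k-1}\binom{n}{k}}$ that $(A,B)$ is output.

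\textbf{Concluding inequality and the main obstacle.} It remains to check that $2^{2\beta-1}\binom{n}{2\beta}\le n^{2\beta}$ for every half-integral $\beta\ge 1$: this follows from $\binom{n}{2\beta}\le \frac{n^{2\beta}}{(2\beta)!}$ together with $2^{2\beta-1}\le (2\beta)!$, valid for every integer $2\beta\ge 2$ (one checks $2^1=2!$, $2^2\le 3!$, and it only improves afterward). There is no real obstacle here beyond bookkeeping: the points needing care are that the degree bound $|E|\ge vC/2$ holds at \emph{every} stage (this is precisely the fact that contraction preserves min-cut value $\ge C$), that we stop at $v=k$ so the quantity $1-k/v$ is never negative, and the final elementary estimate. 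The same argument goes through verbatim for edge-weighted graphs (sample an edge with probability proportional to its weight and replace ``number of edges'' by ``total weight''), though only the unweighted multigraph case is needed for the graph $F$ in our application.
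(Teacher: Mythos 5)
Your proof is correct: the paper does not prove this statement itself but imports it verbatim as Corollary A.6 of Karger's work, and your contraction-until-$k=2\beta$-supernodes argument (degree bound $\ge C$ at every stage, survival probability $\ge 1/\binom{n}{2\beta}$, final random bipartition costing a factor $2^{2\beta-1}$, and the check $2^{2\beta-1}\binom{n}{2\beta}\le n^{2\beta}$) is exactly the standard Karger proof of that cut-counting bound. Your handling of the edge cases ($\beta=1/2$ and the assumption $C\ge 1$, which holds in the paper's application since the graph $F$ there has minimum cut at least $N$) is also fine, so nothing further is needed.
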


Since in graph $F$, the set $U$ of vertices is $\alphaWL$-well-linked, it is enough to show that w.h.p., for any subset $S$ of vertices of $F$, $|\out_{F^*}(S)|\geq |\out_F(S)|/32$. We partition the cuts $S\subseteq V(F)$ into $\ceil{\log m}$ collections $\cset_1,\ldots,\cset_{\ceil{\log m}}$, where for each $1\leq i\leq \ceil{\log m}$, $\cset_i$ contains all cuts $S$ with $2^{i-1}N< |\out_F(S)|\leq 2^iN$; set $\cset_1$ also contains all cuts $S$ with $|\out_F(S)|=N$. Consider now some such collection $\cset_i$. From Theorem~\ref{thm: number of almost-minimum cuts}, $|\cset_i|\leq (n')^{2^{i+1}}$. Consider some set $S\in \cset_i$. Let $S'\subseteq V(H)$ be obtained by un-contracting all super-nodes in $S$, that is, $S'=\bigcup_{v_{\sigma}\in S}V(\sigma)$. Notice that $\out_H(S')=\out_F(S)$, and $\out_{H^*}(S')=\out_{F^*}(S)$. 
Let $E_1(S)\subseteq \out_H(S')$ contain all red and red-blue edges of $\out_H(S')$, and let $E_2(S)=\out_H(S')\setminus E_1(S)$. If $|E_1(S)|\geq |\out_H(S')|/8$, then, since all edges of $E_1(S)$ belong to $F^*$, $|\out_{F^*}(S)|\geq |\out_F(S)|/8$. We assume from now on that this is not the case, and so $|E_2(S)|\geq 7|\out_F(S)|/8$. Next, we construct a maximal set $E'\subseteq E_2(S)$ of edges, such that the edges in $E'$ do not share endpoints in graph $H$. This is done by a simple greedy algorithm: while $E_2(S)\neq \emptyset$, let $e\in E_2(S)$ be any edge. Add $e$ to $E'$, and delete from $E_2(S)$ edge $e$ and all edges sharing endpoints with $e$ in graph $H$. Since all edges in $E_2(S)$ are blue, and each vertex may be incident on at most two blue edges, for every edge added to $E'$, we delete at most three edges from $E_2(S)$. Therefore, eventually $|E'|\geq |E_2(S)|/3\geq 7|\out_H(S')|/24\geq |\out_H(S')|/4=|\out_F(S)|/4$ holds.

Each edge of $E'$ belongs to $\out_{F^*}(S)$ independently with probability at least $1/4$. The expected number of the edges of $E'$ that belong to $\out_{F^*}(S)$ is therefore at least $|E'|/4\geq |\out_F(S)|/16\geq N\cdot 2^{i-5}$. 

We use the following standard Chernoff bound: let $X_1,\ldots,X_n$ be independent random variables in $\set{0,1}$, and let $\mu=\expect{\sum_{i=1}^nX_i}$. Then $\prob{\sum_{i=1}^nX_i<\mu/2}\leq e^{-\mu/12}$. Therefore, the probability that $|\out_{F^*}(S)|<|\out_F(S)|/32$ is at most $e^{-N\cdot 2^{i-5}/12}$.
Overall, the probability that for some $S\in \cset_i$, $|\out_{F^*}(S)|<|\out_F(S)|/32$ is at most:

\[(n')^{2^{i+1}}\cdot e^{-2^{i-5}N/12}<1/(n')^2\]

since $N>1536 \log n'$. Using the union bound over all $1\leq i\leq \ceil{\log m}$, with probability at least $\frac{\ceil{\log m}}{(n')^2}$, for every set $S\subseteq V$, $|\out_{F^*}(S)|\geq |\out_F(S)|/32$. In particular, set $U$ is $\alphaWL/32$-well-linked in $F^*$ w.h.p. This concludes the proof of Claim~\ref{claim: well-linkedness in sampled graph}. As observed above, this implies that $A$ is $\alpha$-well-linked in graph $H^*$, thus completing the proof of Theorem~\ref{theorem: finding a minor in a super-cluster}. It now only remains to prove Claim~\ref{claim: large degree}.

\proofof{Claim~\ref{claim: large degree}}
We prove that the value of the minimum cut in $F$ is at least $N$.
Assume otherwise. Let $(X,Y)$ be a partition of $V(F)$, such that $X,Y\neq \emptyset$, and $|E_F(X,Y)|<N$. 
Let $X'\subseteq V(H)$ be obtained from $X$, by un-contracting all vertices $v_{\sigma}$: that is, $X'=\bigcup_{v_{\sigma}\in X}V(\sigma)$.
We construct $Y'$ from $Y$ similarly. Observe that $(X',Y')$ is a partition of $V(H)$, and $|E_H(X',Y')|<N$.


Assume first that there are two paths $P,P'\in \hset$, such that $P$ is contained in $H[X']$, and $P'$ is contained in $H[Y']$. We claim that $|E_H(X',Y')|\geq N$ in this case, leading to a contradiction. Indeed, recall that we have constructed $N$ expanders $X_1,\ldots,X_N$. For each $1\leq i\leq N$, expander $X_i$ contains some path $P_i$ connecting a pair $v,v'$ of vertices of $X_i$, where $v$ is embedded into a sub-path of $P$, and $v'$ is embedded into a sub-path of $P'$. Using the embedding of $X_i$ into $\tilde H_i$, path $P_i$ naturally defines a path $P'_i\subseteq \tilde H_i$, connecting a vertex of $P$ to a vertex of $P'$. It is immediate to see that paths $\set{P_i}_{i=1}^N$ are completely disjoint, as each such path is contained in a distinct graph $\tilde H_i$. Therefore, $H$ contains $N$ edge-disjoint paths connecting the vertices of $P$ to the vertices of $P'$. Each such path must contribute an edge to $E_H(X',Y')$, and so $|E_H(X',Y')|\geq N$, a contradiction.

Therefore, for one of the vertex sets $X',Y'$, no path $P\in \hset$ is 
contained in the sub-graph of $H$ induced by that set. 
We assume w.l.o.g. that this set is $X'$. 

Let $\rset$ be the set of paths, obtained from $\hset$, by deleting the edges of $E_H(X',Y')$ from them. Each path in $\rset$ is contained in either $H[X']$ or $H[Y']$, and we let $\rset'\subseteq \rset$ be the set of paths contained in $H[X']$. 
We claim that $|\rset'|<N$. Indeed, since no path of $\hset$ is contained in $H[X']$, every path $P\in \rset'$ contributes at least one edge to $E_H(X',Y')$. Consider now any path $P'\in \rset'$, and let $P\in \hset$ be the path such that $P'$ is a sub-path of $P$. Let $\sigma$ be any segment of $P'$, such that $v_{\sigma}\in X$. Since no path of $\hset$ is contained in $H[X']$, $\sigma$ must be a heavy segment of $P$. Therefore, there is some index $1\leq i\leq Nr^*$, such that the path $\sigma\cap H_i$ contains at least $200N^4$ vertices. We fix any such index $i^*$.

 We define a new set $\rset^*$ of paths as follows: for each path $P\in \rset'$, we add the path $P\cap H_{i^*}$ to $\rset^*$, if $P\cap H_{i^*}\neq \emptyset$. From the above discussion, $|\rset^*|< N$, and $|V(\rset^*)|\geq 200N^4$.

Recall that $\rset_{i^*}$ is the set of the red paths in $H_{i^*}$,
connecting $A_{i^*}$ to $B_{i^*}$, and $\bset_{i^*}$ is the set of
the blue paths in $H_{i^*}$, connecting $A_{i^*}'$ to $A_{i^*}''$
that we have constructed during the first step of the
algorithm. Clearly, each path in $\rset^*$ is a subpath of
a path in $\rset_{i^*}$.

Let $\bset$ be the set of paths, obtained from $\bset_{i^*}$, by deleting all edges of $E_H(X',Y')$ from them. 
Let $\bset^*\subseteq \bset$ be the set of paths contained in $H[X']$.
We claim that $|\bset^*|\leq 2N$. Indeed, recall that the paths of $\bset_{i^*}$ originate and terminate at the vertices of $A_{i^*}$. 
Since $|\rset'|< N$, at most $N$ such vertices $a\in A_{i^*}$ belong to $X'$. Therefore, at most $N$ paths of $\bset_{i^*}$ may be contained in $H[X']$. Every other path of $\bset^*$ must contribute one edge to $E_H(X',Y')$, and so in total $|\bset^*|\leq 2N$.

Observe that for every vertex $v\in V(\rset^*)$, if $v$ belongs to any blue path, then it belongs to a path in $\bset^*$. Similarly, for $v\in V(\bset^*)$, if $v$ belongs to any red path, then it belongs to a path in $\rset^*$. 

We will view the paths in $\rset_{i^*}$ as directed from $A_{i^*}$ to $B_{i^*}$, and we will view the paths in $\bset_{i^*}$ as directed from $A_{i^*}'$ to $A_{i^*}''$. Let $S_1$ be the set of all vertices $v$, such that $v$ is the first vertex on some path in $\rset^*$, and let $T_1$ be the set of all vertices $v$, such that $v$ is the last vertex on some path in $\rset^*$. We define the sets $S_2,T_2$ of vertices for $\bset^*$ similarly. Let $J=J(\rset^*\cup\bset^*)$. Then $|S_1|=|T_1|\leq N$, and $|S_2|=|T_2|\leq 2N$, while $|V(J)|\geq 200N^4$. Since every vertex in $V(H_{i^*})\setminus (A_{i^*}\cup B_{i^*})$ has degree at least $3$ in $H_{i^*}$, $\tau(J)\geq 200N^4-|S_1\cup S_2\cup T_1\cup T_2|\geq 200N^4-6N$.

From Theorem~\ref{thm: 2-flow-main}, there are sets $\rset'$ and $\bset'$ of paths, routing $(S_1,T_1)$ and $(S_2,T_2)$, respectively, in $J$, such that, if $J'=J(\rset'\cup \bset')$, then $\tau(J')<8(2N)^4+16N+1<200N^4-6N$. Since every vertex in $J\setminus (S_1\cup S_2\cup T_1\cup T_2)$ has degree more than $2$ in $J$, this means that there is some edge $e\in E(J)$, such that $(S_1,T_1)$ and $(S_2,T_2)$ are still routable in $J\setminus\set{e}$, via the sets $\rset'$ and $\bset'$ of paths.

We will now show that both $(A_{i^*},B_{i^*})$, and $(A_{i^*}',A_{i^*}'')$ remain routable in $H_{i^*}\setminus\set{e}$, contradicting the minimality of $H_{i^*}$. We show this for $(A_{i^*},B_{i^*})$; the proof for $(A_{i^*}',A_{i^*}'')$ is similar.

We start with a directed graph containing the original set $\rset_{i^*}$ of paths routing $(A_{i^*},B_{i^*})$, where the edges on these paths are oriented from $A_{i^*}$ towards $B_{i^*}$. We then delete from this graph all edges whose both endpoints are contained in $J$. 
Notice that the edge $e$ does not belong to the new graph. Also, for each vertex $v$:

\begin{itemize}

\item If $v\in A_{i^*}\setminus S_1$, then there is one edge leaving $v$ and no edges entering $v$;
\item If $v\in S_1\cap A_{i^*}$, then there is no edge entering or leaving $v$;
\item If $v\in S_1\setminus A_{i^*}$, then there is one edge entering $v$, and no edges leaving $v$;
\item if $v\in B_{i^*}\setminus T_1$, then there is one edge entering $v$ and no edges leaving $v$;
\item  If $v\in T_1\cap B_{i^*}$, then there is no edge entering or leaving $v$;
\item If $v\in T_1\setminus A_{i^*}$, then there is one edge leaving $v$, and no edges entering $v$;
\item for all other vertices $v$, either there is one edge entering and one edge leaving $v$, or there is no edge incident on $v$.
\end{itemize}

Finally, we add all edges lying on the paths in $\rset'$ to the
resulting graph. In this final graph, every vertex in $A_{i^*}$ has
one outgoing and no incoming edges, and every vertex in $B_{i^*}$ has
one incoming and no outgoing edges. Every other vertex either has
exactly one incoming and one outgoing edge, or it has no edges
incident on it. It is then easy to see that $(A_{i^*},B_{i^*})$ is
routable in this graph. Since this final graph is contained in
$H_{i^*}\setminus\set{e}$, this contradicts the minimality of
$H_{i^*}$.

\paragraph{Acknowledgement:} We thank Paul Seymour for posing the
question of the existence of degree-3 treewidth sparsifiers to us.


\bibliographystyle{alpha}
\bibliography{tw-sparsifier}

\end{document}

\section{Proof of Theorem~\ref{thm: main-topological-minor}}
In this section we complete the proof of Theorem~\ref{thm: main-topological-minor}.

We set $r=2^{15}\log k\cdot
\gammaKRV(k) =\Theta(\log^3 k)$, and $h=\Omega(k/\poly\log k)$, so
that $h$ is an even integer, and $k/\log^{c'}k>chr^{48}$, where $c$
and $c'$ are the constants from Theorem~\ref{thm: strong PoS
  system}. We assume w.l.o.g. that $k$ is large enough, so $h>72\log
k$ and $h>\gammaKRV(h)$. We then apply Theorem~\ref{thm: strong PoS
  system} to graph $G$, with parameters $r$ and $h$, to obtain a
strong path-of-sets system $(\sset,\bigcup_{i=1}^{r-1}\pset_i)$ of
height $h$ and width $r$.

Let $r^*=\gammaKRV(h)$, and let $N=\ceil{3072\log(10h^4\cdot r^*)}$;
it is easy to see that $N=\Theta(\log h)$. We will assume
w.l.o.g. that $h$ is large enough, so $N>1536\log(10h^4\cdot r^*\cdot
N)$ holds.  Finally, we let $r'=N \cdot r^*$. Note that $r'=r^* \cdot
\ceil{3072\log(10h^4\cdot r^*)}\leq 2^{15}\gammaKRV(h) \log h<r$.

We construct a new, smaller, path-of-sets system, of height $h$ and
width $r'$, using the clusters $\sset'=(S_1,\ldots,S_{r'})$, and the
sets $\pset_i$ of paths, for $1\leq i\leq r'-1$; in other words
we restrict attention to the first $r'$ clusters from the initial path-of-sets
system. Abusing notation, we denote $r'$ by $r$ and $\sset'$ by $\sset$.


We denote by $G'$ the following minor of $G$: start with the union of
$G[S_i]$ for all $1\leq i\leq r$; for each path $P\in
\bigcup_{i=1}^{r-1}\pset_i$, add an edge connecting the endpoints of
$P$ to $G'$. We denote by $E_i$ the set of edges corresponding to the
paths in $\pset_i$. Equivalently, we obtain $G'$ from graph $\left (\bigcup_{S_i\in \sset}G[S_i]\right )\cup\left (\bigcup_{j=1}^{r-1}\pset_j\right )$ by suppressing degree-$2$ internal nodes on the paths in $\bigcup_{i=1}^{r-1}\pset_i$.  It
is now enough to find a topological minor $H^*$ of $G'$ whose treewidth is
$\Omega(k/\poly\log k)$, maximum vertex degree is $3$, and
$|V(H^*)|=O(k^4)$.  We do so via the following theorem:

\begin{theorem}\label{theorem: finding a minor in a super-cluster}
  There is an efficient randomized algorithm, that 
  finds a topological minor $H^*$ of $G'$, such that
  w.h.p.: 
  
  \begin{itemize}
  \item $|V(H^*)|=O(h^4\cdot r)$; 
  \item  The maximum
  vertex degree in $H^*$ is $3$; 
  
  \item $A_1\subseteq V(H^*)$; and
  \item The set $A_1$ of vertices is
  $\alpha$-well-linked in $H^*$, for $\alpha=\Omega(1/\log^7 k)$.
\end{itemize}
\end{theorem}

Theorem~\ref{thm: main-topological-minor} follows easily from
Theorem~\ref{theorem: finding a minor in a super-cluster}.  The
desired topological minor of $G$ is $H^*$. The only property that is
left to verify is that $\tw(H) = \Omega(k/\polylog k)$ which follows
from $\alpha$-well-linkedness of $A_1$ in $H^*$.  Indeed,
Theorem~\ref{thm: weak well-linkedness to tw} implies that $\tw(H) =
\Omega(\alpha |A_1|/3) = \Omega(k/\polylog k)$ since
$|A_1| = h = \Omega(k/\polylog k)$, $\alpha = \Omega(1/\log^7 k)$
and $H^*$ has maximum degree $3$.
From now on we focus on proving Theorem~\ref{theorem: finding a minor in a super-cluster}.

In order to simplify the notation, we refer to the graph $G'$ as $G$.  Recall that we are given a path-of-sets system
$(\sset=(S_1,\ldots,S_r),\bigcup_{i=1}^{r-1}\pset_i)$ of height
$h$ and width $r=Nr^*$ in $G$, where  for each $1\leq
i< r$, each path in $\pset_i$ consists of a single edge, and the
corresponding set of edges is denoted by $E_i$. Let
$E'=\bigcup_{i=1}^{r-1}E_i$. We denote $A_1$ by $A$. 
Our goal is to construct a topological minor $H^*$ of $G$, 
such $|V(H^*)|=O(h^4 r)$, the
maximum vertex degree of $H^*$ is $3$, while ensuring that
$A\subseteq V(H^*)$ and it is $\alpha$-well-linked in $H^*$, w.h.p.

We assume that the reader has read the outline of the proof from
Section~\ref{sec: degree-3}. The rest of the proof consists of three steps. In the first step, we define the sets $\bset_i,\rset_i$ of paths for $1\leq i\leq  r$ by playing the cut-matching games; in the second step we partition the resulting red paths into segments; and in the third step we complete the proof of the theorem.
\paragraph{Step 1: Cut-Matching Games}
In this step we construct $N$ expanders $X_1,\ldots,X_N$, and embed each of them separately into $G$. For each $1\leq i\leq N$, let $\sset_i=(S_{(i-1)r^*+1},\ldots,S_{ir^*})$, let $E^i=\bigcup_{j=(i-1)r^*+1}^{ir^*-1}E_j$, and let $\hat E^i=E_{ir^*}$ (for $i=N$, $\hat E^i=\emptyset$). Let $G_i$ be the graph obtained from the union of $G[S_j]$ for all $S_j\in \sset_i$ and the edges in $E^i$. For each $1\leq i\leq N$, we embed the expander $X_i$ into $G_i$, using the cut-matching game, as follows. For convenience, we denote $(i-1)r^*$ by $z$.

We will gradually construct a set $\hset_i$ of paths over the course of $r^*$ iterations. For each $1\leq j\leq r^*$, at the beginning of the $j$th iteration, we are given a set $\hset^j$ of $h$ disjoint paths, connecting the vertices of $A_{z+1}$ to the vertices of $A_{z+j}$, and a bijection $f: \hset^j\rightarrow V(X_i)$. At the beginning, $\hset^1$ consists of $h$ paths, each of which consists of a single distinct vertex of $A_{z+1}$, and the mapping $f:\hset^1\rightarrow V(X_i)$ is an arbitrary bijection. We also start with a graph $X_i$ on $h$ vertices, and $E(X_i)=\emptyset$. For $1\leq j\leq r^*$, the $j$th iteration is executed as follows. 

We use the cut player on the current graph $X_i$ to find a partition $(Y_j,Z_j)$ of $V(X_i)$ into two equal-sized subsets. This naturally defines a partition $(\hset^j_Y,\hset^j_Z)$ of $\hset^j$ where $\hset^j_Y$ contains all paths $P\in \hset^j$, such that $f(P)\in Y_j$. In turn, this gives a partition $(A_{z+j}',A_{z+j}'')$ of $A_{z+j}$, where a vertex $v\in A_{z+j}$ belongs to $A_{z+j}'$ iff the path $P$ on which $v$ lies belongs to $\hset^j_Y$. Since the set $A_{z+j}$ of vertices is node-well-linked in $G[S_{z+j}]$, there is a collection of node-disjoint paths routing $(A_{z+j}',A_{z+j}'')$ in $G[S_{z+j}]$. Since $A_{z+j}$ and $B_{z+j}$ are linked in $G[S_{z+j}]$, there is a collection of node-disjoint paths routing $(A_{z+j},B_{z+j})$ in $G[S_{z+j}]$. From Theorem~\ref{thm: 2-flow-main}, we can find a set $\bset_{z+j}'$ of paths routing  $(A_{z+j}',A_{z+j}'')$ , and a set $\rset_{z+j}'$ of paths routing $(A_{z+j},B_{z+j})$ in $G[S_{z+j}]$, such that, if $J=J(\bset_{z+j}'\cup\rset_{z+j}')$, then the maximum vertex degree in $J$ is bounded by $4$, the degree of every vertex in $A_{z+j}\cup B_{z+j}$ is at most $3$, and $\tau(J)\leq 8h^4+8h$. 
We will assume that $J$ is a minimal graph in which $(A_{z+j}',A_{z+j}'')$ and $(A_{z+j},B_{z+j})$ are both routable: that is, for every edge $e\in E(J)$, either $(A_{z+j}',A_{z+j}'')$, or $(A_{z+j},B_{z+j})$ are not routable in $J\setminus \set{e}$. We let $H_{z+j}$ be the graph obtained from $J$ by replacing every maximal $2$-path that does not contain the vertices of $A_{z+j}\cup B_{z+j}$ as inner vertices, by an edge connecting its two endpoints. Observe that $H_{z+j}$ is a topological minor of $G[S_{z+j}]$. Moreover, $|V(H_{z+j})|\leq 8h^4+8h\leq 10h^4$, every vertex of $H_{z+j}$ has degree at most $4$, while the vertices in $A_{z+j}\cup B_{z+j}$ have degree at most $3$; there is a set $\bset_{z+j}$ of paths routing  $(A_{z+j}',A_{z+j}'')$ , and a set $\rset_{z+j}$ of paths routing $(A_{z+j},B_{z+j})$ in $H_{z+j}$, and for every edge $e\in E(H_{z+j})$, either $(A_{z+j}',A_{z+j}'')$, or $(A_{z+j},B_{z+j})$ are not routable in $H_{z+j}\setminus \set{e}$.
We call the paths in $\rset_{z+j}$ \emph{red paths}, and the paths in $\bset_{z+j}$ \emph{blue paths}. Notice that every vertex of $H_{z+j}$ must lie on some red path. An edge that belongs to a red path, but no blue paths is called a red edge. An edge the belongs to a blue path but no red paths is called a blue edge. An edge that lies on a red and a blue path is called a red-blue edge. Notice that a vertex of $H_{z+j}$ has degree $4$ only if it is incident on two blue edges. Each vertex in $A_{z+j}$ serves as a source of a red path and a source or a destination of a blue path, so it can only be incident on at most two edges in $H_{z+j}$. A vertex $v\in B_{z+j}$ serves as a destination of a red path; its degree is at most $3$, and it is equal to $3$ only if $v$ is incident on two blue edges.

We let $\hset^{j+1}$ be the concatenation of the paths in $\hset^j$, $\rset_{z+j}$, and $E_{z+j}$. In order to define the mapping $f:\hset^{j+1}\rightarrow V(X_i)$, for each $P\in \hset^{j+1}$, let $P'\in \hset^j$ be the sub-path of $P$. Then we set $f(P)=f(P')$. Notice that the set $\bset_{z+j}$ of paths defines a matching between the paths in $\hset^j_Y$ and $\hset^j_Z$, which in turn naturally defines a matching $M_j$ between $Y_j$ and $Z_j$ in $X_i$. We add the edges of the matching $M_j$ to $X_i$. Each edge $e=(v_{\ell},v_{\ell'})\in M_j$ is mapped to the corresponding path in $\bset_{z+j}$, that connects the unique vertex in $A_j\cap f^{-1}(v_{\ell})$ to the unique vertex in $A_j\cap f^{-1}(v_{\ell'})$. 

Finally, we set $\hset_i=\hset^{r^*}$. Let $\tilde H_i$ be the union of the graphs $H_{z+1},\ldots,H_{z+r^*}$, and the edges $E^i$. Then we have defined an $\alphaCMG(h)$-expander $X_i$ on $h$ vertices with maximum vertex degree $\gammaKRV(h)$, and embedded it with congestion $2$ into $\tilde H_i$, where each vertex of $X_i$ is embedded into a distinct path in $\hset_i$.

Let $H$ be the union of the graphs $\tilde H_i$, for $1\leq i\leq N$ and $\bigcup_{i=1}^{N-1}\hat E^i$, and let $\hset$ be the concatenation of $\hset_1,\hat E_1,\ldots,\hat E_{N-1},\hset_N$. We will sometimes refer to the paths in $\hset$ as red paths. Notice that $H$ is a topological minor of $G$. All vertices in $H$ have degree at most $4$, and, as observed before, a vertex of $H$ may have degree $4$ only if it is incident on exactly two blue edges. As observed before, every vertex of $H$ lies on some red path. Our final graph $H^*$ is obtained from $H$ as follows: for each vertex $v\in V(H)$ that is incident on two blue edges, we choose one of these two blue edges at random. The choices made by different vertices are independent. Each blue edge that has been chosen by at least one vertex is then deleted from the graph. This final graph is denoted by $H^*$. Notice that each edge $e=(u,v)$ may be deleted from $H$ due to the choice made by $u$, or the choice made by $v$; the overall probability that $e$ is not deleted is at least $1/4$. Moreover, if $e$ and $e'$ do not share endpoints, then the events that $e$ is deleted and that $e'$ is deleted are independent.

 It is immediate to see that  $H^*$ is a topological minor of $H$, $|V(H^*)|\leq Nr^*\cdot O(h^4)=O( r h^4)$; the vertices of $A$ are contained in $V(H^*)$, and the maximum vertex degree in $H^*$ is $3$. It now only remains to prove that w.h.p. the vertices of $A$ are $\alpha$-well-linked in $H^*$, for some $\alpha=\Omega(1/\log^7 k)$. We do so in the next two steps, using the following claim.

\begin{claim}\label{claim: A well-linked}
The set $A$ of vertices is $\alphaWL$-well-linked in $H$, where $\alphaWL=\min\set{\half,\frac{N\cdot\alphaKRV(h)}{4\gammaKRV(h)}}=\Omega(1)$.
\end{claim}

\begin{proof}
Let $(Y,Z)$ be any partition of $V(H)$, $A_Y=A\cap Y$, $A_Z=A\cap Z$, and assume that $|A_Y|\leq |A_Z|$. We denote $|A_Y|$ by $\kappa$. The it is enough to show that $|E_H(Y,Z)|\geq \alphaWL \kappa$. We partition the set $A_Y$ of vertices into subsets: $A''_Y$ contains all vertices $v\in A$, such that the unique path $P\in\hset$ on which $v$ lies is contained in $H[Y]$, and $A'_Y$ contains the remaining vertices. We partition $A_Z$ into $A'_Z$ and $A''_Z$ similarly. Assume first that $|A'_Y|\geq \kappa/2$. Then $|E_H(Y,Z)|\geq \kappa/2$, since for every vertex $v\in A'_Y$, the corresponding path $P\in \hset$ contributes at least one edge to $E_H(Y,Z)$. Similarly, if $|A'_Z|\geq \kappa/2$, then $|E_H(Y,Z)|\geq \kappa/2$. From now on we assume that $|A'_Y|,|A'_Z|<\kappa/2$, and so $|A_Z''|\geq |A_Y''|\geq \kappa/2$.

Let $\yset\subseteq \hset$ be the set of all the paths $P$, such that the first vertex of $P$ belongs to $A''_Y$. Define $\zset\subseteq \hset$ similarly for $A''_Z$. 

Fix some $1\leq i\leq N$, and consider the expander $X_i$.
We define two subsets of vertices of $X_i$: $Y^*$ contains all vertices $v$ that are embedded into the sub-paths of $\yset$, and $Z^*$ contains all vertices that are embedded into the sub-paths of $\zset$. Since $X_i$ is an $\alphaKRV(h)$-expander, there are at least $\alphaKRV(h)\cdot |Y^*|\geq  \alphaKRV(h)\cdot \kappa/2$ edge-disjoint paths connecting the vertices of $Y^*$ to the vertices of $Z^*$ in $X_i$. Since the maximum vertex degree in $X_i$ is $\gammaKRV(h)$, there is a collection $\lset_i$ of at least $\frac{\alphaKRV(h)}{\gammaKRV(h)}\cdot \frac{\kappa}{2}$ node-disjoint paths in $X_i$ connecting the vertices of $Y^*$ to the vertices of $Z^*$. We construct a collection $\lset_i'$ of paths, connecting the vertices of $V(\yset)$ to the vertices of $V(\zset)$, such that $\lset_i'\subseteq \tilde H_i$, and each edge of $\tilde H_i$ participates in at most two such paths. For each path $P\in \lset_i$, we build a graph $G_P$ as follows: for each edge $e\in E(P)$, the graph includes the blue path of $\tilde H_i$ into which the edge $e$ is embedded, and, for each vertex $v\in E(P)$, the graph includes the red path $P_v\in \hset_i$ into which $v$ is embedded. It is then easy to see that $G_P$ contains a path $P'\subseteq \tilde H_i$ connecting a vertex on some path $Q\in \yset$ to a vertex on some path $Q'\in \zset$. We let $\lset_i'=\set{P'\mid P\in \lset}$. Since each edge of $\tilde H_i$ belongs to at most one red path and at most one blue path, and the paths of $\lset_i$ are node-disjoint, each edge of $\tilde H_i$ is contained in at most two paths of $\lset_i'$. Let $\lset=\bigcup_{i=1}^N\lset_i'$.  Then $\lset$ contains $\frac{N\cdot\alphaKRV(h)}{\gammaKRV(h)}\cdot\frac{\kappa}{2}$ paths, where each path connects a vertex in $V(\yset)$ to a vertex in $V(\zset)$, and each edge of $H$ belongs to at most two such paths. Each path of $\lset$ connects a vertex of $X$ to a vertex of $Y$, and so $|E_H(X,Y)|\geq 
\frac{N\cdot\alphaKRV(h)}{4\gammaKRV(h)}\cdot \kappa$.
\end{proof}
\fi

\paragraph{Step 2: Partitioning the Red Paths}
In this step, we will define a collection $\Sigma_P$ of disjoint
segments for every path $P\in\hset$.

Consider any such path $P\in \hset$. A sub-path $P'$ of $P$ is called
a \emph{heavy sub-path} iff for some $1\leq i\leq Nr^*$, $P'$ contains
at least $200N^4=O(\log^4h)$ vertices that belong to $H_i$.

If $P$ contains no heavy sub-paths, then $\Sigma_P=\set{P}$. Notice
that $P$ contains at most $Nr^*\cdot O(\log^4h)=O(\log^7h)$ vertices
in this case.  Otherwise we partition $P$ into a collection of
heavy sub-paths via a greedy iterative procedure.
In each iteration, we start with some heavy sub-path $P'$ of $P$, where at the
beginning of the first iteration, $P'=P$.  Let $P''$ be the
minimum-length heavy sub-path of $P'$ containing the first vertex of
$P'$. If $P'\setminus P''$ is a heavy path, then we add $P''$ to
$\Sigma_P$, delete all vertices of $P''$ from $P'$, and continue to
the next iteration. Otherwise, we add $P'$ to $\Sigma_P$ and finish
the algorithm. Notice that in any case, the length of every path added
to $\Sigma_P$ is at most $Nr^*\cdot O(\log^4h)=O(\log^7h)$. Overall,
for each path $P\in \hset$, we obtain a partition of $P$ into disjoint
sub-paths of length at most $O(\log^7h)$ each. Moreover, if
$|\Sigma_P|>1$, then each path in $\Sigma_P$ is a heavy sub-path of
$P$. Let $\Sigma=\bigcup_{P\in \hset}\Sigma_P$.

We obtain a contracted graph $F$ from $H$ by contracting, for each
$\sigma\in \Sigma$, the vertices of $\sigma$ into a single super-node
$v_{\sigma}$. For every vertex $u\in A$, let $g(u)$ be the super-node
$v_{\sigma}$ such that $u\in V(\sigma)$. Notice that for $u\neq u'$,
$g(u)\neq g(u')$. Let $U=\set{g(u)\mid u\in A}$.  Since, from
Claim~\ref{claim: A well-linked}, the vertices of $A$ are
$\alphaWL$-well-linked in $H$, the vertices of $U$ are
$\alphaWL$-well-linked in $F$. Since every vertex of $H$ must belong
to some red path, $V(F)=\set{v_{\sigma}\mid \sigma\in \Sigma}$.


We define a graph $F^*$ from $H^*$, by similarly contracting all segments in $\bigcup_{P\in\hset}\Sigma_P$ into super-nodes. Equivalently, graph $F^*$ is obtained from $F$, by deleting all edges in $E(H)\setminus E(H^*)$. 
We prove the following claim.

\begin{claim}\label{claim: well-linkedness in sampled graph}
Set $U$ is $\alphaWL/32$-well-linked in $F^*$ w.h.p.
\end{claim}

Assume first that the above claim is correct. We claim that $A$ is
$\alpha$-well-linked in $H^*$, for $\alpha=\Omega(1/\log^7h)$. Indeed,
let $(X,Y)$ be any partition of vertices of $H^*$. Let $A_X=A\cap X$,
$A_Y=A\cap Y$, and $E^*=E_{H^*}(X,Y)$. Assume w.l.o.g. that $|A_X|\leq
|A_Y|$. It is enough to prove that $|E^*|\geq \alpha |A_X|$. In order
to prove this, we show that there is a set $\qset'$ of $|A_X|$ paths
in $H^*$ connecting the vertices of $A_X$ to the vertices of $A_Y$
with edge-congestion at most $1/\alpha$.

Let $U_X=\set{g(v)\mid v\in A_X}$ and $U_Y=\set{g(v)\mid v\in
  Y_X}$. Since set $U$ is $\alphaWL/32$-well-linked in $F^*$, there is
a set $\qset$ of $|U_X|=|A_X|$ paths in $F^*$, such that each path
connects a distinct vertex of $U_X$ to a distinct vertex of $U_Y$, and
each edge of $F^*$ participates in at most $32/\alphaWL$ such
paths. We use the paths in $\qset$ in a natural way, in order to
define the set $\qset'$ of paths in graph $H^*$. Let $P\in \qset$ be
any such path. Assume that the endpoints of $P$ are $s$ and $t$, and
let $s'\in A_X$, $t'\in A_Y'$ be such that $g(s')=s$ and
$g(t')=t$. Consider the following sub-graph $H_P$ of $H^*$: start with
all the edges that belong to $P$; for each vertex $v_{\sigma}$ on $P$,
add the path $\sigma$ to $H_P$. It is easy to see that graph $H_P$
contains a path connecting $s'$ to $t'$. Let $P'$ be any such path. We
then set $\qset'=\set{P'\mid P\in \qset}$. Since every vertex
$v_{\sigma}$ of $F$ corresponds to a path $\sigma$ of length
$O(\log^7h)$ in graph $H$, the degree of each such vertex $v_{\sigma}$
is $O(\log^7h)$. Since the paths in $\qset$ cause edge-congestion at
most $32/\alphaWL=O(1)$ in $F^*$, each vertex $v_{\sigma}$ may belong
to $O(\log^7h)$ such paths. Therefore, the paths in $\qset'$ cause
edge-congestion $O(\log^7h)$ in $H^*$, and $|E^*|\geq
\Omega(|A_X|/\log^7h)$. We conclude that $A$ is $\alpha$-well-linked
in $H^*$, for $\alpha=\Omega(1/\log^7h)=\Omega(1/\log^7k)$.

\paragraph{Step 3: Finishing the Proof}
In this step we prove Claim~\ref{claim: well-linkedness in sampled
  graph}.  We will sometimes refer to a subset $S\subseteq V(F)$ of
the vertices of $F$, with $S,V(F)\setminus S\neq \emptyset$, as a
\emph{cut}. The value of the cut $S$ is $|\out(S)|$.  The crucial part
of the proof is the following claim.

\begin{claim}\label{claim: large degree}
  The value of the minimum cut in graph $F$ is at least $N$.
\end{claim}

We prove Claim~\ref{claim: large degree} below, and first complete the
proof of Claim~\ref{claim: well-linkedness in sampled graph} using it.
Let $n'=|V(H)|$. Then $|V(F)|\leq n'\leq 10 h^4\cdot r^*\cdot N$, and
since $N>1536\log(10h^4\cdot r^*\cdot N)\geq 1536\log n'$, the value
of the minimum cut in $F$ is at least $1536\log n'$.  The number of
edges in $F$ is bounded by $m\leq 4n'\leq 40h^4r^*N=O(h^4\log^3h)$.
We use the following theorem of Karger:

\begin{theorem}[Corollary A.6 in \cite{Karger-skeleton}] \label{thm:
    number of almost-minimum cuts} Let $G$ be any $n$-vertex graph,
  and assume that the value of the minimum cut in $G$ is $C$. Then for
  any half-integral $\alpha$, the number of cuts of value at most
  $\alpha C$ in $G$ is bounded by $n^{2\alpha}$.
\end{theorem}

Since in graph $F$, the set $U$ of vertices is $\alphaWL$-well-linked,
it is enough to show that w.h.p., for any subset $S$ of vertices of
$F$, $|\out_{F^*}(S)|\geq |\out_F(S)|/32$. We partition the cuts
$S\subseteq V(F)$ into $\ceil{\log m}$ collections
$\cset_1,\ldots,\cset_{\ceil{\log m}}$, where for each $1\leq i\leq
\ceil{\log m}$, $\cset_i$ contains all cuts $S$ with $2^{i-1}N\leq
|\out_F(S)|< 2^iN$. Consider now some such collection $\cset_i$. From
Theorem~\ref{thm: number of almost-minimum cuts}, $|\cset_i|\leq
(n')^{2^{i+1}}$. Consider some set $S\in \cset_i$. Let $S'\subseteq
V(H)$ be obtained by un-contracting all super-nodes in $S$, that is,
$S'=\bigcup_{v_{\sigma}\in S}V(\sigma)$. Notice that
$\out_H(S')=\out_F(S)$, and $\out_{H^*}(S')=\out_{F^*}(S)$.  Let
$E_1(S)\subseteq \out_H(S')$ contain all red and red-blue edges of
$\out_H(S')$, and let $E_2(S)=\out_H(S')\setminus E_1(S)$. If
$|E_1(S)|\geq |\out_H(S')|/8$, then, since all edges of $E_1(S)$
belong to $F^*$, $|\out_{F^*}(S)|\geq |\out_F(S)|/8$. We assume from
now on that this is not the case, and so $|E_2(S)|\geq
7|\out_F(S)|/8$. Next, we construct a maximal set $E'\subseteq E_2(S)$
of edges, such that the edges in $E'$ do not share endpoints in graph
$H$. This is done by a simple greedy algorithm: while $E_2(S)\neq
\emptyset$, let $e\in E_2(S)$ be any edge. Add $e$ to $E'$, and delete
from $E_2(S)$ edge $e$ and all edges sharing endpoints with $e$ in
graph $H$. Since all edges in $E_2(S)$ are blue, and each vertex may
be incident on at most two blue edges, for every edge added to $E'$,
we delete at most three edges from $E_2(S)$. Therefore, eventually
$|E'|\geq |E_2(S)|/3\geq 7|\out_H(S')|/24\geq
|\out_H(S')|/4=|\out_F(S)|/4$ holds.

Each edge of $E'$ belongs to $\out_{F^*}(S)$ independently with
probability at least $1/4$. The expected number of the edges of $E'$
that belong to $\out_{F^*}(S)$ is therefore at least $|E'|/4\geq
|\out_F(S)|/16\geq N\cdot 2^{i-5}$.

We use the following standard Chernoff bound: let $X_1,\ldots,X_n$ be
independent random variables in $\set{0,1}$, and let
$\mu=\expect{\sum_{i=1}^nX_i}$. Then $\prob{\sum_{i=1}^nX_i<\mu/2}\leq
e^{-\mu/12}$. Therefore, the probability that
$|\out_{F^*}(S)|<|\out_F(S)|/32$ is at most $e^{-N\cdot 2^{i-5}/12}$.
Overall, the probability that for some $S\in \cset_i$,
$|\out_{F^*}(S)|<|\out_F(S)|/32$ is at most:

\[(n')^{2^{i+1}}\cdot e^{-2^{i-5}N/12}<1/(n')^2\]

since $N>1536 \log n'$. Using the union bound over all $1\leq i\leq
\ceil{\log m}$, with probability at least $\frac{\ceil{\log
    m}}{(n')^2}$, for every set $S\subseteq V$, $|\out_{F^*}(S)|\geq
|\out_F(S)|/32$. In particular, set $U$ is $\alphaWL/32$-well-linked
in $F^*$ w.h.p. This concludes the proof of Claim~\ref{claim:
  well-linkedness in sampled graph}. As observed above, this implies
that $A$ is $\alpha$-well-linked in graph $H^*$, thus completing the
proof of Theorem~\ref{theorem: finding a minor in a super-cluster}. It
now only remains to prove Claim~\ref{claim: large degree}.

\proofof{Claim~\ref{claim: large degree}} We prove that the value of
the minimum cut in $F$ is at least $N$.  Assume otherwise. Let $(X,Y)$
be a partition of $V(F)$, such that $X,Y\neq \emptyset$, and
$|E_F(X,Y)|<N$. 
Let $X'\subseteq V(H)$ be obtained from $X$, by un-contracting all
vertices $v_{\sigma}$: that is, $X'=\bigcup_{v_{\sigma}\in
  X}V(\sigma)$.  We construct $Y'$ from $Y$ similarly. Observe that
$(X',Y')$ is a partition of $V(H)$, and $|E_H(X',Y')|<N$.


Assume first that there are two paths $P,P'\in \hset$, such that $P$
is contained in $H[X']$, and $P'$ is contained in $H[Y']$. We claim
that $|E_H(X',Y')|\geq N$ in this case, leading to a
contradiction. Indeed, recall that we have constructed $N$ expanders
$X_1,\ldots,X_N$. For each $1\leq i\leq N$, expander $X_i$ contains
some path $P_i$ connecting a pair $v,v'$ of vertices of $X_i$, where
$v$ is embedded into a sub-path of $P$, and $v'$ is embedded into a
sub-path of $P'$. Using the embedding of $X_i$ into $\tilde H_i$, path
$P_i$ naturally defines a path $P'_i\subseteq \tilde H_i$, connecting
a vertex of $P$ to a vertex of $P'$. It is immediate to see that paths
$\set{P_i}_{i=1}^N$ are completely disjoint, as each such path is
contained in a distinct graph $\tilde H_i$. Therefore, $H$ contains
$N$ edge-disjoint paths connecting the vertices of $P$ to the vertices
of $P'$. Each such path must contribute an edge to $E_H(X',Y')$, and
so $|E_H(X',Y')|\geq N$, a contradiction.

Therefore, for one of the vertex sets $X',Y'$, no path $P\in \hset$ is
contained in the sub-graph of $H$ induced by that set. We assume
without loss of generality that this set is $X'$.

Let $\rset$ be the set of paths, obtained from $\hset$, by deleting
the edges $E_H(X',Y') \cap E(\hset)$. Each path in $\rset$ is contained
in either $H[X']$ or $H[Y']$, and we let $\rset'\subseteq \rset$ be
the set of paths contained in $H[X']$.
We claim that $|\rset'|<N$: Indeed, since no path of $\hset$ is
contained in $H[X']$, every path $P\in \rset'$ contributes at least
one edge to $E_H(X',Y')$. Consider now any path $P'\in \rset'$, and
let $P\in \hset$ be the path such that $P'$ is a sub-path of $P$.
Since no path of $\hset$ is contained in $H[X']$, $P$ in particular is
not contained in $H[X']$ while $P'$, a sub-path of $P$, is contained
in $H[X']$. This implies that there is a vertex $v_\sigma$,
corresponding to a heavy segment of $P$,  such that $v_\sigma \in X$.
Therefore, there is some index $1\leq i\leq Nr^*$, such that
the path $\sigma\cap H_i$ contains at least $200N^4$ vertices. We fix
any such index $i^*$.

We define a new set $\rset^*$ of paths as follows: for each path $P\in
\rset'$, we add the path $P\cap H_{i^*}$ to $\rset^*$, if $P\cap
H_{i^*}\neq \emptyset$. From the above discussion, $|\rset^*|< N$, and
$|V(\rset^*)|\geq 200N^4$.

 Recall that $\rset_{i^*}$ is the set of the red paths in $H_{i^*}$,
 connecting $A_{i^*}$ to $B_{i^*}$, and $\bset_{i^*}$ is the set of
 the blue paths in $H_{i^*}$, connecting $A_{i^*}'$ to $A_{i^*}''$
 that we have constructed during the first step of the
 algorithm. Clearly, each path in $\rset^*$ is contained in some path
 in $\rset_{i^*}$.

 Let $\bset$ be the set of paths, obtained from $\bset_{i^*}$, by
 deleting the edges $E_H(X',Y') \cap E(\bset_{i^*})$.  Let
 $\bset^*\subseteq \bset$ be the set of paths contained in $H[X']$.
 We claim that $|\bset^*|\leq 2N$. Indeed, recall that the paths of
 $\bset_{i^*}$ originate and terminate at the vertices of $A_{i^*}$.
 Since $|\rset'|< N$, at most $N$ such vertices $a\in A_{i^*}$ belong
 to $X'$. Therefore, at most $N$ paths of $\bset_{i^*}$ may be
 contained in $H[X']$. Every other path of $\bset^*$ must contribute
 one edge to $E_H(X',Y')$, and so in total $|\bset^*|\leq
 2N$. 

 Observe that for every vertex $v\in V(\rset^*)$, if $v$ belongs to
 any blue path, then it belongs to a path in $\bset^*$. Similarly, for
 $v\in V(\bset^*)$, if $v$ belongs to any red path, then it belongs to
 a path in $\rset^*$.

 We will view the paths in $\rset_{i^*}$ as directed from $A_{i^*}$ to
 $B_{i^*}$, and we will view the paths in $\bset_{i^*}$ as directed
 from $A_{i^*}'$ to $A_{i^*}''$. Let $S_1$ be the set of all vertices
 $v$, such that $v$ is the first vertex on some path in $\rset^*$, and
 let $T_1$ be the set of all vertices $v$, such that $v$ is the last
 vertex on some path in $\rset^*$. We define the sets $S_2,T_2$ of
 vertices for $\bset^*$ similarly. Let $J=J(\rset^*\cup\bset^*)$. Then
 $|S_1|=|T_1|\leq N$, and $|S_2|=|T_2|\leq 2N$, while $|V(J)|\geq
 200N^4$. Since every vertex in $V(H_{i^*})\setminus (A_{i^*}\cup
 B_{i^*})$ has degree at least $3$ in $H_{i^*}$, $\tau(J)\geq
 200N^4-|S_1\cup S_2\cup T_1\cup T_2|\geq 200N^4-6N$.

 From Theorem~\ref{thm: 2-flow-main}, there are sets $\rset'$ and
 $\bset'$ of paths, routing $(S_1,T_1)$ and $(S_2,T_2)$, respectively,
 in $J$, such that, if $J'=J(\rset'\cup \bset')$, then $\tau(J')\leq
 8(2N)^4+16N<200N^4-6N$. Since every vertex in $J\setminus (S_1\cup
 S_2\cup T_1\cup T_2)$ has degree more than $2$ in $J$, this means
 that there is some edge $e\in E(J)$, such that $(S_1,T_1)$ and
 $(S_2,T_2)$ are still routable in $J\setminus\set{e}$, via the sets
 $\rset'$ and $\bset'$ of paths.

 We will now show that both $(A_{i^*},B_{i^*})$, and
 $(A_{i^*}',A_{i^*}'')$ remain routable in $H_{i^*}\setminus\set{e}$,
 contradicting the minimality of $H_{i^*}$. We show this for
 $(A_{i^*},B_{i^*})$; the proof for $(A_{i^*}',A_{i^*}'')$ is similar.

 We start with a directed graph containing the original set
 $\rset_{i^*}$ of paths routing $(A_{i^*},B_{i^*})$, where the edges
 on these paths are oriented from $A_{i^*}$ towards $B_{i^*}$. We then
 delete from this graph all edges whose both endpoints are contained
 in $J$.  Notice that the edge $e$ does not belong to the new
 graph. Also, for each vertex $v$:

\begin{itemize}

\item If $v\in A_{i^*}\setminus S_1$, then there is one edge leaving
  $v$ and no edges entering $v$;
\item If $v\in S_1\cap A_{i^*}$, then there is no edge entering or leaving $v$;
\item If $v\in S_1\setminus A_{i^*}$, then there is one edge entering $v$, and no edges leaving $v$;
\item if $v\in B_{i^*}\setminus T_1$, then there is one edge entering $v$ and no edges leaving $v$;
\item  If $v\in T_1\cap B_{i^*}$, then there is no edge entering or leaving $v$;
\item If $v\in T_1\setminus A_{i^*}$, then there is one edge leaving $v$, and no edges entering $v$;
\item for all other vertices $v$, either there is one edge entering and one edge leaving $v$, or there is no edge incident on $v$.
\end{itemize}

Finally, we add all edges lying on the paths in $\rset'$ to the
resulting graph. In this final graph, every vertex in $A_{i^*}$ has
one outgoing and no incoming edges, and every vertex in $B_{i^*}$ has
one incoming and no outgoing edges. Every other vertex either has
exactly one incoming and one outgoing edge, or it has no edges
incident on it. It is then easy to see that $(A_{i^*},B_{i^*})$ is
routable in this graph. Since this final graph is contained in
$H_{i^*}\setminus\set{e}$, this contradicts the minimality of
$H_{i^*}$.